\documentclass[a4paper,10pt]{article}

\usepackage[utf8]{inputenc}
\usepackage{xparse}
\usepackage{amssymb}
\usepackage{amsmath}
\usepackage{amsthm}
\usepackage{mathtools}
\usepackage{bbm}
\usepackage{todonotes}
\usepackage{enumerate}
\usepackage[labelformat=simple]{subcaption}

\usepackage[left=30mm,right=30mm,top=25mm,bottom=25mm]{geometry} 
\usepackage{pi}
\usepackage{authblk}
\usepackage{xspace}
\usepackage{hyperref}
\hypersetup{
    colorlinks,
    citecolor=black,
    filecolor=black,
    linkcolor=black,
    urlcolor=black
}
\usepackage{mathabx} 

\DeclareDocumentCommand\R{}{\mathbb{R}}
\DeclareDocumentCommand\Z{}{\mathbb{Z}}
\DeclareDocumentCommand\Q{}{\mathbb{Q}}
\DeclareDocumentCommand\setdef{mo}{\left\{ #1 \IfValueTF{#2}{: #2}{} \right\}}
\DeclareDocumentCommand\zerovec{o}{\IfNoValueTF{#1}{\mathbb{O}}{\mathbb{O}_{#1}}}
\DeclareDocumentCommand\unitvec{m}{\mathbbm{e}_{#1}}
\DeclareDocumentCommand\dcup{}{\dot{\cup}}
\DeclareDocumentCommand\conv{o}{\operatorname{conv}\IfValueTF{#1}{\left(#1\right)}{}}
\DeclareDocumentCommand\cplxNP{}{\mathsf{NP}}
\DeclareDocumentCommand\scalprod{mm}{\left<#1,#2\right>}
\DeclareDocumentCommand\bincoeff{mm}{\genfrac{(}{)}{0pt}{}{#1}{#2}}
\DeclareDocumentCommand\transpose{m}{#1^{\intercal}}
\DeclareDocumentCommand\onevec{o}{\IfNoValueTF{#1}{\mathbbm{1}}{\mathbbm{1}_{#1}}}

\DeclareDocumentCommand\PmatchOne{}{P_{\text{match}}^{\text{1Q}}}
\DeclareDocumentCommand\PmatchOneUp{}{P_{\text{match}}^{\text{1Q}\uparrow}}
\DeclareDocumentCommand\PmatchOneDown{}{P_{\text{match}}^{\text{1Q}\downarrow}}
\DeclareDocumentCommand\Vspecial{}{V^*}

\DeclareDocumentCommand\facetsDown{}{\mathcal{S}^{\downarrow}}
\DeclareDocumentCommand\inequalitiesDown{}{\facetsDown_{\text{ext}}}
\DeclareDocumentCommand\facetsUp{}{\mathcal{S}^{\uparrow}}
\DeclareDocumentCommand\inequalitiesUp{}{\facetsUp_{\text{ext}}}

\newtheorem{theorem}{Theorem}[section]
\newtheorem{corollary}[theorem]{Corollary}
\newtheorem{proposition}[theorem]{Proposition}
\newtheorem{lemma}[theorem]{Lemma}
\newtheorem{claim}[theorem]{Claim}

\setlength{\parindent}{0mm}

\title{Complete Description of Matching Polytopes with One Linearized Quadratic Term for Bipartite Graphs}
\author[1]{Matthias Walter}
\affil[1]{RWTH Aachen University, walter@or.rwth-aachen.de}

\DeclareDocumentCommand\reviewComment{omm}{%
  \IfValueTF{#1}{%
    \todo[backgroundcolor=black!20!white,#1]{Referee: \textcolor{red}{\emph{#2}}; Answer: \textcolor{blue}{#3}}%
  }{%
    \todo[backgroundcolor=black!20!white]{Referee: \textcolor{red}{\emph{#2}}; Answer: \textcolor{blue}{#3}}%
  }
}
\DeclareDocumentCommand\reviewFix{m}{\textcolor{blue}{#1}\xspace}

\begin{document}

\maketitle

\begin{abstract}
  We consider, for complete bipartite graphs, the convex hulls of characteristic vectors of all matchings,
  extended by a binary entry indicating whether the matching contains two specific edges. 
  These polytopes are associated to the quadratic matching problems with a single linearized quadratic term.
  We provide a complete irredundant inequality description, which settles
  a conjecture by Klein (Ph.D.\ thesis, TU Dortmund, 2015).
  In addition, we also derive facetness and separation results for the polytopes.
  The completeness proof is based on a geometric relationship to a matching polytope
  of a nonbipartite graph.
  Using standard techniques, we finally extend the result to capacitated $b$-matchings.
\end{abstract}

\section{Introduction}
\label{SectionIntroduction}

Let $K_{m,n} = (V, E)$ be the complete bipartite graph with the node partition $V = U \dcup W$, $|U| = m$ and $|W| = n$ for $m,n \geq 2$.
The \emph{maximum weight matching problem} is to
maximize the sum $c(M) \coloneqq \sum_{e \in M} c_e$ over all matchings $M$
(i.e., $M \subseteq E$ and no two edges of $M$ share a node)
in $K_{m,n}$ for given edge weights $c \in \Q^E$.
Note that we generally abbreviate $\sum_{j \in J} v_j$ as $v(J$) for vectors $v$ and subsets $J$ of their index sets.

Following the usual approach in polyhedral combinatorics, we identify the matchings $M$ with their
\emph{characteristic vectors} $\chi(M) \in \setdef{0,1}^E$, which satisfy $\chi(M)_e = 1$ if and only if $e \in M$.
The maximum weight matching problem is then equivalent to the problem of maximizing the linear objective $c$
over the \emph{matching polytope}, i.e., the convex hull of all characteristic vectors of matchings.
In order to use linear programming techniques,
one requires a description of that polytope in terms of linear inequalities.
Such a description is well-known~\cite{Birkhoff46} and consists of the constraints
\begin{alignat}{6}
  x_e           & \geq 0        &\qquad& \text{for all } e \in E         \label{ConstraintMatchingNonnegative} \\
  x(\delta(v))  & \leq 1        &\qquad& \text{for all } v \in U \dcup W, \label{ConstraintMatchingDegree}
\end{alignat}
where $\delta(v)$ denotes the set of edges incident to $v$.
For general (nonbipartite) graphs, Edmonds~\cite{Edmonds65b,Edmonds65a} proved
that adding the following \emph{Blossom Inequalities} is sufficient to describe the matching polytope:
\begin{alignat*}{6}
  x(E[S])       & \leq \frac{1}{2}(|S|-1)       &\qquad& \text{for all } S \subseteq V \text{, $|S|$ odd,}
\end{alignat*}
where $E[S] \coloneqq \setdef{ \setdef{u,v} \in E }[ u,v \in S ]$.
His result is based on a primal-dual optimization algorithm, which also
proved that the weighted matching problem can be solved in polynomial time.
Later, Schrijver~\cite{Schrijver83} gave a direct (and more geometric) proof of the polyhedral result.
Note that one also often considers the special case of \emph{perfect} matchings,
which are those matchings covering every node of the graph.
The associated \emph{perfect matching polytope} is the face of the matching polytope
obtained by requiring that all Inequalities~\eqref{ConstraintMatchingDegree} are satisfied with equality:
\begin{alignat}{6}
  x(\delta(v))  & = 1        &\qquad& \text{for all } v \in U \dcup W.      \label{ConstraintPerfectMatchingDegree}
\end{alignat}
For more background on matchings and the matching polytopes we refer to Parts~II and~III of Schrijver's book~\cite{Schrijver03}.
For a basic introduction on polytopes and linear programming we recommend to read~\cite{Schrijver86}.

In this paper, we consider the more general \emph{quadratic matching problem} for which we have, in addition to $c$,
a set $\mathcal{Q} \subseteq \bincoeff{E}{2}$ and weights $p \colon \mathcal{Q} \to \Q$
for the edge-pairs in $\mathcal{Q}$.
The objective is now to maximize $c(M) + \sum_{q \in \mathcal{Q}, q \subseteq M} p_q$, again over all matchings $M$.
Before we discuss the case $|\mathcal{Q}| = 1$ in detail, we focus on the more general case.
By requiring the matchings to be perfect, we obtain as a special case the \emph{quadratic assignment problem},
a problem that is not just $\cplxNP$-hard~\cite{SahniG76}, but also hard to solve in practice (see~\cite{LoiolaABHQ07} for a survey).

\medskip

A common strategy is then to linearize this quadratic objective function by introducing
additional variables $y_{e,f} = x_e \cdot x_f $ for all $\setdef{e,f} \in \mathcal{Q}$.
Usually, the straight-forward linearization of this product equation is very weak,
and one seeks to find (strong) inequalities that are valid for the associated polytope.
There were several polyhedral studies, in particular for the quadratic assignment problem,
e.g., by Padberg and Rijal~\cite{PadbergR96} and Jünger and Kaibel~\cite{JuengerK96,Kaibel97,Kaibel98}.

One way of finding such inequalities, recently suggested by Buchheim and Klein~\cite{BuchheimK13}, is the so-called \emph{one term linearization technique}.
The idea is to consider the special case of $|\mathcal{Q}| = 1$ in which the optimization problem is still polynomially solvable.
By the polynomial-time equivalence of separation and optimization~\cite{GroetschelLS81,KarpP80,PadbergR81},
one can thus hope to characterize all (irredundant) valid inequalities and develop separation algorithms.
These inequalities remain valid when more than one monomial is present, and hence
one can use the results of this special case in the more general setting.
Buchheim and Klein suggested this for the quadratic spanning-tree problem
and conjectured a complete description of the associated polytope.
This conjecture was later confirmed by Fischer and Fischer~\cite{FischerF13}
and Buchheim and Klein~\cite{BuchheimK14}. 
Fischer et al.~\cite{FischerFM18} recently generalized this result to matroids and multiple monomials,
which must be nested in a certain way.
In her dissertation~\cite{Klein14}, Klein considered several other combinatorial polytopes, in particular the quadratic assignment polytope.
Hupp et al.~\cite{HuppKL15} generalized these results, in particular proofs for certain inequality
classes to be facet-defining, to nonbipartite matchings.
They carried out a computational study on the practical strength of this approach, using these inequalities during branch-and-cut.

\medskip

The main goal of this paper is to prove that the description for bipartite graphs conjectured by Klein~\cite{Klein14} is indeed complete.
Moreover, we extend the theoretical work of Klein to non-perfect matchings.
Our setup is as follows:
Consider two disjoint edges $e_1 = \setdef{u_1,w_1}$ and $e_2 = \setdef{u_2,w_2}$ (with $u_i \in U$ and $w_i \in W$ for $i=1,2$) in $K_{m,n}$
and denote by $\Vspecial \coloneqq \setdef{u_1,u_2,w_1,w_2}$ the union of their node sets.
Our polytopes of interest are
the convex hulls of all vectors $(\chi(M),y)$ for which $M$ is a matching in $K_{m,n}$,
$y \in \setdef{0,1}$ and one of the relationships between $M$ and $y$ holds:
\begin{itemize}
\item
  $\PmatchOneDown \coloneqq \PmatchOneDown(K_{m,n},e_1,e_2)$: $y = 1$ implies $e_1,e_2 \in M$.
\item
  $\PmatchOneUp \coloneqq \PmatchOneUp(K_{m,n},e_1,e_2)$: $y = 0$ implies $e_1 \notin M$ or $e_2 \notin M$.
\item
  $\PmatchOne \coloneqq \PmatchOne(K_{m,n},e_1,e_2)$: $y = 1$ if and only if $e_1,e_2 \in M$.
\end{itemize}
Note that $\PmatchOneDown$ (resp.\ $\PmatchOneUp$) is the \emph{downward} (resp.\ \emph{upward})
monotonization of $\PmatchOne$ with respect to the $y$-variable,
and that
\begin{gather*}
  \PmatchOne = \conv(\PmatchOneDown \cap \PmatchOneUp \cap (\Z^E \times \Z)).
\end{gather*}
Clearly, Constraints~\eqref{ConstraintMatchingNonnegative} and~\eqref{ConstraintMatchingDegree} as well as
the bound constraints
\begin{alignat}{6}
  0 \leq y      & \leq 1                                                 \label{ConstraintMatchingBound}
\end{alignat}
are valid for all three polytopes.
Additionally, the two inequalities 
\begin{alignat}{6}
  y             & \leq x_{e_i}  &\qquad& i=1,2                          \label{ConstraintMatchingQuadraticGood}
\end{alignat}
are also valid for $\PmatchOne$ and $\PmatchOneDown$ (and belong to the standard linearization of $y = x_{e_1} \cdot x_{e_2}$).
Klein~\cite{Klein14} introduced two more inequality classes, and proved them to be facet-defining (see Theorems~6.2.2 and~6.2.3 in~\cite{Klein14}).
They are indexed by subsets $\facetsDown$ and $\facetsUp$ of nodes (see Figure~\ref{FigureSupportGraphs}), defined via
\begin{align*}
  \facetsDown  &\coloneqq \{ S \subseteq U \dcup W : \text{$|S|$ odd and either } \\
              &~~\quad S \cap \Vspecial = \setdef{u_1,u_2} \text{ and } |S \cap U| = |S \cap W| + 1 \text{ or } \\
              &~~\quad S \cap \Vspecial = \setdef{w_1,w_2} \text{ and } |S \cap W| = |S \cap U| + 1 \} \text{, and } \\
  \facetsUp &\coloneqq \{ S \subseteq U \dcup W : \text{$|S \cap U| = |S \cap W|$ and either $S \cap \Vspecial = \setdef{u_1,w_2}$ or $S \cap \Vspecial = \setdef{u_2,w_1}$} \},
\end{align*}
and read
\begin{alignat}{6}
  x(E[S]) + y                         &\leq \frac{1}{2}(|S|-1) &\qquad& \text{for all } S \in \facetsDown \text{ and } \label{ConstraintMatchingQuadraticDown} \\
  x(E[S]) + x_{e_1} + x_{e_2} - y     &\leq \frac{1}{2}|S|   &\qquad& \text{for all } S \in \facetsUp. \label{ConstraintMatchingQuadraticUp}
\end{alignat}

\begin{figure}[htpb]
  \begin{center}
    \begin{subfigure}[b]{0.45\linewidth}
      \begin{center}
        \begin{tikzpicture}
          \piInput{tikz-quadratic-bipartite-matching-graphs}[%
            down support graph
          ]
        \end{tikzpicture}
        \vspace{-2em}
      \end{center}
      \caption{A set $S \in \facetsDown$ indexing Inequality~\eqref{ConstraintMatchingQuadraticDown}.}
      \label{FigureSupportGraphsDown}
    \end{subfigure}
    \hspace{1mm}
    \begin{subfigure}[b]{0.45\linewidth}
      \begin{center}
        \begin{tikzpicture}
          \piInput{tikz-quadratic-bipartite-matching-graphs}[%
            up support graph
          ]
        \end{tikzpicture}
        \vspace{-2em}
      \end{center}
      \caption{A set $S \in \facetsUp$ indexing Inequality~\eqref{ConstraintMatchingQuadraticUp}.}
      \label{FigureSupportGraphsUp}
    \end{subfigure}
  \end{center}
  \vspace{-1em}
  \caption{Node sets indexing additional facets.}
  \label{FigureSupportGraphs}
\end{figure}
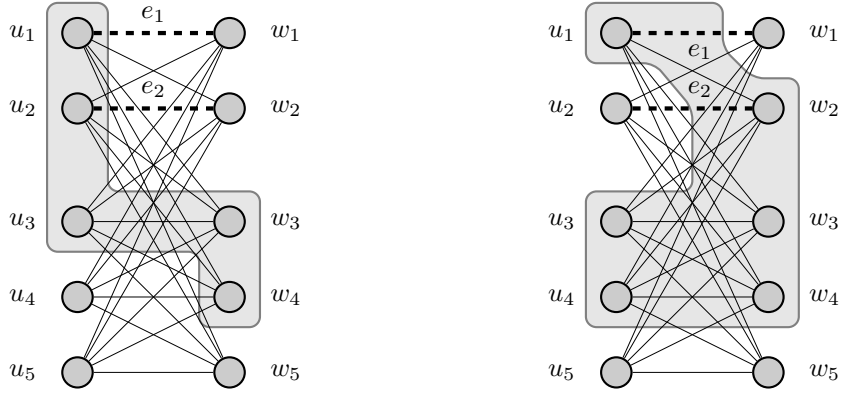

Klein~\cite{Klein14} even conjectured, that Constraints~\eqref{ConstraintMatchingNonnegative} and \eqref{ConstraintPerfectMatchingDegree}--\eqref{ConstraintMatchingQuadraticUp}
completely describe the mentioned face of $\PmatchOne$.
We will confirm this conjecture in Corollary~\ref{TheoremPerfectComplete}.

In contrast to the two proofs for the one-quadratic-term spanning-tree polytopes~\cite{FischerF13,BuchheimK14},
our proof technique is not based on linear programming duality.
In fact, the two additional inequality families presented above
introduce two sets of dual multipliers, which seem to make
this proof strategy hard, or at least quite technical.
Instead, we were heavily inspired by Schrijver's direct proof~\cite{Schrijver83} for the matching polytope.

\medskip

\paragraph{Outline.}
The paper is structured as follows:
In Section~\ref{SectionResults} we present our main results together with their proofs, which are based on two key lemmas,
one for $\PmatchOneDown$ and one for $\PmatchOneUp$.
At the end of the section, we establish the corresponding results for the special case of perfect matchings.
The proofs for the two key lemmas are similar with respect to the general strategy,
but are still quite different due to the specific constructions they depend on.
Hence, we present the general technique and then each lemma in its own dedicated section.
Although Klein already proved that the new inequalities are facet-defining, she only did so
for the case of perfect matchings. 
Hence, for the sake of completeness, we do the same for the general case in Section~\ref{SectionFacets}.
The algorithmic parts are covered in Section~\ref{SectionSeparation} where we present separation algorithms for the two
classes of exponentially many facets.
The polyhedral result on the matching polytope is used as a black-box result in Section~\ref{SectionBMatchings}
in order to prove a generalization for capacitated $b$-matchings (which are defined in that section).
We conclude this paper with a short discussion on our proof strategy and on a property of $\PmatchOne$.

\section{Main results}
\label{SectionResults}

We will prove our result using two key lemmas, each of which
is proved within its own section.

\begin{lemma}
  \label{TheoremDownCombination}
  Let $(\hat{x}, \hat{y}) \in \Q^E \times \Q$ satisfy
  Constraints~\eqref{ConstraintMatchingNonnegative},
  \eqref{ConstraintMatchingDegree},
  \eqref{ConstraintMatchingBound},
  \eqref{ConstraintMatchingQuadraticGood}
  and~\eqref{ConstraintMatchingQuadraticDown}.
  Let furthermore $(\hat{x}, \hat{y})$ satisfy
  at least one of the Inequalities~\eqref{ConstraintMatchingQuadraticGood} for $i^* \in \setdef{1,2}$
  or~\eqref{ConstraintMatchingQuadraticDown} for a set $S^* \in \facetsDown$ with equality.
  Then $(\hat{x}, \hat{y})$ is a convex combination of vertices of $\PmatchOne$.
\end{lemma}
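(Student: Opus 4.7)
My plan is to reduce each of the two tightness scenarios to a decomposition coming from Edmonds's matching polytope theorem applied to a carefully chosen auxiliary graph---bipartite in Case~1, nonbipartite in Case~2---and then translate the resulting auxiliary decomposition back to a convex combination of vertices of $\PmatchOne$. This reflects the geometric spirit of Schrijver's proof referenced in the introduction, and treats the inequalities of $\facetsDown$ as disguised blossom inequalities for the nonbipartite auxiliary graph.

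\emph{Case 1: $\hat y = \hat x_{e_{i^*}}$ for some $i^* \in \{1,2\}$.} Assume WLOG $i^* = 1$, so $\hat y = \hat x_{e_1}$, and note that $\hat y \le \hat x_{e_2}$ by \eqref{ConstraintMatchingQuadraticGood} for $i=2$. It suffices to produce a decomposition $\hat x = \sum_j \lambda_j \chi(M_j)$ into bipartite matchings with the property that $e_1 \in M_j$ implies $e_2 \in M_j$; for then setting $y_j := 1$ if $e_1 \in M_j$ and $y_j := 0$ otherwise produces vertices $(\chi(M_j), y_j)$ of $\PmatchOne$ whose aggregate is $(\hat x, \hat y)$. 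Starting from an arbitrary decomposition of $\hat x$ (which exists by the bipartite matching polytope description), I would iteratively exchange edges along an alternating path between a ``deficit'' matching (containing $e_1$ but not $e_2$) and a ``surplus'' matching (containing $e_2$ but not $e_1$); the inequality $\hat x_{e_1} \le \hat x_{e_2}$ guarantees enough surplus matchings to pair with each deficit one. Alternatively, one could pass to an auxiliary contracted bipartite graph in which $e_1$ and $e_2$ are forced to appear together and apply the bipartite matching polytope theorem there directly.

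\emph{Case 2: \eqref{ConstraintMatchingQuadraticDown} is tight for some $S^* \in \facetsDown$.} Assume WLOG $S^* \cap \Vspecial = \{u_1, u_2\}$. The key construction is a nonbipartite auxiliary graph $\bar K$ obtained from $K_{m,n}$ by adjoining an edge $e^*$ between $u_1$ and $u_2$ (possibly together with a small companion gadget at $w_1, w_2$), paired with the extended vector $\bar x$ satisfying $\bar x_{e^*} := \hat y$. This construction is designed so that (i) $|S^*|$ is odd and $e^* \in E_{\bar K}[S^*]$, making the Edmonds blossom inequality for $S^*$ in $\bar K$ coincide with \eqref{ConstraintMatchingQuadraticDown} and hence be tight at $\bar x$, and (ii) matchings of $\bar K$ correspond bijectively to vertices of $\PmatchOne$, with $e^* \in \bar M$ corresponding to a matching $M \supseteq \{e_1,e_2\}$ of $K_{m,n}$ and $y=1$, and $e^* \notin \bar M$ corresponding to $(M,0)$. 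If $\bar x$ can be shown to lie in the matching polytope of $\bar K$, then Edmonds's theorem yields $\bar x = \sum_j \lambda_j \chi(\bar M_j)$, which under the bijection becomes the desired decomposition into vertices of $\PmatchOne$.

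\emph{Main obstacle.} The critical and highly nontrivial step is verifying, in Case~2, that $\bar x$ satisfies all the inequalities in Edmonds's description of the matching polytope of $\bar K$. The degree constraint at $u_i$ in $\bar K$ reads $\hat x(\delta(u_i)) + \hat y \le 1$---strictly stronger than \eqref{ConstraintMatchingDegree}---and new odd-set blossom inequalities appear in $\bar K$ for sets intersecting $\{u_1,u_2\}$ in exactly one or two nodes that are \emph{not} among \eqref{ConstraintMatchingQuadraticDown}. The parity restriction $|S^* \cap A| = |S^* \cap B|+1$ baked into the definition of $\facetsDown$, together with the tightness of \eqref{ConstraintMatchingQuadraticDown} at $S^*$, should be exactly what is needed to verify these, most likely via a Schrijver-style uncrossing argument in which any potentially violated odd set of $\bar K$ is uncrossed against $S^*$ and a contradiction is derived from the submodularity of the right-hand side $\tfrac{1}{2}(|S|-1)$. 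The precise form of the companion gadget at $w_1,w_2$ is a design choice that must ensure the back-translation of matchings containing $e^*$ into matchings of $K_{m,n}$ containing $e_1,e_2$ is always well-defined, and is probably the most delicate piece of the construction.
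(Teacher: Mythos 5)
Your Case~1 plan has a genuine gap: the decomposition you aim for (every matching containing $e_1$ also contains $e_2$, with $y_j:=1$ exactly for those) is \emph{equivalent} to $(\hat{x},\hat{x}_{e_1})\in\PmatchOne$, so it cannot follow from the degree constraints and $\hat{x}_{e_1}\le\hat{x}_{e_2}$ alone --- yet those are the only facts your exchange argument invokes. Concretely, take $w\in W\setminus\setdef{w_1,w_2}$ and set $\hat{x}_{e_1}=\hat{x}_{e_2}=\hat{x}_{\setdef{u_1,w}}=\hat{x}_{\setdef{u_2,w}}=\tfrac12$, all other entries $0$, $\hat{y}=\tfrac12$. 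This satisfies~\eqref{ConstraintMatchingNonnegative}, \eqref{ConstraintMatchingDegree}, \eqref{ConstraintMatchingBound}, \eqref{ConstraintMatchingQuadraticGood} with equality, but violates~\eqref{ConstraintMatchingQuadraticDown} for $S=\setdef{u_1,u_2,w}\in\facetsDown$, so no decomposition of your kind exists; and indeed your alternating-path exchange stalls, because in the natural decomposition $\tfrac12\chi(\setdef{e_1,\setdef{u_2,w}})+\tfrac12\chi(\setdef{e_2,\setdef{u_1,w}})$ the edges $e_1$ and $e_2$ lie in a \emph{single} component of the symmetric difference, and swapping it merely exchanges the roles of the two matchings. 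So the inequalities~\eqref{ConstraintMatchingQuadraticDown} must enter the Case~1 argument; in the paper they do, because both tightness cases are handled by one and the same lifting, whose feasibility proof (Claim~\ref{TheoremDownInMatchingPolytope}, via Proposition~\ref{TheoremDownInequalities}) uses~\eqref{ConstraintMatchingQuadraticDown} over an enlarged index family $\inequalitiesDown$, while the tight inequality is used only at the very end (Claims~\ref{TheoremDownTightGood} and~\ref{TheoremDownTightDown}) to show the matchings paired with $y=0$ contain at most one of $e_1,e_2$.

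In Case~2 your construction is underdetermined exactly where it matters, and as stated it fails. If you add only $e^*=\setdef{u_1,u_2}$ with $\bar{x}_{e^*}:=\hat{y}$ and leave $\bar{x}_{e_1},\bar{x}_{e_2}$ untouched, the degree constraint at $u_i$ is violated already by the integral vertex $(\chi(\setdef{e_1,e_2}),1)$, so no uncrossing against the tight set $S^*$ can rescue membership in the auxiliary matching polytope; the fix is to set $\bar{x}_{e_i}:=\hat{x}_{e_i}-\hat{y}$ and to add \emph{both} edges $\setdef{u_1,u_2}$ and $\setdef{w_1,w_2}$ with value $\hat{y}$, which restores all four degree constraints (this is the paper's construction in Section~\ref{SectionDown}). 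Moreover, with a single new edge there is no bijection of the kind you claim: a matching of $\bar{K}$ containing $e^*$ may match $w_1$ or $w_2$ elsewhere, so it cannot be reinterpreted as a matching of $K_{m,n}$ containing $e_1,e_2$. With both new edges one still needs a nontrivial exchange/minimality argument (Claim~\ref{TheoremDownMinimalCombination}, using bipartite parity) to arrange that they occur together in every matching of the decomposition, after which taking the symmetric difference with the $4$-cycle $\setdef{e_1,e_2,\setdef{u_1,u_2},\setdef{w_1,w_2}}$ produces the $y=1$ vertices. Finally, note that feasibility of the lifted point does not hinge on the tightness of~\eqref{ConstraintMatchingQuadraticDown} at $S^*$ at all --- it follows from the validity of~\eqref{ConstraintMatchingQuadraticDown} for all of $\inequalitiesDown$, which Proposition~\ref{TheoremDownInequalities} derives from~\eqref{ConstraintMatchingNonnegative}--\eqref{ConstraintMatchingQuadraticGood} --- so your proposed uncrossing against $S^*$ is aimed at the wrong target.
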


Lemma~\ref{TheoremDownCombination} will be proved in Section~\ref{SectionDown}.

\begin{lemma}
  \label{TheoremUpCombination}
  Let $(\hat{x}, \hat{y}) \in \Q^E \times \Q$ satisfy
  Constraints~\eqref{ConstraintMatchingNonnegative}, \eqref{ConstraintMatchingDegree}, \eqref{ConstraintMatchingBound}, and Inequalities~\eqref{ConstraintMatchingQuadraticUp} for all $S \in \facetsUp$.
  Let furthermore $(\hat{x}, \hat{y})$ satisfy at least one of the Inequalities~\eqref{ConstraintMatchingQuadraticUp} for a set $S^* \in \facetsUp$ with equality.
  Then $(\hat{x}, \hat{y})$ is a convex combination of vertices of $\PmatchOne$.
\end{lemma}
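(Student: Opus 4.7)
My plan is to mirror the strategy of Lemma~\ref{TheoremDownCombination}, which in turn follows Schrijver's direct proof of Edmonds' matching polytope theorem: reduce the decomposition question to Edmonds' theorem applied to the matching polytope of a suitable \emph{nonbipartite} graph $G^*$ derived from $K_{m,n}$, and then lift the resulting integral decomposition back to vertices of $\PmatchOne$. The tight inequality~\eqref{ConstraintMatchingQuadraticUp} for $S^*$ plays the same role that a tight blossom inequality plays in Schrijver's proof.

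\textbf{Construction of $G^*$.} Without loss of generality assume $S^* \cap \Vspecial = \setdef{u_1, w_2}$ (the other case being symmetric). The natural first attempt is to identify the two ``outside'' endpoints $w_1$ and $u_2$ of $e_1, e_2$ into a single new node $v^*$. In the resulting nonbipartite graph $G^*$, the images of $e_1$ and $e_2$ become edges joining $v^*$ to $S^*$, and $\widetilde{S} := S^* \cup \setdef{v^*}$ has odd size $|S^*|+1$; the Edmonds blossom inequality for $\widetilde{S}$ in $G^*$ then reads exactly as~\eqref{ConstraintMatchingQuadraticUp} with $y = 0$. To also encode the variable $y$, this construction should be augmented by a small gadget (for instance a dedicated ``$y$-edge'' or extra auxiliary node) so that a matching of $G^*$ that uses the $y$-edge corresponds, under the lifting, to a matching of $K_{m,n}$ that contains both $e_1$ and $e_2$.

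\textbf{Decomposition and lifting.} Define a projection $\bar{x} \in \R^{E(G^*)}$ of $(\hat{x}, \hat{y})$ under this gadgeted identification. Once $\bar{x}$ is shown to lie in the matching polytope of $G^*$, Edmonds' theorem yields an integral decomposition $\bar{x} = \sum_j \lambda_j \chi(\bar{M}_j)$ into matchings $\bar{M}_j$ of $G^*$. Each $\bar{M}_j$ lifts canonically to a matching $M_j$ in $K_{m,n}$ together with a value $y_j \in \setdef{0,1}$, where $y_j = 1$ iff $\bar{M}_j$ uses the $y$-edge (and hence $e_1, e_2 \in M_j$). Summing up, this produces a convex combination of vertices of $\PmatchOne$ equal to $(\hat{x}, \hat{y})$.

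\textbf{Main obstacle.} I expect the hardest step to be proving that $\bar{x}$ actually belongs to the matching polytope of $G^*$, in particular verifying every blossom inequality. Each odd set $T$ of $G^*$ containing $v^*$ should pull back to some $S \in \facetsUp$, and the corresponding blossom constraint should translate precisely into the hypothesized inequality~\eqref{ConstraintMatchingQuadraticUp} for that $S$; this is exactly where the assumption that~\eqref{ConstraintMatchingQuadraticUp} holds for \emph{all} $S \in \facetsUp$ (not only the tight one $S^*$) becomes essential. Odd sets of $G^*$ that avoid the gadget reduce to standard bipartite constraints and follow from~\eqref{ConstraintMatchingNonnegative}--\eqref{ConstraintMatchingDegree}. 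Ancillary technicalities include handling the self-loop at $v^*$ produced by the $K_{m,n}$-edge $\setdef{u_2, w_1}$, whose weight must be redistributed to a neighboring edge, and verifying the degree constraint at $v^*$, for which the $y$-gadget must provide exactly the missing slack $\hat{y}$.
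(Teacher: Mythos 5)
Your high-level plan (project to an auxiliary nonbipartite graph, invoke Edmonds' theorem, lift the integral decomposition back) is indeed the strategy of the paper, but the proposal leaves unspecified exactly the piece where all the difficulty sits, and the concrete construction you do commit to does not work as stated. If you identify $w_1$ and $u_2$ into $v^*$ and keep the natural projection of the $x$-mass, then for the tight set $S^*$ itself the blossom inequality for $\widetilde{S}=S^*\cup\setdef{v^*}$ requires $\hat{x}(E[S^*])+\hat{x}_{e_1}+\hat{x}_{e_2}\leq \tfrac{1}{2}|S^*|$ (plus further nonnegative terms for the other edges of $S^*$ incident to $w_1$ or $u_2$), while the hypothesis only gives $\hat{x}(E[S^*])+\hat{x}_{e_1}+\hat{x}_{e_2}-\hat{y}\leq\tfrac{1}{2}|S^*|$ with equality; so whenever $\hat{y}>0$ the point you would feed to Edmonds' theorem violates a blossom constraint unless the ``gadget'' reroutes exactly the $y$-mass away from $e_1,e_2$. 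Designing that rerouting is the actual content of the proof: the paper adds \emph{two} nodes $a,b$ and the edge $\setdef{a,b}$, puts $\bar{x}_{e_i}=\tfrac{1}{2}\hat{y}$, $\bar{x}_{\setdef{u_1,a}}=\bar{x}_{\setdef{w_1,b}}=\hat{x}_{e_1}-\tfrac{1}{2}\hat{y}$, $\bar{x}_{\setdef{u_2,b}}=\bar{x}_{\setdef{w_2,a}}=\hat{x}_{e_2}-\tfrac{1}{2}\hat{y}$, and only with these specific values do the blossom inequalities reduce (via a case analysis) to the hypothesized Inequalities~\eqref{ConstraintMatchingQuadraticUp} and to auxiliary inequalities that must themselves first be derived. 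Note also that $y\leq x_{e_i}$ is \emph{not} a hypothesis of this lemma; it has to be deduced from the existence of the tight set $S^*$ (Proposition~\ref{TheoremUpInequalities}) before you can even assert nonnegativity of the projected point. None of this is visible in your sketch.

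The second gap is the claim that each $\bar{M}_j$ ``lifts canonically.'' In any such construction a matching of the auxiliary graph uses at most one edge at the gadget (in your $G^*$, at most one of the images of $e_1,e_2$ meets $v^*$), so the matchings that must lift to $y_j=1$, i.e.\ to matchings of $K_{m,n}$ containing \emph{both} $e_1$ and $e_2$, cannot be read off edge-by-edge; they have to be produced by symmetric-difference surgery, and one must prove this surgery is available and that the resulting combination still reconstitutes $(\hat{x},\hat{y})$. In the paper this requires the tightness of $S^*$ to force structure on every $\bar{M}_j$ (each matches $a$ and $b$ and contains at most one of $e_1$, $e_2$, $\setdef{a,b}$, Claim~\ref{TheoremUpCut}), an exchange argument exploiting bipartiteness of $K_{m,n}$ to choose the decomposition with $J_u=J_w=\emptyset$ (Claim~\ref{TheoremUpMinimalCombination}), and a barycenter computation. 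Your proposal acknowledges the blossom verification as the hard step but offers no gadget for which it could be carried out, and silently assumes the lifting and recombination are automatic; as written it is a plan whose one concrete instantiation fails at the tight set, not a proof.
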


Lemma~\ref{TheoremUpCombination} will be proved in Section~\ref{SectionUp}.
We continue with the consequences of the two lemmas.

\begin{theorem}
  \label{TheoremDownComplete}
  $\PmatchOneDown$ is equal to the set of $(x,y) \in \R^E \times \R$ that satisfy
  Constraints~\eqref{ConstraintMatchingNonnegative},
  \eqref{ConstraintMatchingDegree},
  \eqref{ConstraintMatchingBound},
  \eqref{ConstraintMatchingQuadraticGood} and
  \eqref{ConstraintMatchingQuadraticDown}.
\end{theorem}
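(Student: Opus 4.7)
The strategy is a two-inclusion argument in which only the nontrivial direction needs Lemma~\ref{TheoremDownCombination}. Write $Q$ for the polyhedron on the right-hand side of the claimed description. The inclusion $\PmatchOneDown \subseteq Q$ is just validity, which has already been stated in the introduction for each listed family of inequalities. The reverse inclusion is the substantive part, and the plan is to reduce an arbitrary $(\hat{x},\hat{y}) \in Q$ to a point on which Lemma~\ref{TheoremDownCombination} can be applied, by pushing the $y$-coordinate upward until one of its upper-bound inequalities becomes tight.

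Concretely, fix $(\hat{x},\hat{y}) \in Q$ and set
\[
  \hat{y}^{*} := \min\Bigl\{\,1,\; \hat{x}_{e_1},\; \hat{x}_{e_2},\; \min_{S \in \facetsDown}\tfrac{1}{2}(|S|-1) - \hat{x}(E[S])\,\Bigr\}.
\]
This is the largest value of $y$ for which $(\hat{x},y) \in Q$, and by construction $\hat{y} \leq \hat{y}^{*} \leq 1$. Because the degree inequalities~\eqref{ConstraintMatchingDegree} force $\hat{x}_{e_i} \leq 1$ for $i=1,2$, the defining minimum cannot be witnessed only by the trivial bound $y \leq 1$: either some \eqref{ConstraintMatchingQuadraticGood} is tight at $(\hat{x},\hat{y}^{*})$, or some \eqref{ConstraintMatchingQuadraticDown} is. Lemma~\ref{TheoremDownCombination} then applies and expresses $(\hat{x},\hat{y}^{*})$ as a convex combination of vertices of $\PmatchOne \subseteq \PmatchOneDown$.

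To finish, I would lower $y$ back from $\hat{y}^{*}$ to $\hat{y}$ while staying inside $\PmatchOneDown$. In any decomposition $(\hat{x},\hat{y}^{*}) = \sum_{i} \lambda_i (\chi(M_i),y_i)$ produced by the lemma, every term with $y_i = 1$ satisfies $e_1,e_2 \in M_i$, so the corresponding $(\chi(M_i),0)$ is still a vertex of $\PmatchOneDown$. Reassigning a total weight of exactly $\hat{y}^{*} - \hat{y}$ from those $y_i = 1$ vertices to their $y_i = 0$ counterparts yields a convex decomposition of $(\hat{x},\hat{y})$ into vertices of $\PmatchOneDown$; this is feasible because $\sum_{i : y_i = 1} \lambda_i = \hat{y}^{*} \geq \hat{y}^{*} - \hat{y}$. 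I do not foresee any real obstacle here: all the combinatorial content is discharged by Lemma~\ref{TheoremDownCombination}, and what surrounds it is essentially a one-variable monotonization argument, with the mild subtlety that one must verify that the upper endpoint $\hat{y}^{*}$ indeed triggers equality in at least one constraint from the two families named by the lemma, which the degree inequalities make automatic.
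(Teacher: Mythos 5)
Your proposal is correct in substance, but it takes a genuinely different route from the paper for the nontrivial inclusion. The paper argues vertex by vertex: for a vertex of the right-hand-side polytope, either some Inequality~\eqref{ConstraintMatchingQuadraticGood} or~\eqref{ConstraintMatchingQuadraticDown} is tight, in which case Lemma~\ref{TheoremDownCombination} applies directly, or all of them are strictly satisfied, in which case the point is already a vertex of the relaxation given by \eqref{ConstraintMatchingNonnegative}--\eqref{ConstraintMatchingBound}, hence of the form $(\chi(M),\hat{y})$ with $\hat{y}\in\{0,1\}$, and strictness of~\eqref{ConstraintMatchingQuadraticGood} forces $\hat{y}=0$. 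You instead take an arbitrary point of the outer polytope, lift $\hat{y}$ to the maximal feasible value $\hat{y}^*$, observe (correctly, using the degree inequality at $u_1$ to rule out that only $y\leq 1$ witnesses the minimum) that at $\hat{y}^*$ one of~\eqref{ConstraintMatchingQuadraticGood} or~\eqref{ConstraintMatchingQuadraticDown} is tight, apply the lemma there, and then push the $y$-mass back down by moving weight $\hat{y}^*-\hat{y}$ from vertices $(\chi(M_i),1)$ to $(\chi(M_i),0)$, which indeed remain vertices of $\PmatchOneDown$. This buys a uniform argument with no integrality appeal to the relaxation, at the price of redoing by hand the one-variable downward-monotonization step (essentially the mechanism of Proposition~\ref{TheoremIntersectMonotonizations}) that the paper's vertex dichotomy gets for free; both arguments have Lemma~\ref{TheoremDownCombination} as their sole engine.

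Two small repairs are needed. First, Lemma~\ref{TheoremDownCombination} is stated for rational $(\hat{x},\hat{y})$ (its proof scales a rational convex combination by an integer $k$), while you apply it to an arbitrary, possibly irrational, point; restrict your argument to the (rational) vertices of the outer polytope, or to rational points followed by a closure/convex-hull remark. Second, the validity direction $\PmatchOneDown\subseteq Q$ for Inequalities~\eqref{ConstraintMatchingQuadraticDown} is not actually established in the introduction (Klein's results are cited there only as facet proofs for perfect matchings), so you should include the short parity argument the paper gives: if $y=1$ then $e_1,e_2\in M$ leave $S$, so at most $\tfrac{1}{2}(|S|-3)$ matching edges lie in $E[S]$, while for $y=0$ it is the ordinary odd-set bound.
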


\begin{proof}
  Let $P$ be the polytope defined by Constraints~\eqref{ConstraintMatchingNonnegative}, \eqref{ConstraintMatchingDegree}, \eqref{ConstraintMatchingBound}, \eqref{ConstraintMatchingQuadraticGood} and~\eqref{ConstraintMatchingQuadraticDown}.
  We first show $\PmatchOneDown \subseteq P$ by showing $(\chi(M),y) \in P$ for all feasible integer pairs $(\chi(M),y)$, i.e., matchings $M$ in $K_{m,n}$ and $y \in \setdef{0,1}$ satisfying $e_1,e_2 \in M$ if $y = 1$.
  Clearly, $\chi(M)$ satisfies Constraints~\eqref{ConstraintMatchingNonnegative} and~\eqref{ConstraintMatchingDegree}, and $y$ satisfies~\eqref{ConstraintMatchingBound}.

  Let $S \in \facetsDown$, define $\bar{S} \coloneqq S \setminus \setdef{u_1,u_2,w_1,w_2}$, and observe that $|\bar{S}|$ is odd.
  If $y = 1$, then $e_1,e_2 \in M$, i.e., Constraint~\eqref{ConstraintMatchingQuadraticGood} is satisfied. 
  Hence, only nodes in $\bar{S}$ can be matched to other nodes in $S$, and there are at most $\lfloor |\bar{S}|/2 \rfloor = (|S|-3)/2$ of them.
  If $y = 0$, then the validity follows from the fact that $S$ has odd cardinality.
  This shows that Constraint~\eqref{ConstraintMatchingQuadraticDown} is always satisfied.

\medskip

  To show $P \subseteq \PmatchOneDown$,
  we consider a vertex $(\hat{x},\hat{y})$ of $P$.
  Note that since $P$ is rational we have $(\hat{x},\hat{y}) \in \Q^E \times \Q$.
  If it satisfies at least one of 
  the Inequalities~\eqref{ConstraintMatchingQuadraticGood} for some $i^* \in \setdef{1,2}$
  or~\eqref{ConstraintMatchingQuadraticDown} for some $S^* \in \facetsDown$ with equality,
  Lemma~\ref{TheoremDownCombination} yields that $(\hat{x},\hat{y})$ is a convex combination of 
  vertices of $\PmatchOne$, which are vertices of $\PmatchOneDown$.

  Hence, $(\hat{x},\hat{y})$ is even a vertex of the polytope defined only by the
  Constraints~\eqref{ConstraintMatchingNonnegative},
  \eqref{ConstraintMatchingDegree} 
  and~\eqref{ConstraintMatchingBound}.
  Thus, $\hat{y} \in \setdef{0,1}$ and $\hat{x} = \chi(M)$ for some matching $M$ in $K_{m,n}$.
  Since Inequalities~\eqref{ConstraintMatchingQuadraticGood} are \emph{strictly} satisfied,
  we must have $\hat{y} = 0$, which concludes the proof.
\end{proof}

\begin{theorem}
  \label{TheoremUpComplete}
  $\PmatchOneUp$ is equal to the set of $(x,y) \in \R^E \times \R$ that satisfy
  Constraints~\eqref{ConstraintMatchingNonnegative},
  \eqref{ConstraintMatchingDegree},
  \eqref{ConstraintMatchingBound} and
  \eqref{ConstraintMatchingQuadraticUp}.
\end{theorem}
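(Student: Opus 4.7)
The plan is to mirror the proof of Theorem~\ref{TheoremDownComplete}, establishing the two inclusions $\PmatchOneUp \subseteq P$ and $P \subseteq \PmatchOneUp$, where $P$ denotes the polytope defined by Constraints~\eqref{ConstraintMatchingNonnegative}, \eqref{ConstraintMatchingDegree}, \eqref{ConstraintMatchingBound} and~\eqref{ConstraintMatchingQuadraticUp}. The bulk of the geometric work is already packaged in Lemma~\ref{TheoremUpCombination}; what remains is the validity of Inequality~\eqref{ConstraintMatchingQuadraticUp} on integer points and an endgame for vertices of $P$ at which none of these inequalities is tight.

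For $\PmatchOneUp \subseteq P$ the only nontrivial check is validity of Inequality~\eqref{ConstraintMatchingQuadraticUp} on a feasible pair $(\chi(M), y)$. I would fix $S \in \facetsUp$ with $S \cap \Vspecial = \setdef{u_1, w_2}$ (the other case is symmetric) and write $k := |S|/2$. Each vertex of $S \cap U$ is covered by at most one edge from $M \cap (E[S] \cup \setdef{e_1})$, giving $x(E[S]) + x_{e_1} \leq k$; the analogous bound $x(E[S]) + x_{e_2} \leq k$ comes from $S \cap W$. Summing yields $x(E[S]) + \tfrac{1}{2}(x_{e_1} + x_{e_2}) \leq k$. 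A short case split finishes: if $e_1, e_2 \in M$ then the upward condition forces $y = 1$ and the extra $\tfrac{1}{2}$ of slack is absorbed by $-y$; otherwise $x_{e_1} + x_{e_2} \leq 1$ and the integrality of the left-hand side rounds the bound down to $k$.

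For $P \subseteq \PmatchOneUp$ I take a vertex $(\hat{x}, \hat{y})$ of $P$. If some $S^* \in \facetsUp$ tightens Inequality~\eqref{ConstraintMatchingQuadraticUp}, then Lemma~\ref{TheoremUpCombination} expresses $(\hat{x}, \hat{y})$ as a convex combination of vertices of $\PmatchOne$, all of which are feasible for $\PmatchOneUp$, and we are done. Otherwise every Inequality~\eqref{ConstraintMatchingQuadraticUp} is strict at $(\hat{x}, \hat{y})$, so $(\hat{x}, \hat{y})$ is also a vertex of the polytope defined only by Constraints~\eqref{ConstraintMatchingNonnegative}, \eqref{ConstraintMatchingDegree} and~\eqref{ConstraintMatchingBound}. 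That polytope is the product of the bipartite matching polytope of $K_{m,n}$ with $[0,1]$, whose vertices are integral; hence $\hat{x} = \chi(M)$ for some matching $M$ and $\hat{y} \in \setdef{0,1}$.

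The delicate step, and the only substantive difference from the Down case (which used Inequality~\eqref{ConstraintMatchingQuadraticGood} to pin $\hat{y} = 0$), is ruling out the single remaining bad configuration $\hat{y} = 0$ together with $e_1, e_2 \in M$. For this I appeal to the smallest up-type inequality, namely the one indexed by $S = \setdef{u_1, w_2} \in \facetsUp$, which reads $x_{u_1 w_2} + x_{e_1} + x_{e_2} - y \leq 1$. Since this inequality is strict at the integer point $(\chi(M), \hat{y})$, its integer left-hand side is at most $0$; when $\hat{y} = 0$ this forces $\chi(M)_{e_1} = \chi(M)_{e_2} = 0$ and the upward implication holds vacuously. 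Therefore $(\hat{x}, \hat{y}) \in \PmatchOneUp$, completing the proof.
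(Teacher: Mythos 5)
Your proposal is correct and follows essentially the same route as the paper: validity of Inequality~\eqref{ConstraintMatchingQuadraticUp} on integer points, Lemma~\ref{TheoremUpCombination} when some up-inequality is tight, integrality of the relaxed vertex otherwise, and the $S = \setdef{u_1,w_2}$ inequality to exclude $\hat{y}=0$ with $e_1,e_2 \in M$. The only (cosmetic) differences are that you prove validity by summing the two counting bounds $x(E[S])+x_{e_i} \leq \tfrac{1}{2}|S|$ and rounding, where the paper argues by cases on $y$ and on $|M \cap \setdef{e_1,e_2}|$, and that in the endgame you invoke strictness of the small inequality where its plain validity already suffices.
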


\begin{proof}
  Let $P$ be the polytope defined by Constraints~\eqref{ConstraintMatchingNonnegative}, \eqref{ConstraintMatchingDegree}, \eqref{ConstraintMatchingBound} and~\eqref{ConstraintMatchingQuadraticUp}.
  We first show $\PmatchOneUp \subseteq P$ by showing $(\chi(M),y) \in P$ for all feasible integer pairs $(\chi(M),y)$, i.e., matchings $M$ in $K_{m,n}$ and $y \in \setdef{0,1}$ satisfying ($e_1 \notin M$ or $e_2 \notin M$) if $y = 0$.
  Clearly, $\chi(M)$ satisfies Constraints~\eqref{ConstraintMatchingNonnegative} and~\eqref{ConstraintMatchingDegree}, and $y$ satisfies~\eqref{ConstraintMatchingBound}.

  For $S \in \facetsUp$,
  $M$ contains at most $\frac{1}{2}(|S \cup e_1 \cup e_2|) = \frac{1}{2}|S| + 1$
  edges in $E[S] \cup \setdef{e_1,e_2}$.
  Thus, if $y = 1$, Constraint~\eqref{ConstraintMatchingQuadraticUp} is satisfied.
  If $y = 0$ and $e_1,e_2 \notin M$, then it is trivially satisfied.
  Otherwise, i.e., if $y = 0$ and $M$ contains exactly one of the two edges,
  we can assume w.l.o.g. $e_1 \in M$ and $e_2 \notin M$.
  Since $S \setminus e_1$ has odd cardinality,
  at most $|S|/2 - 1$ edges of $M$ can have both endnodes in $S$, 
  the constraint is also satisfied in this case.

\medskip

  To show $P \subseteq \PmatchOneUp$,
  we consider a vertex $(\hat{x},\hat{y})$ of $P$.
  Note that since $P$ is rational we have $(\hat{x},\hat{y}) \in \Q^E \times \Q$.
  If it satisfies at least one of 
  the Inequalities~\eqref{ConstraintMatchingQuadraticUp} for some $S^* \in \facetsUp$ with equality,
  Lemma~\ref{TheoremUpCombination} yields that $(\hat{x},\hat{y})$ is a convex combination of 
  vertices of $\PmatchOne$, which are vertices of $\PmatchOneUp$.

  Hence, $(\hat{x},\hat{y})$ is even a vertex of the polytope defined only by the
  Constraints~\eqref{ConstraintMatchingNonnegative},
  \eqref{ConstraintMatchingDegree} 
  and~\eqref{ConstraintMatchingBound}.
  Thus, $\hat{y} \in \setdef{0,1}$ and $\hat{x} = \chi(M)$ for some matching $M$ in $K_{m,n}$.
  If $\hat{y} = 0$, then Inequality~\eqref{ConstraintMatchingQuadraticUp} for $S = \setdef{u_1,w_2}$
  reads $x_{u_1,w_2} + x_{e_1} + x_{e_2} - 0 \leq 1$,
  and thus implies $e_1 \notin M$ or $e_2 \notin M$,
  which concludes the proof.
\end{proof}

\begin{theorem}
  \label{TheoremComplete}
  $\PmatchOne$ is equal to the set of $(x,y) \in \R^E \times \R$ that satisfy
  Constraints~\eqref{ConstraintMatchingNonnegative},
  \eqref{ConstraintMatchingDegree},
  \eqref{ConstraintMatchingBound},
  \eqref{ConstraintMatchingQuadraticGood},
  \eqref{ConstraintMatchingQuadraticDown} and
  \eqref{ConstraintMatchingQuadraticUp}, 
  i.e., $\PmatchOne = \PmatchOneDown \cap \PmatchOneUp$.
\end{theorem}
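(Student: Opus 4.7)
The set $P$ defined by Constraints~\eqref{ConstraintMatchingNonnegative}--\eqref{ConstraintMatchingQuadraticUp} coincides with $\PmatchOneDown \cap \PmatchOneUp$ by Theorems~\ref{TheoremDownComplete} and~\ref{TheoremUpComplete}, so the two halves of the statement are equivalent. The inclusion $\PmatchOne \subseteq P$ is immediate: every feasible $(\chi(M),y)$ for $\PmatchOne$ is feasible for both $\PmatchOneDown$ and $\PmatchOneUp$ (this is exactly what was shown in the first halves of the proofs of the two preceding theorems). The work is in the reverse inclusion $P \subseteq \PmatchOne$.

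Following the template of the proofs of Theorems~\ref{TheoremDownComplete} and~\ref{TheoremUpComplete}, I would pick an arbitrary vertex $(\hat{x},\hat{y})$ of $P$ (necessarily rational since $P$ is) and split into two cases according to which of the ``quadratic'' inequalities are active. If one of the Inequalities~\eqref{ConstraintMatchingQuadraticGood} or~\eqref{ConstraintMatchingQuadraticDown} is tight, Lemma~\ref{TheoremDownCombination} applies and exhibits $(\hat{x},\hat{y})$ as a convex combination of vertices of $\PmatchOne$. Symmetrically, if one of the Inequalities~\eqref{ConstraintMatchingQuadraticUp} is tight, Lemma~\ref{TheoremUpCombination} does the job.

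The remaining case is when none of \eqref{ConstraintMatchingQuadraticGood}, \eqref{ConstraintMatchingQuadraticDown}, \eqref{ConstraintMatchingQuadraticUp} is tight at $(\hat{x},\hat{y})$. Then $(\hat{x},\hat{y})$ must be a vertex of the polytope defined only by~\eqref{ConstraintMatchingNonnegative},~\eqref{ConstraintMatchingDegree} and~\eqref{ConstraintMatchingBound}, which is just the matching polytope of $K_{m,n}$ times $[0,1]$. Hence $\hat{x} = \chi(M)$ for some matching $M$ and $\hat{y} \in \setdef{0,1}$. Strict satisfaction of~\eqref{ConstraintMatchingQuadraticGood} rules out $\hat{y} = 1$ (otherwise $1 < \hat{x}_{e_i} \leq 1$ would give a contradiction), so $\hat{y} = 0$. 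It remains to argue that $(\chi(M),0)$ lies in $\PmatchOne$, i.e., that $M$ does not contain both $e_1$ and $e_2$. For this, I would invoke Inequality~\eqref{ConstraintMatchingQuadraticUp} with $S = \setdef{u_1,w_2} \in \facetsUp$, which reads $x_{u_1,w_2} + x_{e_1} + x_{e_2} - y \leq 1$; plugging in $\hat{y}=0$ forces at most one of $e_1,e_2$ to be present in $M$.

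I do not expect a genuine obstacle here, because the two hard lemmas already do all the combinatorial work; this theorem is essentially a merging of the two preceding proofs. The only subtle point is the last case above, where one must remember that $\PmatchOne$ requires a biconditional between $\hat{y}$ and membership of both $e_1,e_2$ in $M$, and that the tiny set $S = \setdef{u_1,w_2}$ (rather than some blossom-like larger set) already suffices to enforce this. If I had not noticed the $\PmatchOneUp$ description, the natural alternative would be to prove $\PmatchOne = \conv(\PmatchOneDown \cap \PmatchOneUp \cap (\Z^E \times \Z))$ directly via integrality of vertices; but exploiting Theorems~\ref{TheoremDownComplete} and~\ref{TheoremUpComplete} as black boxes is clearly cleaner.
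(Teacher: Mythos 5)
Your proposal is correct and follows essentially the same route as the paper's proof: deduce $P=\PmatchOneDown\cap\PmatchOneUp$ and $\PmatchOne\subseteq P$ from Theorems~\ref{TheoremDownComplete} and~\ref{TheoremUpComplete}, handle vertices of $P$ with a tight quadratic inequality via Lemmas~\ref{TheoremDownCombination} and~\ref{TheoremUpCombination}, and in the remaining case identify the vertex as an integral matching point with $\hat{y}\in\setdef{0,1}$, using Inequalities~\eqref{ConstraintMatchingQuadraticGood} together with~\eqref{ConstraintMatchingQuadraticUp} for $S=\setdef{u_1,w_2}$ to enforce the biconditional. No gaps worth noting.
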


\begin{proof}
  Let $P$ be the polytope defined by Constraints~\eqref{ConstraintMatchingNonnegative},
  \eqref{ConstraintMatchingDegree},
  \eqref{ConstraintMatchingBound},
  \eqref{ConstraintMatchingQuadraticGood},
  \eqref{ConstraintMatchingQuadraticDown} and
  \eqref{ConstraintMatchingQuadraticUp}.
  By Theorems~\eqref{TheoremDownComplete} and~\eqref{TheoremUpComplete}
  we have $\PmatchOne \subseteq \PmatchOneDown \cap \PmatchOneUp = P$.

\medskip

  To show $P \subseteq \PmatchOne$,
  we consider a vertex $(\hat{x},\hat{y})$ of $P$.
  Note that since $P$ is rational we have $(\hat{x},\hat{y}) \in \Q^E \times \Q$.
  If it satisfies at least one of 
  the Inequalities~\eqref{ConstraintMatchingQuadraticGood} for some $i^* \in \setdef{1,2}$
  or~\eqref{ConstraintMatchingQuadraticDown} for some $S^* \in \facetsDown$ with equality,
  Lemma~\ref{TheoremDownCombination} yields that $(\hat{x},\hat{y})$ is a convex combination of 
  vertices of $\PmatchOne$.
  The same result holds by Lemma~\ref{TheoremUpCombination}
  if the point satisfies at least one of 
  the Inequalities~\eqref{ConstraintMatchingQuadraticUp} for some $S^* \in \facetsUp$ with equality.

  Hence, $(\hat{x},\hat{y})$ is even a vertex of the polytope defined only by the
  Constraints~\eqref{ConstraintMatchingNonnegative},
  \eqref{ConstraintMatchingDegree} 
  and~\eqref{ConstraintMatchingBound}.
  Thus, $\hat{y} \in \setdef{0,1}$ and $\hat{x} = \chi(M)$ for some matching $M$ in $K_{m,n}$.
  Inequalities~\eqref{ConstraintMatchingQuadraticGood} and Inequality~\eqref{ConstraintMatchingQuadraticUp}
  for $S = \setdef{u_1,w_2}$ imply that $y = 1$ if and only if $e_1,e_2 \in M$,
  which concludes the proof.
\end{proof}

\paragraph{Perfect matchings.}
\label{SectionPerfectMatchings}
We now assume $m = n$, since otherwise, $K_{m,n}$ does not contain perfect matchings.
Since the formulations for perfect matchings are obtained
by replacing Inequalities~\eqref{ConstraintMatchingDegree} by Equations~\eqref{ConstraintPerfectMatchingDegree},
the corresponding polytopes are faces of the ones defined in the Section~\ref{SectionIntroduction},
and we immediately obtain the following results from the corresponding theorems in Section~\ref{SectionResults}:

\begin{corollary}
  \label{TheoremPerfectDown}
  The convex hull of all $(\chi(M),y) \in \setdef{0,1}^E \times \setdef{0,1}$,
  for which $M$ is a perfect matching $M$ in $K_{n,n}$ and $y = 1$ implies
  $e_1,e_2 \in M$, is equal to the set of $(x,y) \in \R^E \times \R$ that satisfy
  Constraints~\eqref{ConstraintMatchingNonnegative},
  \eqref{ConstraintMatchingBound},
  \eqref{ConstraintMatchingQuadraticGood},
  \eqref{ConstraintMatchingQuadraticDown} and
  \eqref{ConstraintPerfectMatchingDegree}.
\end{corollary}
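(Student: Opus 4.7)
The plan is to derive this corollary directly from Theorem~\ref{TheoremDownComplete} by a standard face argument, exactly as announced in the paragraph preceding the corollary. Let $Q$ denote the convex hull described in the corollary statement, and let $F \subseteq \PmatchOneDown$ be the face obtained by enforcing Equations~\eqref{ConstraintPerfectMatchingDegree}. Because each such equation is of the form $x(\delta(v)) = 1$ with $x(\delta(v)) \leq 1$ valid for $\PmatchOneDown$, the set $F$ really is a face, and its vertices are precisely those vertices $(\chi(M), y)$ of $\PmatchOneDown$ for which every node of $V$ is saturated by $M$, i.e.\ for which $M$ is a perfect matching. Consequently $F = Q$.

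Next I would combine this with Theorem~\ref{TheoremDownComplete}: the inequality description of $\PmatchOneDown$ consists of \eqref{ConstraintMatchingNonnegative}, \eqref{ConstraintMatchingDegree}, \eqref{ConstraintMatchingBound}, \eqref{ConstraintMatchingQuadraticGood} and~\eqref{ConstraintMatchingQuadraticDown}, so the face $F$ admits the description obtained by appending Equations~\eqref{ConstraintPerfectMatchingDegree}. The only thing left to observe is that, once the equations~\eqref{ConstraintPerfectMatchingDegree} are imposed, Inequalities~\eqref{ConstraintMatchingDegree} are trivially implied and may be dropped from the list; conversely, the equations clearly imply the nonnegativity constraints are consistent with those inequalities. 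This yields exactly the constraint system claimed in the corollary.

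I do not expect any real obstacle here: the argument is the routine "intersect the polytope description with the face-defining equations and remove now-redundant inequalities." The only mild subtlety worth stating explicitly is that the face $F$ of $\PmatchOneDown$ coincides with the convex hull of the integer points $(\chi(M), y)$ with $M$ perfect and $y \in \{0,1\}$ satisfying $y=1 \Rightarrow e_1,e_2 \in M$, which is immediate because $\PmatchOneDown$ is an integer polytope whose vertices satisfying the degree equations are exactly those coming from perfect matchings.
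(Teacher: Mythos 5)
Your proposal is correct and follows essentially the same route as the paper, which obtains the corollary immediately from Theorem~\ref{TheoremDownComplete} by observing that imposing Equations~\eqref{ConstraintPerfectMatchingDegree} cuts out a face of the integral polytope $\PmatchOneDown$ whose vertices are exactly the points coming from perfect matchings. The only blemish is the garbled sentence about the equations implying ``the nonnegativity constraints are consistent,'' which is unnecessary; the substance of the argument matches the paper's.
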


\begin{corollary}
  \label{TheoremPerfectUp}
  The convex hull of all $(\chi(M),y) \in \setdef{0,1}^E \times \setdef{0,1}$,
  for which $M$ is a perfect matching $M$ in $K_{n,n}$ and $y = 0$ implies
  $e_1 \notin M$ or $e_2 \notin M$, is equal to the set of $(x,y) \in \R^E \times \R$ that satisfy
  Constraints~\eqref{ConstraintMatchingNonnegative},
  \eqref{ConstraintMatchingBound},
  \eqref{ConstraintMatchingQuadraticUp} and
  \eqref{ConstraintPerfectMatchingDegree}.
\end{corollary}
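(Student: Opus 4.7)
The plan is to obtain Corollary~\ref{TheoremPerfectUp} as an immediate face consequence of Theorem~\ref{TheoremUpComplete}, paralleling how the classical perfect matching polytope arises as a face of the classical matching polytope. Let $Q$ denote the convex hull described in the corollary, and let $F$ denote the face of $\PmatchOneUp$ cut out by imposing the equations~\eqref{ConstraintPerfectMatchingDegree}. First I would verify that $Q = F$: the vertices of $\PmatchOneUp$ are the integer points $(\chi(M),y)$ with $M$ a matching in $K_{n,n}$ and $y$ compatible with the ``up'' relation, and such a vertex lies on the hyperplanes $x(\delta(v)) = 1$ for every $v \in U \dcup W$ exactly when $M$ saturates every node, i.e., when $M$ is a perfect matching. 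Hence the vertices of $F$ are precisely the extreme points spanning $Q$, and $F = Q$.

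Second, I would invoke the standard polyhedral fact that if a polytope is presented by a system of linear inequalities, then any face of it is presented by the same system with the inequalities defining the face turned into equations (and all other inequalities kept). Applied to the description of $\PmatchOneUp$ supplied by Theorem~\ref{TheoremUpComplete}, this says that $F$ is the solution set of Constraints~\eqref{ConstraintMatchingNonnegative}, \eqref{ConstraintMatchingBound}, \eqref{ConstraintMatchingQuadraticUp}, together with the equations~\eqref{ConstraintPerfectMatchingDegree}, where Inequalities~\eqref{ConstraintMatchingDegree} have been absorbed into the stronger equations~\eqref{ConstraintPerfectMatchingDegree} and therefore dropped. This is precisely the system listed in the corollary.

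There is no genuine obstacle here; once Theorem~\ref{TheoremUpComplete} is in hand, the corollary is essentially automatic, and the brief remark in Section~\ref{SectionPerfectMatchings} preceding the two corollaries already captures the argument. I would write the proof as a single short paragraph that identifies $Q$ with the face $F$ of $\PmatchOneUp$ and then quotes Theorem~\ref{TheoremUpComplete} together with the face-description principle, rather than repeating the calculation already performed for $\PmatchOneUp$ in the non-perfect case.
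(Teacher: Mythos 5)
Your argument is exactly the paper's: Section~\ref{SectionPerfectMatchings} obtains Corollary~\ref{TheoremPerfectUp} by observing that replacing Inequalities~\eqref{ConstraintMatchingDegree} with Equations~\eqref{ConstraintPerfectMatchingDegree} cuts out a face of $\PmatchOneUp$, so the description follows immediately from Theorem~\ref{TheoremUpComplete}. Your identification of that face with the convex hull in question and your appeal to the standard face-description principle are correct and match the paper's (implicit) reasoning.
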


\begin{corollary}
  \label{TheoremPerfectComplete}
  The convex hull of all $(\chi(M),y) \in \setdef{0,1}^E \times \setdef{0,1}$,
  for which $M$ is a perfect matching $M$ in $K_{n,n}$ and $y = 1$ if and only if
  $e_1,e_2 \in M$, is equal to the set of $(x,y) \in \R^E \times \R$ that satisfy
  Constraints~\eqref{ConstraintMatchingNonnegative},
  \eqref{ConstraintMatchingBound},
  \eqref{ConstraintMatchingQuadraticGood},
  \eqref{ConstraintMatchingQuadraticDown},
  \eqref{ConstraintMatchingQuadraticUp} and
  \eqref{ConstraintPerfectMatchingDegree}.
\end{corollary}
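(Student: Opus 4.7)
The plan is to deduce this directly from Theorem~\ref{TheoremComplete} by recognizing the perfect-matching polytope as a face of $\PmatchOne$. Let $Q$ denote the convex hull of pairs $(\chi(M),y)$ with $M$ a perfect matching in $K_{n,n}$ and $y=1$ iff $e_1,e_2\in M$. Every such point lies in $\PmatchOne(K_{n,n},e_1,e_2)$ and additionally satisfies each degree equation $x(\delta(v))=1$. Conversely, any vertex of the face of $\PmatchOne$ cut out by \eqref{ConstraintPerfectMatchingDegree} is a vertex of $\PmatchOne$, hence of the form $(\chi(M),y)$ for some matching $M$ in $K_{n,n}$ satisfying $y=1$ iff $e_1,e_2\in M$, and the added equations force $M$ to be perfect. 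So $Q$ is exactly the face of $\PmatchOne$ defined by \eqref{ConstraintPerfectMatchingDegree}.

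Next I would invoke Theorem~\ref{TheoremComplete}, which supplies the complete inequality description of $\PmatchOne$ via \eqref{ConstraintMatchingNonnegative}--\eqref{ConstraintMatchingQuadraticUp}. Intersecting this outer description with the affine subspace defined by the equations \eqref{ConstraintPerfectMatchingDegree} yields an outer description of $Q$ consisting of \eqref{ConstraintMatchingNonnegative}, \eqref{ConstraintMatchingDegree}, \eqref{ConstraintMatchingBound}, \eqref{ConstraintMatchingQuadraticGood}, \eqref{ConstraintMatchingQuadraticDown}, \eqref{ConstraintMatchingQuadraticUp} together with \eqref{ConstraintPerfectMatchingDegree}. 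Since \eqref{ConstraintPerfectMatchingDegree} trivially implies \eqref{ConstraintMatchingDegree}, the latter family becomes redundant and can be dropped, giving precisely the inequality system in the statement of the corollary.

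To finish, I would note that the matching containment $Q\supseteq\{\,\cdot\,\}$ and the preceding step already prove equality of the two sets, because taking a face of a polytope preserves the property of being described by the listed inequalities together with the face-defining equations. There is no separate work needed to verify validity of \eqref{ConstraintMatchingQuadraticDown} or \eqref{ConstraintMatchingQuadraticUp} for perfect matchings, since they were already established as valid on the larger polytope $\PmatchOne$ in the proofs of Theorems~\ref{TheoremDownComplete} and~\ref{TheoremUpComplete}.

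I do not expect any genuine obstacle here: the corollary is a routine specialization of Theorem~\ref{TheoremComplete} to a face, and the only mild subtlety is checking that \eqref{ConstraintMatchingDegree} is indeed redundant in the presence of \eqref{ConstraintPerfectMatchingDegree}, which is immediate. The corresponding earlier corollaries (Corollaries~\ref{TheoremPerfectDown} and~\ref{TheoremPerfectUp}) are argued identically from Theorems~\ref{TheoremDownComplete} and~\ref{TheoremUpComplete}, so the structure of the argument can be made entirely parallel.
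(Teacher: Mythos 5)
Your proposal is correct and matches the paper's own argument: the paper likewise observes that replacing the degree inequalities~\eqref{ConstraintMatchingDegree} by the equations~\eqref{ConstraintPerfectMatchingDegree} yields a face of $\PmatchOne$, so the corollary follows immediately from Theorem~\ref{TheoremComplete} with the now-redundant inequalities~\eqref{ConstraintMatchingDegree} dropped. No gaps; your verification that the vertices of this face are exactly the perfect-matching points is the same (implicit) face argument the paper relies on.
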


\section{Proofs of main lemmas}
\label{SectionTechnique}

The technique we will use to proof Lemmas~\ref{TheoremDownCombination} and~\ref{TheoremUpCombination} is quite technical.
Hence, we present it in this section in a more abstract fashion (see Figure~\ref{FigureProofTechnique}).
To make the proofs more accessible, we also list the required steps that have to be done.
Consider, a description of a polytope $P$ in terms of linear inequalities for which
we want to show $P = \conv(X)$ for some (implicitly known) $X$.
\begin{enumerate}
\item
  Consider an initial fractional point of $P$ that satisfies a certain inequality with equality.
\item
  Modify the point such that the resulting point lies in a face $F$ of a polytope $Q$ that we have under control.
  \emph{Prove that the modified point lies in $F$ (and hence in $Q$)}.
\item
  Write the modified point as a (special) convex combination of vertices of $F$.
  \emph{Derive structural properties that are implied by the fact that the combination
  uses only points from $F$.}
\item
  Revert the modifications by replacing some of the vertices in the convex combination by others.
  \emph{Prove that the new vertices are contained in $X$. Prove that the modifications revert those of Step~2,
  i.e., that their convex combination equals the initial point.}
\end{enumerate}

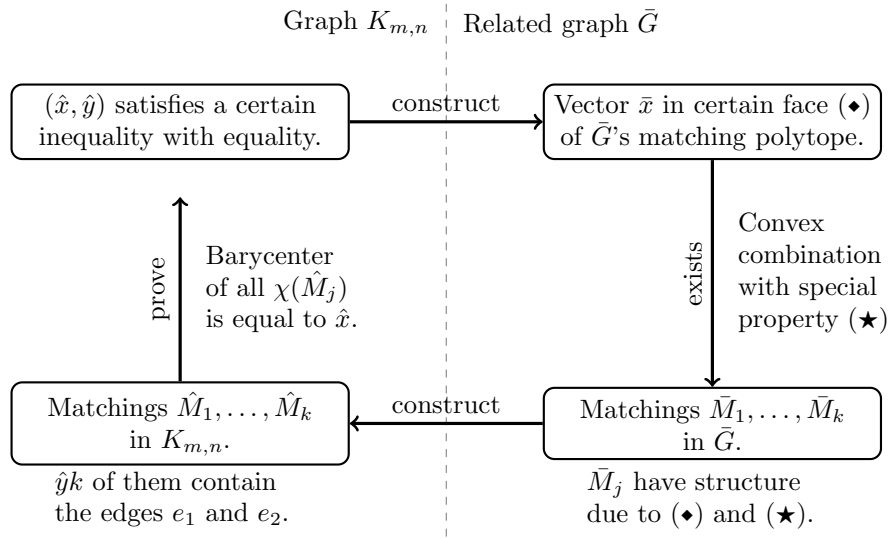
\begin{figure}[htpb]
  \begin{center}
    \begin{tikzpicture}
      \piInput{tikz-quadratic-bipartite-matching-strategy}[%
        main graph, auxiliary graph, hat point, bar point, bar matchings, convex combination,
        bar matchings structure, hat matchings, hat matchings portion, hat matchings barycenter,
        beamer=false
      ]
    \end{tikzpicture}
  \end{center}
  \vspace{-1em}
  \caption{Proof technique for Lemmas~\ref{TheoremDownCombination} and~\ref{TheoremUpCombination}.}
  \label{FigureProofTechnique}
\end{figure}

\subsection{Downward monotonization}
\label{SectionDown}

This section contains the proof of Lemma~\ref{TheoremDownCombination}.
We first introduce relevant objects which are fixed for the rest of this section,
and then present the main proof.
To improve readability, the proofs of several claims are deferred to the end of this section.

Let $(\hat{x},\hat{y}) \in \Q^E \times \Q$ be as stated in the lemma, i.e., it satisfies Constraints~\eqref{ConstraintMatchingNonnegative}, \eqref{ConstraintMatchingDegree}, \eqref{ConstraintMatchingBound}, \eqref{ConstraintMatchingQuadraticGood} and~\eqref{ConstraintMatchingQuadraticDown},
and it satisfies at least one of the Inequalities~\eqref{ConstraintMatchingQuadraticGood} for $i^* \in \setdef{1,2}$
or~\eqref{ConstraintMatchingQuadraticDown} for a set $S^* \in \facetsDown$ with equality.

Let $\bar{G} = (U \dcup W, \bar{E})$ be the graph $K_{m,n}$
with the additional edges $e_u \coloneqq \setdef{u_1,u_2}$ and $e_w \coloneqq \setdef{w_1,w_2}$,
i.e., $\bar{E} \coloneqq E \cup \setdef{e_u, e_w}$.
Define the vector $\bar{x} \in \Q^{\bar{E}}$ as follows (see Figure~\ref{FigureDownGadget}):
\begin{itemize}
\item
  $\bar{x}_e \coloneqq \hat{x}_e$ for all $e \in E \setminus \setdef{e_1,e_2}$.
\item
  $\bar{x}_{e_i} \coloneqq \hat{x}_{e_i} - \hat{y}$ for $i=1,2$.
\item
  $\bar{x}_{e_u} \coloneqq \bar{x}_{e_w} \coloneqq \hat{y}$.
\end{itemize}

\begin{figure}[htpb]
  \begin{center}
    \begin{tikzpicture}
      \piInput{tikz-quadratic-bipartite-matching-graphs}[%
        down gadget
      ]
    \end{tikzpicture}
  \end{center}
  \vspace{-1em}
  \caption{Graph $\bar{G}$ and vector $\bar{x}$ in the proof of Lemma~\ref{TheoremDownCombination}.}
  \label{FigureDownGadget}
\end{figure}
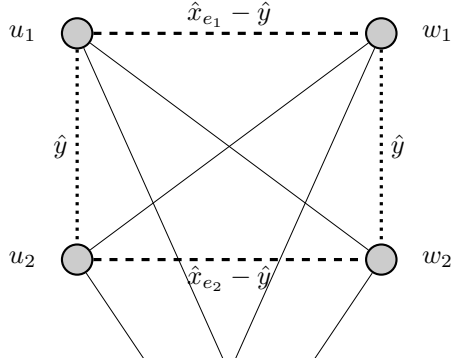

\begin{claim}
  \label{TheoremDownInMatchingPolytope}
  $\bar{x}$ is in the matching polytope of $\bar{G}$.
\end{claim}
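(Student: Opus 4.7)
My plan is to verify directly that $\bar{x}$ satisfies the three families of inequalities describing the matching polytope of $\bar{G}$ according to Edmonds' theorem: nonnegativity, degree constraints, and blossom inequalities. Nonnegativity is an immediate rewrite of the hypotheses: for $e \in E \setminus \setdef{e_1,e_2}$ it is~\eqref{ConstraintMatchingNonnegative}, for $e_i$ it is~\eqref{ConstraintMatchingQuadraticGood}, and for $e_u, e_w$ it is~\eqref{ConstraintMatchingBound}. The degree constraints also require no real work: vertices outside $\Vspecial$ are unaffected, and for $v \in \Vspecial$ the $+\hat{y}$ on the new edge $e_u$ or $e_w$ is exactly cancelled by the $-\hat{y}$ absorbed into $\bar{x}_{e_i}$, so $\bar{x}(\delta_{\bar{G}}(v)) = \hat{x}(\delta(v)) \leq 1$ by~\eqref{ConstraintMatchingDegree}.

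The real work will be the blossom inequalities $\bar{x}(\bar{E}[S]) \leq (|S|-1)/2$ for odd $S \subseteq V$. A short calculation yields the identity
\begin{equation*}
\bar{x}(\bar{E}[S]) = \hat{x}(E[S]) + d(S) \cdot \hat{y},
\end{equation*}
where $d(S)$ depends only on $S \cap \Vspecial$ and enumerating the sixteen possibilities shows $d(S) \in \setdef{-1, 0, 1}$ with $d(S) = 1$ iff $S \cap \Vspecial \in \setdef{\setdef{u_1,u_2}, \setdef{w_1,w_2}}$. When $d(S) \leq 0$, I simply discard the $\hat{y}$ term and invoke the bipartite LP bound $\hat{x}(E[S]) \leq \min(|S \cap U|, |S \cap W|) \leq (|S|-1)/2$ (which follows from~\eqref{ConstraintMatchingDegree} and the fact that $|S|$ is odd), and this case is closed.

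The hard part will be the case $d(S) = 1$; by symmetry I may assume $S \cap \Vspecial = \setdef{u_1,u_2}$ and set $p := |S \cap U|$, $q := |S \cap W|$, so $w_1, w_2 \notin S$ and $p - q$ is a nonzero odd integer. I would split on $p - q$: if $p = q+1$, then $S \in \facetsDown$ and the required inequality is exactly~\eqref{ConstraintMatchingQuadraticDown}; if $|p - q| \geq 3$, the LP bound $\hat{x}(E[S]) \leq \min(p,q)$ together with $\hat{y} \leq 1$ already provides the necessary slack. The delicate subcase is $q = p + 1$: it is \emph{not} covered by $\facetsDown$, and the crude estimate $\hat{x}(E[S]) + \hat{y} \leq p + 1$ overshoots $(|S|-1)/2 = p$ by exactly one. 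To close this gap I plan to use~\eqref{ConstraintMatchingQuadraticGood} to replace $\hat{y}$ by $\hat{x}_{e_1}$ and then exploit the fact that $w_1 \notin S$: the edge $e_1$ is incident to only a single vertex of $S$ (namely $u_1$), so summing the degree constraints~\eqref{ConstraintMatchingDegree} over $S \cap U$ absorbs $e_1$ cleanly and yields $\hat{x}(E[S]) + \hat{x}_{e_1} \leq p$. The symmetric case $S \cap \Vspecial = \setdef{w_1,w_2}$ is then handled identically with the roles of $U$ and $W$ swapped.
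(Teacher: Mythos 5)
Your proof is correct and takes essentially the same route as the paper: nonnegativity and the degree constraints are checked directly, and your analysis of the blossom inequalities — reducing to sets with $S \cap \Vspecial \in \setdef{\setdef{u_1,u_2},\setdef{w_1,w_2}}$ and then splitting on the imbalance of $S$, using $y \leq x_{e_1}$ for the deficient side and $y \leq 1$ plus the crude degree bound otherwise — reproduces precisely the content of Proposition~\ref{TheoremDownInequalities}, which the paper proves separately and then invokes in a proof by contradiction. The difference is purely organizational (you inline the unbalanced-set argument and handle the balanced case by citing hypothesis~\eqref{ConstraintMatchingQuadraticDown} directly, exactly as the paper does).
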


By Claim~\ref{TheoremDownInMatchingPolytope}, and since $\bar{x}$ is rational,
it can be written as a convex combination of characteristic vectors of matchings
using only rational multipliers.
Multiplying with a sufficiently large integer $k$, we obtain that
$k \bar{x} = \sum_{j=1}^k \chi(\bar{M}_j)$
for matchings $\bar{M}_1, \ldots, \bar{M}_k$ in $\bar{G}$, where matchings may occur multiple times.
Let $J_u \coloneqq \setdef{ j \in [k] }[ e_u \in \bar{M}_j ]$ and $J_w \coloneqq \setdef{ j \in [k] }[ e_w \in \bar{M}_j ]$
(using the notation $[k] \coloneqq \setdef{1,2,\ldots,k}$),
and observe that $|J_u| = \hat{y} k = |J_w|$.
We may assume that the convex combination is chosen such that
$|J_u \setminus J_w|$ is minimum.

\begin{claim}
  \label{TheoremDownMinimalCombination}
  The convex combination satisfies $J_u = J_w$.
\end{claim}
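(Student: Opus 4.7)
The plan is to argue by contradiction from the minimality of $|J_u \setminus J_w|$. Assume $J_u \neq J_w$; since $|J_u| = |J_w|$ I can pick indices $j_u \in J_u \setminus J_w$ and $j_w \in J_w \setminus J_u$, so that $\bar{M}_{j_u}$ contains $e_u$ but not $e_w$, and $\bar{M}_{j_w}$ contains $e_w$ but not $e_u$. Form the symmetric difference $N := \bar{M}_{j_u} \triangle \bar{M}_{j_w}$, which decomposes into vertex-disjoint alternating paths and even cycles. I will perform the standard swap along the component $C_u$ of $N$ containing $e_u$, producing $\bar{M}'_{j_u} := \bar{M}_{j_u} \triangle C_u$ and $\bar{M}'_{j_w} := \bar{M}_{j_w} \triangle C_u$. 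These are again matchings in $\bar{G}$ and satisfy $\chi(\bar{M}'_{j_u}) + \chi(\bar{M}'_{j_w}) = \chi(\bar{M}_{j_u}) + \chi(\bar{M}_{j_w})$, so the decomposition of $k\bar{x}$ into characteristic vectors is preserved after the substitution.

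The crux is to show that $e_u$ and $e_w$ lie in different components of $N$. Suppose not; let $C$ be a common component and extract a simple alternating sub-path of length $L$ whose first edge is $e_u \in \bar{M}_{j_u}$ and whose last edge is $e_w \in \bar{M}_{j_w}$. Alternation forces $L$ to be even. On the other hand, the $L-2$ interior edges lie in $K_{m,n}$, because $e_u$ and $e_w$ are the only non-bipartite edges of $\bar{G}$, and each of them switches sides between $U$ and $W$. Since they connect a vertex of $e_u$ (in $U$) to a vertex of $e_w$ (in $W$), the count $L-2$ must be odd, so $L$ is odd, a contradiction. The same parity argument applied to either arc from $e_u$ to $e_w$ also rules out the case that $C$ is a cycle.

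Once $e_u$ and $e_w$ lie in distinct components of $N$, the swap along $C_u$ removes $e_u$ from $\bar{M}_{j_u}$ and inserts it into $\bar{M}_{j_w}$ while $e_w$ is untouched. Thus $\bar{M}'_{j_u}$ contains neither of $e_u, e_w$ and $\bar{M}'_{j_w}$ contains both. Substituting the new pair into the decomposition removes $j_u$ from $J_u$ without placing it in $J_w$ and adds $j_w$ to $J_u$ while keeping it in $J_w$; in particular $|J_u \setminus J_w|$ strictly decreases, which contradicts the minimal choice of the convex combination and forces $J_u = J_w$.

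The main obstacle I anticipate is the parity argument of the middle paragraph. It depends crucially on $\bar{G}$ containing exactly two non-bipartite edges, one inside each colour class of $K_{m,n}$, so that along any alternating path through both of them the interior edges strictly alternate between $U$ and $W$. Everything else, namely constructing the swap, checking that $\bar{M}'_{j_u}$ and $\bar{M}'_{j_w}$ are valid matchings whose characteristic vectors sum correctly, and tracking the strict decrease of $|J_u \setminus J_w|$, is a routine symmetric-difference manipulation.
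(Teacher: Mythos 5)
Your proposal is correct and follows essentially the same route as the paper: pick $j_u \in J_u \setminus J_w$ and $j_w \in J_w \setminus J_u$ (possible since $|J_u| = \hat{y}k = |J_w|$), show via the alternation-versus-bipartiteness parity argument that $e_u$ and $e_w$ lie in different components of the symmetric difference, and then swap along the component containing $e_u$ to strictly decrease $|J_u \setminus J_w|$, contradicting minimality. The only blemish is the ambiguous phrase about edges that ``switch sides'' --- note that $e_u$ and $e_w$ each stay \emph{within} one side of the bipartition, while the interior edges of the path switch sides, which is exactly what forces the interior path from $U$ to $W$ to have odd length --- but the parity conclusion you draw is the correct one.
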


By Claim~\ref{TheoremDownMinimalCombination} we can write $J \coloneqq J_u = J_w$.
We construct matchings $\hat{M}_j$ for $j \in [k]$ that are related to the corresponding $\bar{M}_j$.
To this end, let $C \coloneqq \setdef{e_1, e_2, e_u, e_w}$ and define $\hat{M}_j \coloneqq \bar{M}_j \Delta C$ for
all $j \in J$ and $\hat{M}_j \coloneqq \bar{M}_j$ for all $j \in [k] \setminus J$.
All $\hat{M}_j$ are matchings in $\bar{G}$ since for all $j \in J$,
the matchings $\bar{M}_j$ contain both edges $e_u$ and $e_w$.
In fact, none of the matchings $\hat{M}_j$ contains these edges, and hence they are even matchings in $K_{m,n}$.
In the following claim we exploit this property and consider the vectors $\chi(\hat{M}_j)$ with entries indexed by edges in $E$.

\begin{claim}
  \label{TheoremDownBarycenter}
  We have
  $\hat{x} = \frac{1}{k} \sum_{j=1}^k \chi(\hat{M_j})$.
\end{claim}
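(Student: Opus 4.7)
The plan is to verify the claimed identity coordinate by coordinate on $E$, splitting the edge set into the ``ordinary'' edges $e \in E \setminus \setdef{e_1,e_2}$ and the two special edges $e_1,e_2$. Both cases will reduce to elementary counting once one exploits two facts that are already in hand: $C \cap E = \setdef{e_1,e_2}$, and $J_u = J_w = J$ with $|J| = \hat{y} k$ (combining the definition of $J_u, J_w$ with Claim~\ref{TheoremDownMinimalCombination}).

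For $e \in E \setminus \setdef{e_1,e_2}$, the edge lies outside $C$, so taking the symmetric difference with $C$ does not alter its membership: $\chi(\bar{M}_j \Delta C)_e = \chi(\bar{M}_j)_e$ for every $j$, regardless of whether $j \in J$. Summing over $j$ and dividing by $k$ then yields $\frac{1}{k}\sum_{j=1}^k \chi(\hat{M}_j)_e = \bar{x}_e = \hat{x}_e$, the last equality being the definition of $\bar{x}$ on these coordinates.

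For $e = e_i$ with $i \in \setdef{1,2}$, the crucial observation is that whenever $j \in J$ the matching $\bar{M}_j$ contains $e_u = \setdef{u_1,u_2}$, which already covers the endnodes $u_1, u_2$ that $e_1$ and $e_2$ share with it; since $\bar{M}_j$ is a matching, neither $e_1$ nor $e_2$ can belong to it. Hence the symmetric difference with $C$ strictly \emph{adds} $e_i$ to $\bar{M}_j$ for every $j \in J$, while $\chi(\hat{M}_j)_{e_i} = \chi(\bar{M}_j)_{e_i}$ for $j \notin J$. I would then compute
\[
  \sum_{j=1}^k \chi(\hat{M}_j)_{e_i} \;=\; |J| + \sum_{j \notin J} \chi(\bar{M}_j)_{e_i} \;=\; |J| + \sum_{j=1}^k \chi(\bar{M}_j)_{e_i} \;=\; \hat{y} k + k \bar{x}_{e_i} \;=\; k \hat{x}_{e_i},
\]
using $|J| = \hat{y} k$ and $\bar{x}_{e_i} = \hat{x}_{e_i} - \hat{y}$. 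Dividing by $k$ gives the $e_i$-coordinate of the claim.

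The step that carries the real content is the vanishing $\chi(\bar{M}_j)_{e_i} = 0$ for $j \in J$: it is the only place where the matching property of $\bar{M}_j$ is invoked, and it is precisely what makes the assignment $\hat{M}_j := \bar{M}_j \Delta C$ both well-defined as a matching and numerically correct on the $e_1, e_2$ coordinates. Beyond that, the argument is a mechanical rearrangement of sums and causes no difficulty.
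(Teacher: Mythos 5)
Your proof is correct and follows essentially the same route as the paper: it compares $\sum_j \chi(\hat{M}_j)$ with $\sum_j \chi(\bar{M}_j) = k\bar{x}$ coordinate by coordinate, using that the symmetric difference with $C$ only affects edges of $C$ and that $e_1,e_2 \notin \bar{M}_j$ for $j \in J$ (since $e_u, e_w \in \bar{M}_j$), then matches the result against the definition of $\bar{x}$. The paper packages the same computation as the difference vector $d = \sum_j (\chi(\hat{M}_j) - \chi(\bar{M}_j))$, so there is no substantive difference.
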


Together with $\hat{y}k = |J|$, Claim~\ref{TheoremDownBarycenter} yields
\begin{align*}
  (\hat{x},\hat{y}) &= \frac{1}{k} \left( \sum_{j \in J} (\chi(\hat{M}_j),1) + \sum_{j \in [k] \setminus J} (\chi(\hat{M}_j),0) \right),
\end{align*}
and it remains to prove that all participating vectors are actually feasible for $\PmatchOne$.
For the first sum, this is easy to see, since for all $j \in J$, 
the matchings $\hat{M}_j$ contain both edges $e_1$ and $e_2$ by construction.
The matchings in the second sum are considered in two claims, depending on $(\hat{x},\hat{y})$.

\begin{claim}
  \label{TheoremDownTightGood}
  Let $(\hat{x},\hat{y})$ satisfy Inequality~\eqref{ConstraintMatchingQuadraticGood} for some $i^* \in \setdef{1,2}$ with equality.
  Then $\hat{M}_j$ contains at most one of the two edges $e_1$, $e_2$ for all $j \in [k] \setminus J$.
\end{claim}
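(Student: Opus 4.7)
The plan is to trace the tightness hypothesis directly through the definition of $\bar{x}$ and its matching decomposition. The tightness of Inequality~\eqref{ConstraintMatchingQuadraticGood} for $\hat{i}$ means $\hat{x}_{e_{\hat{i}}} = \hat{y}$, and by construction $\bar{x}_{e_{\hat{i}}} := \hat{x}_{e_{\hat{i}}} - \hat{y}$, so $\bar{x}_{e_{\hat{i}}} = 0$.

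Next I would exploit the decomposition $k\bar{x} = \sum_{j=1}^k \chi(\bar{M}_j)$ from Claim~\ref{TheoremDownInMatchingPolytope}. Since each $\chi(\bar{M}_j)$ is a nonnegative $\{0,1\}$-vector, the coordinate $\bar{x}_{e_{\hat{i}}}$ being zero forces $e_{\hat{i}} \notin \bar{M}_j$ for every $j \in [k]$.

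Finally, for $j \in [k] \setminus J$ the construction defines $\hat{M}_j := \bar{M}_j$, so $e_{\hat{i}} \notin \hat{M}_j$ as well. Consequently $\hat{M}_j$ cannot contain both $e_1$ and $e_2$, which is exactly the claim. There is no real obstacle here: the statement is essentially an accounting consequence of the gadget construction together with a nonnegativity argument, and the only thing worth spelling out is that the vanishing of a coordinate of a nonnegative integer combination of $\{0,1\}$-vectors forces every summand to vanish at that coordinate.
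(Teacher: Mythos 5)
Your proof is correct and follows the same route as the paper: tightness gives $\bar{x}_{e_{\hat{i}}} = 0$, nonnegativity of the decomposition forces $e_{\hat{i}} \notin \bar{M}_j$ for all $j$, and $\hat{M}_j = \bar{M}_j$ for $j \in [k] \setminus J$ finishes the argument. You merely spell out two steps (the vanishing-coordinate argument and the identification $\hat{M}_j = \bar{M}_j$) that the paper leaves implicit, which is fine.
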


\begin{claim}
  \label{TheoremDownTightDown}
  Let $(\hat{x},\hat{y})$ satisfy Inequality~\eqref{ConstraintMatchingQuadraticDown} for some $S^* \in \facetsDown$ with equality.
  Then $\hat{M}_j$ contains at most one of the two edges $e_1$, $e_2$ for all $j \in [k] \setminus J$.
\end{claim}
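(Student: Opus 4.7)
The plan is to exploit the tightness of Inequality~\eqref{ConstraintMatchingQuadraticDown} for $\hat{S}$ by translating it into a Blossom-style identity for $\bar{x}$ inside $\bar{G}[\hat{S}]$, and then to apply a parity/averaging argument over the matchings $\bar{M}_1,\ldots,\bar{M}_k$ appearing in the convex combination of $\bar{x}$. By the symmetry between the two cases in the definition of $\facetsDown$, I would assume without loss of generality that $\hat{S} \cap \Vspecial = \setdef{u_1,u_2}$ (so $w_1, w_2 \notin \hat{S}$); the other case follows after swapping the roles of $U$ and $W$.

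First, I would evaluate $\bar{x}(\bar{E}[\hat{S}])$. Since $w_1,w_2 \notin \hat{S}$, neither $e_1$ nor $e_2$ lies in $E[\hat{S}]$, so the perturbations $\bar{x}_{e_i} = \hat{x}_{e_i}-\hat{y}$ do not contribute; similarly $e_w \notin \bar{E}[\hat{S}]$, while $e_u \in \bar{E}[\hat{S}]$ because both $u_1,u_2 \in \hat{S}$. This yields
\[
\bar{x}(\bar{E}[\hat{S}]) \;=\; \hat{x}(E[\hat{S}]) + \bar{x}_{e_u} \;=\; \hat{x}(E[\hat{S}]) + \hat{y} \;=\; \tfrac{1}{2}(|\hat{S}|-1),
\]
using tightness of~\eqref{ConstraintMatchingQuadraticDown}. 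On the other hand, each $\bar{M}_j$ is a matching and $|\hat{S}|$ is odd, so the standard parity bound gives $|\bar{M}_j \cap \bar{E}[\hat{S}]| \le (|\hat{S}|-1)/2$. Summing these bounds over $j\in[k]$ and matching the total against $k\cdot\bar{x}(\bar{E}[\hat{S}]) = k(|\hat{S}|-1)/2$ then forces equality in \emph{every} index $j$.

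The final step is the contradiction. Suppose some $j^* \in [k]\setminus J$ were such that $\hat{M}_{j^*} = \bar{M}_{j^*}$ contained both $e_1$ and $e_2$. Since $j^* \notin J$, also $e_u \notin \bar{M}_{j^*}$, so the nodes $u_1,u_2$ of $\hat{S}$ are covered in $\bar{M}_{j^*}$ exclusively by the edges $e_1,e_2$, which lie outside $\bar{E}[\hat{S}]$. Every edge of $\bar{M}_{j^*} \cap \bar{E}[\hat{S}]$ must therefore lie inside $E[\hat{S}\setminus\setdef{u_1,u_2}]$, a vertex set of odd cardinality $|\hat{S}|-2$; hence at most $(|\hat{S}|-3)/2$ such edges can be present, strictly fewer than the $(|\hat{S}|-1)/2$ required by the equality established above. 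I expect the only subtle point to be the bookkeeping of which of the four special edges $e_1,e_2,e_u,e_w$ belong to $\bar{E}[\hat{S}]$ (and with what $\bar{x}$-weight); once that is set up correctly, the remainder is a parity/averaging argument in the spirit of Schrijver's direct proof of the Blossom Inequalities.
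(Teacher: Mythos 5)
Your proposal is correct and follows essentially the same route as the paper: tightness of Inequality~\eqref{ConstraintMatchingQuadraticDown} transfers to the Blossom Inequality for $\bar{x}$ on $\hat{S}$ in $\bar{G}$, the convex-combination (averaging) argument forces that inequality to be tight for every $\bar{M}_j$, and then the near-perfect internal matching of $\hat{S}$ forbids both $e_1$ and $e_2$ (which cross $\delta(\hat{S})$) from lying in $\bar{M}_j = \hat{M}_j$ for $j \notin J$. The paper phrases the last step directly as $|\bar{M}_j \cap \setdef{e_1,e_2}| \leq |\bar{M}_j \cap \delta(\hat{S})| \leq 1$ rather than by contradiction, but this is the same counting argument.
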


Since, by the assumptions of Lemma~\ref{TheoremDownCombination}, the premise of at least one of the Claims~\ref{TheoremDownTightGood}
or~\ref{TheoremDownTightDown} is satisfied, $(\hat{x},\hat{y})$ is indeed a convex combination of
vertices of $\PmatchOne$, which concludes the proof of Lemma~\ref{TheoremDownCombination}.
\hfill $\qed$

\bigskip

Before actually proving the claims of this section, we list some implied valid inequalities that will turn out to be useful.

\begin{proposition}
  \label{TheoremDownInequalities}
  Let $(\hat{x},\hat{y})$ satisfy
  Constraints~\eqref{ConstraintMatchingNonnegative}, \eqref{ConstraintMatchingDegree},\eqref{ConstraintMatchingBound},
  \eqref{ConstraintMatchingQuadraticGood} and~\eqref{ConstraintMatchingQuadraticDown},
  and define
  $$\inequalitiesDown\coloneqq \{ S \subseteq U \dcup W : \text{$|S|$ is odd and $S \cap \Vspecial \in \setdef{ \setdef{u_1,u_2}, \setdef{w_1,w_2} }$} \}.$$
  Then $(\hat{x},\hat{y})$ satisfies
  $x(E[S]) + y \leq \frac{1}{2}(|S|-1)$ (i.e., Inequality~\eqref{ConstraintMatchingQuadraticDown}) for all $S \in \inequalitiesDown \supseteq \facetsDown$.
\end{proposition}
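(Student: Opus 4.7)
The plan is to show that any $S \in \inequalitiesDown$ which is not already in $\facetsDown$ admits a combinatorial slack: either the bipartite degree constraints alone force $x(E[S])$ far enough below $(|S|-1)/2$ that $y \leq 1$ is harmless, or one can absorb $y$ into a leftover term of a degree constraint via $y \leq x_{e_i}$.

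By the $U$--$W$ symmetry of the configuration, I would assume $S \cap \Vspecial = \setdef{u_1,u_2}$ (the case $\setdef{w_1,w_2}$ is completely analogous) and abbreviate $a := |S \cap U|$ and $b := |S \cap W|$. Then $a \geq 2$, $a+b$ is odd, and $S \in \facetsDown$ precisely when $a = b+1$, in which case \eqref{ConstraintMatchingQuadraticDown} is part of the hypothesis. The remaining cases are $a \geq b+3$ and $a \leq b-1$.

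For $a \geq b+3$, summing \eqref{ConstraintMatchingDegree} over the $b$ nodes in $S \cap W$ and using nonnegativity yields $x(E[S]) \leq b$; together with $y \leq 1$ from \eqref{ConstraintMatchingBound} this gives $x(E[S]) + y \leq b+1 \leq (a+b-1)/2$. For $a \leq b-1$, summing \eqref{ConstraintMatchingDegree} over $S \cap U$ yields instead
\begin{equation*}
  x(E[S]) \;+\; \sum_{u \in S \cap U} x\bigl(\setdef{u}\times(W\setminus S)\bigr) \;\leq\; a.
\end{equation*}
Since $w_1 \notin S$, the sum on the left contains $x_{e_1} = x_{u_1,w_1}$ as one of its nonnegative summands, so \eqref{ConstraintMatchingQuadraticGood} gives $y \leq x_{e_1} \leq \sum_{u \in S \cap U} x(\setdef{u}\times(W\setminus S))$. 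Substituting, $x(E[S]) + y \leq a$, which is bounded by $(a+b-1)/2$: equality when $a = b-1$, strict when $a \leq b-3$, with the intermediate value $a=b$ excluded by parity.

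The main technical point is the tight subcase $a = b-1$: one cannot simply combine the weak bounds $y \leq 1$ and $x(E[S]) \leq a$, because that would give $a+1 > (a+b-1)/2$. The argument goes through only because the single slack edge $e_1$ leaving $S$ at $u_1$ appears \emph{both} in the degree inequality at $u_1$ and as an upper bound on $y$, and the two can therefore be merged without double-counting. Once this subcase is handled, the proposition follows from the above case distinction together with the symmetric argument for $S \cap \Vspecial = \setdef{w_1,w_2}$.
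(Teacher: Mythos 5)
Your proof is correct and follows essentially the same route as the paper: for $a \geq b+3$ the paper also sums the degree constraints over the smaller side $S \cap W$ and adds $y \leq 1$, and for $a \leq b-1$ it sums over $S \cap U$, keeps the outgoing term $x_{e_1}$ (which lies in $\delta(S)$ since $w_1 \notin S$), and absorbs $y$ via $y \leq x_{e_1}$, exactly as you do. Your explicit remark about the tight subcase $a = b-1$ is a nice clarification but not a different argument.
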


\begin{proof}[Proof of Proposition~\ref{TheoremDownInequalities}]
  We only have to prove the statement for $S \in \inequalitiesDown \setminus \facetsDown$.
  W.l.o.g.\ we assume that $S \cap \Vspecial = \setdef{u_1,u_2}$, since the other case is similar.
  Let $U' \coloneqq S \cap U$ and $W' \coloneqq S \cap W$, and remember that we assume $|U'| \neq |W'| + 1$.

  If $|U'| < |W'| + 1$, we have $|U'| \leq |W'| - 1$ since $|S|$ is odd.
  Then the sum of $\hat{x}(\delta(u)) \leq 1$ for all $u \in U'$ plus the sum of $-\hat{x}_e \leq 0$ for all $e \in \delta(U') \setminus (E[S] \cup \setdef{e_1})$ reads $\hat{x}(E[S]) + \hat{x}_{e_1} \leq |U'| \leq \frac{1}{2}(|S| - 1)$.
  Adding $\hat{y} \leq \hat{x}_{e_1}$ yields the desired inequality.
  
  If $|U'| > |W'| + 1$, we have $|U'| \geq |W'| + 3$ since $|S|$ is odd.
  Then the sum of $\hat{x}(\delta(w)) \leq 1$ for all $w \in W'$ plus the sum of $-\hat{x}_e \leq 0$ for all $e \in \delta(W') \setminus E[S]$ reads $\hat{x}(E[S]) \leq |W'| \leq \frac{1}{2}(|S|-1)$.
  Adding $\hat{y} \leq 1$ yields the desired inequality, which concludes the proof.
\end{proof}

\begin{proof}[Proof of Claim~\ref{TheoremDownInMatchingPolytope}]
  From $\hat{x} \geq \zerovec$ and~\eqref{ConstraintMatchingQuadraticGood} we obtain that also $\bar{x} \geq \zerovec$.
  By construction and since $\hat{x}$ satisfies~\eqref{ConstraintMatchingDegree}, $\bar{x}(\delta(v)) \leq 1$ for every node $v \in U \dcup W$.
  
  Suppose, for the sake of contradiction, that $\bar{x}(E[S]) > \frac{1}{2}(|S|-1)$ for some odd-cardinality set $S \subseteq U \dcup W$.
  From $\hat{x}(E[S]) \leq \frac{1}{2}(|S|-1)$ we deduce $\bar{x}(E[S]) > \hat{x}(E[S])$,
  i.e., $E[S]$ contains at least one of the edges $\{e_u,e_w\}$, since only for these edges the $\bar{x}$-value is strictly greater than
  the corresponding $\hat{x}$-value.
  Observe that $E[S]$ also must contain at most one of these edges, since otherwise it would also contain the two edges
  $e_1,e_2$, which yielded $\bar{x}(E[S]) = \hat{x}(E[S]) \leq \frac{1}{2}(|S|-1)$.
  Hence, we have $S \in \inequalitiesDown$,
  and thus $\bar{x}(E[S]) = \hat{x}(E[S]) + \hat{y} \leq \frac{1}{2}(|S|-1)$ by Proposition~\ref{TheoremDownInequalities}.
  This proves that $\bar{x}$ is in the matching polytope of $\bar{G}$.
\end{proof}

\begin{proof}[Proof of Claim~\ref{TheoremDownMinimalCombination}]
  Suppose, for the sake of contradiction, that $J_u \neq J_w$.
  Let $j_u \in J_u \setminus J_w$ and let $j_w \in J_w \setminus J_u$, which exist due to $|J_u| = |J_w|$.
  Consider the matchings $\bar{M}_{j_u}$ and $\bar{M}_{j_w}$ and note that
  $\bar{M}_{j_u} \Delta \bar{M}_{j_w}$ contains both edges $e_u$ and $e_w$.
  Let $C_u$ and $C_w$ be (the edge sets of) the connected components of $\bar{M}_{j_u} \Delta \bar{M}_{j_w}$
  that contain $e_u$ and $e_w$, respectively.
  
  We claim that $C_u$ and $C_w$ are not the same component.
  Assume, for the sake of contradiction, that $C \coloneqq C_u = C_w$ is a connected component
  (i.e., an alternating cycle or path) of $M_{j_u} \Delta M_{j_w}$ that contains $e_u$ and $e_w$.
  Consider a path $P \subseteq C \setminus \setdef{e_u,e_w}$ that connects an endnode of $e_u$ with an endnode of $e_w$
  (if $C$ is an alternating cycle, there exist two such paths and we pick one arbitrarily).
  On the one hand, $(U \dcup W, \bar{E} \setminus \setdef{e_u,e_w})$ is bipartite and thus $P$ must have odd length.
  On the other hand, $e_u \in \bar{M}_{j_u}$ and $e_w \in \bar{M}_{j_w}$,
  and hence $P$ must have even length, yielding a contradiction.

  Define two new matchings $\bar{M}'_{j_u} \coloneqq \bar{M}_{j_u} \Delta C_u$ and $\bar{M}'_{j_w} \coloneqq \bar{M}_{j_w} \Delta C_u$,
  and note that $\chi(\bar{M}_{j_u}) + \chi(\bar{M}_{j_w}) = \chi(\bar{M}'_{j_u}) + \chi(\bar{M}'_{j_w})$,
  i.e., we can replace $\bar{M}_{j_u}$ and $\bar{M}_{j_w}$ by $\bar{M}'_{j_u}$ and $\bar{M}'_{j_w}$ in the convex combination.
  The fact that $\bar{M}'_{j_u}$ contains none of the two edges $e_u$ and $e_w$, while
  $\bar{M}'_{j_w}$ contains both, yields a contradiction to the assumption that $|J_u \setminus J_w|$ is minimum.
\end{proof}

\begin{proof}[Proof of Claim~\ref{TheoremDownBarycenter}]
  Consider the vector
  $d \coloneqq \sum_{j=1}^k (\chi(\hat{M}_j) - \chi(\bar{M}_j))$.
  By construction of the $\hat{M}_j$, we have $d_e = 0$ for all $e \notin C$
  and $d_{e_1} = d_{e_2} = -d_{e_u} = -d_{e_w} = |J| = k \hat{y}$.
  A simple comparison with the construction of $\bar{x}$ from $\hat{x}$ concludes the proof.
\end{proof}

\begin{proof}[Proof of Claim~\ref{TheoremDownTightGood}]
  From $\hat{x}_{e_{i^*}} = \hat{y}$ we obtain that $\bar{x}_{e_{i^*}} = 0$, and thus $e_{i^*} \notin \bar{M}_j$.
  Since $j \notin J$, we have $\hat{M}_j = \bar{M}_j$, which concludes the proof.
\end{proof}

\begin{proof}[Proof of Claim~\ref{TheoremDownTightDown}]
  From $\hat{x}(E[S^*]) + \hat{y} = \frac{1}{2}(|S^*|-1)$ and the construction of $\bar{x}$
  we obtain $\bar{x}(E[S^*]) = \frac{1}{2}(|S^*|-1)$.
  But since $x(E[S^*]) \leq \frac{1}{2}(|S^*|-1)$ is valid for all $\chi(\bar{M}_j)$, equality must hold
  for all $j \in [k]$. 
  Thus, $|\bar{M}_j \cap \setdef{e_1,e_2}| \leq |\bar{M}_j \cap \delta(S^*)| \leq 1$
  for all $j$, which concludes the proof.
\end{proof}

\subsection{Upward monotonization}
\label{SectionUp}

This section contains the proof of Lemma~\ref{TheoremUpCombination}.
The setup is similar to that of the previous section, starting with the relevant objects.

Let $(\hat{x},\hat{y}) \in \Q^E \times \Q$ be as stated in the lemma, i.e., it satisfies Constraints~\eqref{ConstraintMatchingNonnegative}, \eqref{ConstraintMatchingDegree}, \eqref{ConstraintMatchingBound}, and~\eqref{ConstraintMatchingQuadraticUp}, and it satisfies at least one of the Inequalities~\eqref{ConstraintMatchingQuadraticUp} for a set $S^* \in \facetsUp$ with equality.

Let $\bar{G} = (\bar{V}, \bar{E})$ be the graph $K_{m,n}$
with two additional nodes $a$ and $b$, i.e., $\bar{V} = U \dcup W \dcup \setdef{a,b}$,
and edge set $\bar{E} \coloneqq E \cup \setdef{ \setdef{a,b}, \setdef{u_1,a}, \setdef{u_2,b}, \setdef{w_1,b}, \setdef{w_2,a} }$.
Define two vectors $\tilde{x},\bar{x} \in \R^{\bar{E}}$ as follows (see Figure~\ref{FigureUpGadget}):
\begin{itemize}
\item
  $\tilde{x}_e \coloneqq \hat{x}_e$ and $\bar{x}_e \coloneqq \hat{x}_e$ for all $e \in E \setminus \setdef{e_1,e_2}$.
\item
  $\tilde{x}_{e_i} \coloneqq \hat{x}_{e_i}$ and $\bar{x}_{e_i} \coloneqq \frac{1}{2}\hat{y}$ for $i=1,2$.
\item 
  $\tilde{x}_{\setdef{a,b}} \coloneqq 1$ and $\bar{x}_{\setdef{a,b}} \coloneqq 1 - \hat{x}_{e_1} - \hat{x}_{e_2} + \hat{y}$.
\item
  $\tilde{x}_{\setdef{u_1,a}} \coloneqq \tilde{x}_{\setdef{w_1,b}} \coloneqq 0$ and $\bar{x}_{\setdef{u_1,a}} \coloneqq \bar{x}_{\setdef{w_1,b}} \coloneqq \hat{x}_{e_1} - \frac{1}{2}\hat{y}$.
\item
  $\tilde{x}_{\setdef{u_2,b}} \coloneqq \tilde{x}_{\setdef{w_2,a}} \coloneqq 0$ and $\bar{x}_{\setdef{u_2,b}} \coloneqq \bar{x}_{\setdef{w_2,a}} \coloneqq \hat{x}_{e_2} - \frac{1}{2}\hat{y}$.
\end{itemize}
The vector $\tilde{x}$ is essentially a trivial lifting of $\hat{x}$ into $\R^{\bar{E}}$ by setting
the value for edge $\setdef{a,b}$ to $1$ and the values for the other new edges to $0$.
It is easy to see that $\tilde{x}$ is in the matching polytope of $\bar{G}$.

The vector $\bar{x}$ is a modification of $\tilde{x}$ on the edges of
the following two cycles:
\begin{align*}
  C_1 &\coloneqq \setdef{\setdef{u_1,a}, \setdef{a,b}, \setdef{b,w_1}, \setdef{w_1,u_1} }, \\
  C_2 &\coloneqq \setdef{\setdef{u_2,b}, \setdef{b,a}, \setdef{a,w_2}, \setdef{w_2,u_2} }.
\end{align*}
The values on the two opposite (in $C_1$) edges $\setdef{u_1,w_1}$ and $\setdef{a,b}$ are decreased
by $\hat{x}_{e_1} - \frac{1}{2}\hat{y}$, and increased by the same value on the other two edges.
Similarly, the values on the edges $\setdef{u_2,w_2}$ and $\setdef{a,b}$ are decreased by $\hat{x}_{e_2} - \frac{1}{2}\hat{y}$,
while they are increased by the same value on the other two edges of $C_2$.

\begin{figure}[htpb]
  \begin{center}
    \begin{tikzpicture}
      \piInput{tikz-quadratic-bipartite-matching-graphs}[%
        up gadget
      ]
    \end{tikzpicture}
    \hspace{5mm}
    \begin{tikzpicture}
      \piInput{tikz-quadratic-bipartite-matching-graphs}[%
        up cycles
      ]
    \end{tikzpicture}
  \end{center}
  \vspace{-1em}
  \caption{Graph $\bar{G}$, vector $\bar{x}$ and cycles $C_1$ and $C_2$ in the proof of Lemma~\ref{TheoremUpCombination}.}
  \label{FigureUpGadget}
\end{figure}
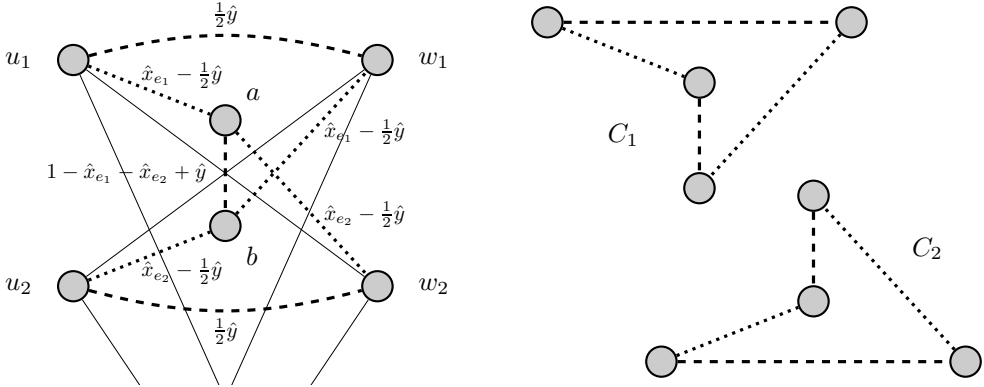

\begin{claim}
  \label{TheoremUpInMatchingPolytope}
  $\bar{x}$ is in the matching polytope of $\bar{G}$.
\end{claim}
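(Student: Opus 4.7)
The plan is to verify that $\bar{x}$ satisfies each of the three families of inequalities in Edmonds' description of the matching polytope of $\bar{G}$: nonnegativity, the degree constraints, and the odd-set blossom inequalities. All three rest on an auxiliary consequence of the tight-inequality hypothesis that I establish first: $\hat{y} \leq \hat{x}_{e_1}$ and $\hat{y} \leq \hat{x}_{e_2}$. To derive these from tightness of \eqref{ConstraintMatchingQuadraticUp} for some $S^* \in \facetsUp$ with $S^* \cap \Vspecial = \setdef{u_1, w_2}$ (the case $\setdef{u_2, w_1}$ is symmetric), I would sum \eqref{ConstraintMatchingDegree} over $w \in S^* \cap W$; since $e_2$ joins $w_2 \in S^*$ to $u_2 \notin S^*$, it appears among the $S^*$-leaving edges, yielding $\hat{x}(E[S^*]) + \hat{x}_{e_2} \leq \tfrac{1}{2}|S^*|$, which combined with the tight equation gives $\hat{y} \leq \hat{x}_{e_1}$. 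The analogous summation over $u \in S^* \cap U$, with $e_1$ as the leaving edge, gives $\hat{y} \leq \hat{x}_{e_2}$.

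Nonnegativity and the degree inequalities are then immediate. The gadget edges carry values $\hat{x}_{e_i} - \tfrac{1}{2}\hat{y} \geq \tfrac{1}{2}\hat{y} \geq 0$, $\bar{x}_{e_i} = \tfrac{1}{2}\hat{y} \geq 0$, and $\bar{x}_{\setdef{a,b}} = 1 - \hat{x}_{e_1} - \hat{x}_{e_2} + \hat{y} \geq 0$ by \eqref{ConstraintMatchingQuadraticUp} for $S = \setdef{u_1, w_2} \in \facetsUp$ combined with $\hat{x}_{\setdef{u_1,w_2}} \geq 0$. At each $v \in U \dcup W$ the value on the incident gadget edge exactly compensates the decrease of $\bar{x}_{e_i}$ relative to $\hat{x}_{e_i}$, so the $\bar{x}$-degree equals $\hat{x}(\delta(v)) \leq 1$; at $a$ and $b$ the three incident gadget values telescope to $1$.

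The substantive work is the blossom verification $\bar{x}(E[T]) \leq \tfrac{1}{2}(|T|-1)$ for every odd $T \subseteq \bar{V}$. Write $T' := T \cap (U \dcup W)$, $\alpha := \hat{x}_{e_1} - \tfrac{1}{2}\hat{y}$, $\beta := \hat{x}_{e_2} - \tfrac{1}{2}\hat{y}$, and split on $T \cap \setdef{a,b}$, using the $a \leftrightarrow b$ symmetry of $\bar{x}$ to halve the mixed cases. When $T \subseteq U \dcup W$, the bounds $\hat{y} \leq \hat{x}_{e_i}$ make the $\bar{x}$-adjustments on $e_1, e_2$ nonpositive, so $\bar{x}(E[T]) \leq \hat{x}(E[T]) \leq \lfloor |T|/2 \rfloor$, the latter following from \eqref{ConstraintMatchingDegree} summed over the smaller side of $T$ in the bipartite $K_{m,n}$. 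When $\setdef{a,b} \subseteq T$, the $1 - \alpha - \beta$ contribution of $\setdef{a,b}$ telescopes against the incident gadget values and the $e_1, e_2$ adjustments, so that $\bar{x}(E[T]) \leq \hat{x}(E[T']) + 1 \leq \tfrac{1}{2}(|T|-1)$. In the remaining case $T \cap \setdef{a,b} = \setdef{a}$, one splits further by the membership pattern of $u_1, u_2, w_1, w_2$ in $T'$: the target inequality is exactly \eqref{ConstraintMatchingQuadraticUp} for $S = T' \in \facetsUp$ when $T' \cap \Vspecial = \setdef{u_1, w_2}$ and $|T' \cap U| = |T' \cap W|$; otherwise either $e_1$ or $e_2$ is available as a leaving edge of $T' \cap U$ or $T' \cap W$, and combining the sharpened degree bound with $\alpha + \beta \leq 1$ (i.e., \eqref{ConstraintMatchingQuadraticUp} for $S = \setdef{u_1, w_2}$) and the parity gap $||T' \cap U| - |T' \cap W|| \geq 2$ forced by oddness of $|T|$ yields the required slack.

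The main obstacle is this mixed-case analysis: every membership pattern of $u_1, u_2, w_1, w_2$ in $T'$ must be matched either to a specific instance of \eqref{ConstraintMatchingQuadraticUp} or to a careful combination of degree bounds and the parity gap, with bookkeeping of when $T'$ itself lies in $\facetsUp$. The symmetries $u_1 \leftrightarrow u_2$, $w_1 \leftrightarrow w_2$, and $a \leftrightarrow b$ collapse the list to a short enumeration.
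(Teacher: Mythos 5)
Your proposal is correct and follows essentially the same route as the paper: you derive the same auxiliary inequalities (the bounds $\hat{y} \leq \hat{x}_{e_1}$, $\hat{y} \leq \hat{x}_{e_2}$ from the tight inequality, the sharpened degree/odd-set bounds with $e_i$ as a leaving edge, $\hat{x}_{e_1}+\hat{x}_{e_2}-\hat{y} \leq 1$, and the lemma's hypothesis \eqref{ConstraintMatchingQuadraticUp} for balanced sets in $\facetsUp$) and then verify Edmonds' description directly. The only organizational difference is that the paper first compares $\bar{x}$ with the trivial lifting $\tilde{x}$, which prunes the blossom analysis to four intersection patterns each containing exactly one of $a,b$, whereas you enumerate $T \cap \setdef{a,b}$ directly; your two extra cases ($\emptyset$ and $\setdef{a,b}$) are dispatched correctly, so this costs only a slightly longer enumeration. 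One sentence should be tightened: oddness of $|T|$ with exactly one of $a,b$ in $T$ only forces $\bigl||T'\cap U|-|T'\cap W|\bigr|$ to be even, not $\geq 2$; in the balanced subcases of your ``otherwise'' branch (e.g.\ $T'\cap\Vspecial = \setdef{u_1}$ with equal sides) the sharpened degree bound with $e_i$ leaving $T'$ already closes the argument, so your toolkit suffices, but the parity-gap claim as stated is not literally true.
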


By Claim~\ref{TheoremUpInMatchingPolytope}, and since $\bar{x}$ is rational,
it can be written as a convex combination of characteristic vectors of matchings
using only rational multipliers.
Multiplying with a sufficiently large integer $k$, we obtain
$k \bar{x} = \sum_{j=1}^k \chi(\bar{M}_j)$
for matchings $\bar{M}_1, \ldots, \bar{M}_k$ in $\bar{G}$, where matchings may occur multiple times.
We define the index sets
\begin{alignat*}{6}
  J_u  &\coloneqq \setdef{ j \in [k] }[ \setdef{u_1,a},\setdef{u_2,b} \in \bar{M}_j ],         \quad&  J_w &\coloneqq \setdef{ j \in [k] }[ \setdef{w_1,b},\setdef{w_2,a} \in \bar{M}_j ], \\
  J_1  &\coloneqq \setdef{ j \in [k] }[ \setdef{u_1,a}, \setdef{w_1,b} \in \bar{M}_j ],        \quad&  J_2 &\coloneqq \setdef{ j \in [k] }[ \setdef{u_2,b}, \setdef{w_2,a} \in \bar{M}_j ], \\
  J'_1 &\coloneqq \setdef{ j \in [k] }[ \setdef{u_2,w_2} \in \bar{M}_j ],                      \quad& J'_2 &\coloneqq \setdef{ j \in [k] }[ \setdef{u_1,w_1} \in \bar{M}_j ] \text{ and } \\
  N    &\coloneqq \setdef{ j \in [k] }[ \setdef{a,b} \in \bar{M}_j ].
\end{alignat*}
We assume that the convex combination is chosen such that $|J_u| + |J_w|$ is minimum.

Using the assumption from the lemma, that $(\hat{x},\hat{y})$ satisfies Inequality~\eqref{ConstraintMatchingQuadraticUp} for some set $S^* \in \facetsUp$ with equality,
we can derive the following statement.

\begin{claim}
  \label{TheoremUpCut}
  For every $j \in [k]$, the matching $\bar{M}_j$ contains at most one of the
  edges $e_1$, $e_2$, or $\setdef{a,b}$.
  It furthermore matches $a$ and $b$ (not necessarily to each other).
\end{claim}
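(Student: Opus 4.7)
The plan is to prove the two assertions separately, both by exhibiting constraints valid for the matching polytope of $\bar{G}$ that happen to be tight at $\bar{x}$, so that each $\bar{M}_j$ in the convex combination is forced to satisfy them with equality.

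First I would dispose of the ``$a$ and $b$ are matched'' part. Summing the three edges incident to $a$ gives $\bar{x}(\delta(a)) = (1 - \hat{x}_{e_1} - \hat{x}_{e_2} + \hat{y}) + (\hat{x}_{e_1} - \tfrac{1}{2}\hat{y}) + (\hat{x}_{e_2} - \tfrac{1}{2}\hat{y}) = 1$, and the analogous computation yields $\bar{x}(\delta(b)) = 1$. Since $\chi(\bar{M}_j)(\delta(a)) \leq 1$ for each $j$ and the average equals $1$, equality must hold for every index, so each $\bar{M}_j$ covers both $a$ and $b$.

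For the main part I would identify a blossom constraint of $\bar{G}$ that is tight at $\bar{x}$ and ``sees'' exactly one endpoint of each of the three edges $e_1$, $e_2$, $\setdef{a,b}$. Assume without loss of generality that $S^* \cap \Vspecial = \setdef{u_1, w_2}$; the other case is symmetric, with $b$ playing the role of $a$ below. Define $\bar{S}^* := S^* \cup \setdef{a}$, an odd-cardinality set. The only edges of $\bar{E}$ incident to $a$ whose other endpoint lies in $S^*$ are $\setdef{u_1, a}$ and $\setdef{w_2, a}$, while neither $e_1$ nor $e_2$ lies in $E[S^*]$ (their $w_1$- and $u_2$-endpoints are outside $S^*$). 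Plugging in the definition of $\bar{x}$ I obtain
\[
  \bar{x}(\bar{E}[\bar{S}^*]) = \hat{x}(E[S^*]) + \hat{x}_{e_1} + \hat{x}_{e_2} - \hat{y} = \tfrac{1}{2}|S^*| = \tfrac{1}{2}(|\bar{S}^*|-1),
\]
where the second equality is the assumed tightness of \eqref{ConstraintMatchingQuadraticUp} for $S^*$. Hence the blossom inequality at $\bar{S}^*$ is tight at $\bar{x}$, so every $\bar{M}_j$ contains exactly $\tfrac{1}{2}(|\bar{S}^*|-1)$ edges inside $\bar{S}^*$, leaving exactly one vertex of $\bar{S}^*$ unmatched internally.

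To finish I would count endpoints: each of $e_1$, $e_2$, $\setdef{a,b}$ has exactly one endpoint in $\bar{S}^*$, namely $u_1$, $w_2$, and $a$ respectively, and these three vertices are pairwise distinct. If $\bar{M}_j$ contained two or all three of these edges, then at least two distinct vertices of $\bar{S}^*$ would be matched externally, leaving $|\bar{S}^*| - 2 = |S^*| - 1$ (odd) vertices available for internal matching, which yields at most $\tfrac{1}{2}(|\bar{S}^*|-1) - 1$ internal edges and contradicts the tight count. The main obstacle is simply spotting the right auxiliary odd set $\bar{S}^*$; once that translation is made the tight blossom identity falls out of the algebra of Inequality~\eqref{ConstraintMatchingQuadraticUp}, and the remainder is a parity-based counting argument.
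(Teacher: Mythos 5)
Your proof is correct and matches the paper's argument essentially step for step: both establish the tight degree constraints at $a$ and $b$ (forcing every $\bar{M}_j$ to cover them) and the tight Blossom Inequality for $\bar{S} = S^* \cup \setdef{a}$ (resp.\ $S^* \cup \setdef{b}$), which forces $|\bar{M}_j \cap \delta(\bar{S})| \leq 1$ and hence at most one of $e_1$, $e_2$, $\setdef{a,b}$ in each $\bar{M}_j$. The only cosmetic difference is that you carry out the parity/counting argument explicitly where the paper states the cut bound directly.
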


\begin{claim}
  \label{TheoremUpMinimalCombination}
  The convex combination satisfies $J_u = J_w = \emptyset$,
  and $J_1 \dcup J_2 \dcup N$ is a partitioning of $[k]$.
\end{claim}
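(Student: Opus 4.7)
The plan follows the template of Claim~\ref{TheoremDownMinimalCombination}: first observe that $J_u, J_w, J_1, J_2, N$ already partition $[k]$, then use an exchange argument on the symmetric difference of a $J_u$-matching and a $J_w$-matching to violate the minimality of $|J_u|+|J_w|$ unless both sets are empty.

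For the partition, Claim~\ref{TheoremUpCut} guarantees that every $\bar{M}_j$ matches both $a$ and $b$. The only edges of $\bar{E}$ incident to $a$ are $\setdef{u_1,a}$, $\setdef{w_2,a}$, and $\setdef{a,b}$, and analogously for $b$. A straightforward case analysis on the match-partners of $a$ and $b$ (noting that $\setdef{a,b}$ incident at $a$ forces $j\in N$) yields the five disjoint possibilities $N, J_u, J_1, J_2, J_w$, which together exhaust $[k]$. Summing $\chi(\bar{M}_j)$ over $j$ on the edge $\setdef{u_1,a}$ gives $|J_u|+|J_1| = k\bar{x}_{\setdef{u_1,a}} = k(\hat{x}_{e_1}-\tfrac12\hat{y})$; likewise $|J_w|+|J_1|=k\bar{x}_{\setdef{w_1,b}}=k(\hat{x}_{e_1}-\tfrac12\hat{y})$, so $|J_u|=|J_w|$.

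Assume for contradiction that $|J_u|=|J_w|>0$, pick $j_u\in J_u$ and $j_w\in J_w$, and set $D:=\bar{M}_{j_u}\Delta\bar{M}_{j_w}$. Since $a$ is matched to different nodes in the two matchings (to $u_1$ versus $w_2$), and likewise at $b$, both $a$ and $b$ have degree $2$ in $D$; also $\setdef{a,b}\notin D$. The key step is to show that $a$ and $b$ lie in \emph{distinct} components of $D$: the graph $\bar{G}-\setdef{a,b}$ is bipartite, so in any alternating $a$-$b$-path in $D$ the length parity is determined by whether the path leaves $a$ toward $U$ (via $\setdef{u_1,a}$) or toward $W$ (via $\setdef{w_2,a}$) and enters $b$ from $U$ (via $\setdef{u_2,b}$) or from $W$ (via $\setdef{w_1,b}$). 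Checking the four combinations, in every case the required alternation between $\bar{M}_{j_u}$ and $\bar{M}_{j_w}$ conflicts with the known matching-membership of the terminal edge; this parity check is the main technical obstacle. With $a$ and $b$ in distinct components, let $C_a$ be the (cyclic) component of $D$ containing $a$, and define $\bar{M}'_{j_u}:=\bar{M}_{j_u}\Delta C_a$ and $\bar{M}'_{j_w}:=\bar{M}_{j_w}\Delta C_a$. These are matchings (standard alternating-cycle swap) with $\chi(\bar{M}'_{j_u})+\chi(\bar{M}'_{j_w})=\chi(\bar{M}_{j_u})+\chi(\bar{M}_{j_w})$. The swap exchanges the match-partners of $a$ but leaves $b$ untouched, so $\bar{M}'_{j_u}$ contains $\setdef{w_2,a}$ and $\setdef{u_2,b}$ and thus lies in $J_2$, while $\bar{M}'_{j_w}$ contains $\setdef{u_1,a}$ and $\setdef{w_1,b}$ and thus lies in $J_1$. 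Replacing $\bar{M}_{j_u},\bar{M}_{j_w}$ by $\bar{M}'_{j_u},\bar{M}'_{j_w}$ in the convex combination strictly decreases $|J_u|+|J_w|$ by $2$, contradicting minimality. Hence $J_u=J_w=\emptyset$, and the partition collapses to $[k]=J_1\dcup J_2\dcup N$.
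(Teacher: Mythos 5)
Your proposal is correct and takes essentially the same route as the paper's proof: the same partition argument via Claim~\ref{TheoremUpCut}, the same counting $|J_u|+|J_1| = (\hat{x}_{e_1}-\tfrac12\hat{y})k = |J_w|+|J_1|$, and the same bipartiteness/alternation parity contradiction followed by a component swap (you swap the component at $a$, the paper swaps the one at $b$ — a purely symmetric choice). One cosmetic remark: the component of $D$ containing $a$ need not be a cycle (it may be a path with $a$ in its interior), but taking the symmetric difference with a full component of $\bar{M}_{j_u}\Delta\bar{M}_{j_w}$ yields matchings in either case, so your argument is unaffected.
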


\begin{claim}
  \label{TheoremUpIndexSetContained}
  We have $J'_i \subseteq J_i$ for $i=1,2$ and thus $J'_1$ and $J'_2$ are disjoint.
\end{claim}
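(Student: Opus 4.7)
The plan is to derive both containments directly from Claim~\ref{TheoremUpCut}, which (i) forbids any two of the edges $e_1$, $e_2$, $\setdef{a,b}$ from coexisting in the same $\bar{M}_j$ and (ii) forces each $\bar{M}_j$ to match the auxiliary nodes $a$ and $b$. I would then exploit the fact that $a$ and $b$ each have only three incident edges in $\bar{G}$, which makes the local structure extremely rigid once one of $e_1$, $e_2$ is known to lie in $\bar{M}_j$.

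First I would fix $j \in J'_1$, so that $e_2 = \setdef{u_2,w_2} \in \bar{M}_j$, and apply Claim~\ref{TheoremUpCut} to conclude $\setdef{a,b} \notin \bar{M}_j$. Then I would enumerate the edges of $\bar{G}$ incident to $a$, namely $\setdef{a,b}$, $\setdef{u_1,a}$, $\setdef{w_2,a}$, and observe that the first is already excluded while the third is infeasible because $w_2$ is saturated by $e_2$; since $a$ must be matched, this forces $\setdef{u_1,a} \in \bar{M}_j$. A symmetric enumeration at $b$ (with incident edges $\setdef{a,b}$, $\setdef{u_2,b}$, $\setdef{w_1,b}$) forces $\setdef{w_1,b} \in \bar{M}_j$, hence $j \in J_1$; this establishes $J'_1 \subseteq J_1$. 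The inclusion $J'_2 \subseteq J_2$ then follows by swapping the indices $1$ and $2$ throughout (now $u_1$, $w_1$ are saturated by $e_1$, which rules out $\setdef{u_1,a}$ and $\setdef{w_1,b}$ and forces $\setdef{w_2,a},\setdef{u_2,b} \in \bar{M}_j$).

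For the disjointness conclusion I would note that $J_1 \cap J_2 = \emptyset$, because no matching can contain both $\setdef{u_1,a}$ and $\setdef{w_2,a}$ (they share the node $a$). Combined with the two containments above, this yields $J'_1 \cap J'_2 \subseteq J_1 \cap J_2 = \emptyset$. There is no real obstacle here: the whole argument is a brief case analysis on the neighborhoods of $a$ and $b$ in $\bar{G}$, and its only subtlety is remembering to invoke Claim~\ref{TheoremUpCut} at precisely the moment when $\setdef{a,b}$ must be excluded from $\bar{M}_j$.
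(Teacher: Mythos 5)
Your proof is correct and takes essentially the same route as the paper: use Claim~\ref{TheoremUpCut} to exclude $\setdef{a,b}$ and to force $a$ and $b$ to be matched, so that the saturation of $u_2,w_2$ (resp.\ $u_1,w_1$) by $e_2$ (resp.\ $e_1$) pins down $\setdef{u_1,a},\setdef{w_1,b}$ (resp.\ $\setdef{w_2,a},\setdef{u_2,b}$), with disjointness then following from $J_1 \cap J_2 = \emptyset$ (which you verify directly via the shared node $a$ rather than citing Claim~\ref{TheoremUpMinimalCombination} --- an immaterial difference). Note that, exactly like the paper's own proof and its later use in the main argument, you actually establish $J'_1 \subseteq J_1$ and $J'_2 \subseteq J_2$, which confirms that the ``$J'_i \subseteq J_1$ for $i=1,2$'' in the claim statement is a typo for $J'_i \subseteq J_i$.
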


\begin{claim}
  \label{TheoremUpIndexSetFraction}
  We have $|J'_1 \dcup J'_2| = \hat{y}k$.
\end{claim}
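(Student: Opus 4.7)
The plan is to read off $|J'_1|$ and $|J'_2|$ directly from the convex combination, exploit the specific values assigned to $\bar{x}_{e_1}$ and $\bar{x}_{e_2}$, and then invoke Claim~\ref{TheoremUpCut} to secure disjointness so that the two sizes add.

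First I would count $|J'_2| = |\setdef{j \in [k]}[ e_1 \in \bar{M}_j ]|$ by observing that
\begin{equation*}
  k \bar{x}_{e_1} \;=\; \sum_{j=1}^k \chi(\bar{M}_j)_{e_1} \;=\; |J'_2|,
\end{equation*}
so by the definition $\bar{x}_{e_1} = \tfrac{1}{2}\hat{y}$ set at the beginning of Section~\ref{SectionUp}, we get $|J'_2| = \tfrac{1}{2}\hat{y}k$. The same argument applied to coordinate $e_2$ (using $\bar{x}_{e_2} = \tfrac{1}{2}\hat{y}$) gives $|J'_1| = \tfrac{1}{2}\hat{y}k$.

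Next I would argue disjointness: by Claim~\ref{TheoremUpCut} every matching $\bar{M}_j$ contains at most one of $e_1$, $e_2$, $\setdef{a,b}$, so in particular $\bar{M}_j$ cannot simultaneously contain $e_1$ and $e_2$. Hence $J'_1 \cap J'_2 = \emptyset$. Adding the two sizes then yields
\begin{equation*}
  |J'_1 \dcup J'_2| \;=\; |J'_1| + |J'_2| \;=\; \tfrac{1}{2}\hat{y}k + \tfrac{1}{2}\hat{y}k \;=\; \hat{y}k,
\end{equation*}
as desired.

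There is no real obstacle here; the entire content of the claim is packed into the choice of values $\bar{x}_{e_1} = \bar{x}_{e_2} = \tfrac{1}{2}\hat{y}$ in the construction and into the structural information already extracted in Claim~\ref{TheoremUpCut}. (Note that Claim~\ref{TheoremUpIndexSetContained} would give the disjointness too, since the opposite containments $J'_1 \subseteq J_1$ and $J'_2 \subseteq J_2$ force $J'_1$ and $J'_2$ into different index sets; but invoking Claim~\ref{TheoremUpCut} directly is the cleanest route.)
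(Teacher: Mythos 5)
Your proposal is correct and matches the paper's argument: both count $|J'_1| = |J'_2| = \tfrac{1}{2}\hat{y}k$ from $\bar{x}_{e_1} = \bar{x}_{e_2} = \tfrac{1}{2}\hat{y}$ and the convex combination, and add the two after establishing disjointness. The only (harmless) difference is that you obtain $J'_1 \cap J'_2 = \emptyset$ directly from Claim~\ref{TheoremUpCut}, while the paper cites Claim~\ref{TheoremUpIndexSetContained}, which is itself a consequence of Claim~\ref{TheoremUpCut}.
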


We construct matchings $\tilde{M}_j$ and $\hat{M}_j$ for $j \in [k]$ that are related to the corresponding $\bar{M}_j$.
Define $\tilde{M}_j \coloneqq \bar{M}_j \Delta C_1$ for
all $j \in J_1$, $\tilde{M}_j \coloneqq \bar{M}_j \Delta C_2$ for all $j \in J_2$.
By Claim~\ref{TheoremUpMinimalCombination},
all remaining indices are the $j \in N$, and for those we define
$\tilde{M}_j \coloneqq \bar{M}_j$.
All $\tilde{M}_j$ are matchings in $\bar{G}$ since for all $j \in J_i$ ($i=1,2$)
the cycle $C_i$ is an $\bar{M}_j$-alternating cycle.
We define $\hat{M}_j \coloneqq \tilde{M}_j \setminus \setdef{a,b}$ for all $j \in [k]$,
which are matchings in $K_{m,n}$
since $\setdef{a,b} \in \tilde{M}_j$ for all $j \in [k]$.

In the following claim we exploit this property and consider the vectors $\chi(\tilde{M}_j)$ and $\chi(\hat{M}_j)$ with entries indexed by edges in $\bar{E}$ and $E$, respectively.
\begin{claim}
  \label{TheoremUpBarycenter}
  We have
  $\tilde{x} = \frac{1}{k} \sum_{j=1}^k \chi(\tilde{M}_j)$ and 
  $\hat{x} = \frac{1}{k} \sum_{j=1}^k \chi(\hat{M}_j)$.
\end{claim}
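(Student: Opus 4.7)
The plan is to verify both equalities by direct bookkeeping on the edges that are affected by the symmetric differences with $C_1$ and $C_2$. All other edges are untouched, so on them the identities follow immediately from $k\bar{x} = \sum_j \chi(\bar{M}_j)$ together with the equalities $\bar{x}_e = \tilde{x}_e = \hat{x}_e$ already built into the definitions (for $e \in E \setminus \setdef{e_1,e_2}$). The work is entirely on the edges $\setdef{u_1,a}, \setdef{w_1,b}, \setdef{u_2,b}, \setdef{w_2,a}, \setdef{a,b}, e_1, e_2$, together with the fact that only these edges appear in $\bar{E} \setminus E$.

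First I would read off the exact effect of the symmetric differences on the characteristic vectors. For $j \in J_1$ we have $\bar{M}_j \cap C_1 = \setdef{\setdef{u_1,a},\setdef{w_1,b}}$, so passing from $\bar{M}_j$ to $\tilde{M}_j = \bar{M}_j \Delta C_1$ subtracts the unit vectors on $\setdef{u_1,a}$ and $\setdef{w_1,b}$, and adds those on $\setdef{a,b}$ and $e_1$. The analogous statement holds for $j \in J_2$ with $C_2$ and $e_2$, while for $j \in N$ nothing changes. Summing these contributions over $j$ shows that $\sum_j \chi(\tilde{M}_j) - \sum_j \chi(\bar{M}_j)$ is supported on those seven edges, with values $-|J_1|, -|J_1|, -|J_2|, -|J_2|, |J_1|+|J_2|, |J_1|, |J_2|$ respectively.

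Next I would invoke the partition property of Claim~\ref{TheoremUpMinimalCombination}. Because $J_1, J_2, N$ partition $[k]$ and each edge of $\bar{E} \setminus E$ together with $\setdef{a,b}$ is used by exactly one block of matchings (the edges $\setdef{u_1,a},\setdef{w_1,b}$ only by $J_1$, the edges $\setdef{u_2,b},\setdef{w_2,a}$ only by $J_2$, and $\setdef{a,b}$ only by $N$), we obtain $|J_1| = k\bar{x}_{\setdef{u_1,a}} = k(\hat{x}_{e_1}-\tfrac12\hat{y})$, $|J_2| = k(\hat{x}_{e_2}-\tfrac12\hat{y})$, and $|N| = k\bar{x}_{\setdef{a,b}}$. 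Plugging these into the edge-by-edge calculation one checks, e.g., $\sum_j \chi(\tilde{M}_j)_{\setdef{u_1,a}} = k\bar{x}_{\setdef{u_1,a}} - |J_1| = 0 = k\tilde{x}_{\setdef{u_1,a}}$; that $\sum_j \chi(\tilde{M}_j)_{\setdef{a,b}} = |N| + |J_1| + |J_2| = k = k\tilde{x}_{\setdef{a,b}}$; and that $\sum_j \chi(\tilde{M}_j)_{e_1} = k\bar{x}_{e_1} + |J_1| = k\hat{x}_{e_1} = k\tilde{x}_{e_1}$; all remaining edges are trivial. This proves the first equality.

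Finally, for the second equality I would observe that by Claim~\ref{TheoremUpCut} (and by inspection of the three cases defining $\tilde{M}_j$) every $\tilde{M}_j$ uses $a$ and $b$ exclusively through the edge $\setdef{a,b}$, so that $\tilde{M}_j \cap (\bar{E} \setminus E) = \setdef{\setdef{a,b}}$. Hence $\hat{M}_j = \tilde{M}_j \setminus \setdef{\setdef{a,b}}$ agrees with $\tilde{M}_j$ on $E$, and restricting the first identity to $E$, where $\tilde{x}$ and $\hat{x}$ coincide by construction, yields $\hat{x} = \tfrac{1}{k}\sum_j \chi(\hat{M}_j)$. No single step is a real obstacle; the only care needed is in making sure the cardinality identities for $|J_1|,|J_2|,|N|$ are used correctly, and that one really has $\tilde{M}_j \cap (\bar{E}\setminus E) = \setdef{\setdef{a,b}}$ in every case so that passing from $\tilde{M}_j$ to $\hat{M}_j$ only removes the edge $\setdef{a,b}$.
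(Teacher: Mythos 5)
Your proposal is correct and follows essentially the same route as the paper: both compute the per-edge effect of the symmetric differences with $C_1$, $C_2$ (equivalently the difference vector $\sum_j(\chi(\tilde{M}_j)-\chi(\bar{M}_j))$), identify $|J_1| = k(\hat{x}_{e_1}-\tfrac12\hat{y})$, $|J_2| = k(\hat{x}_{e_2}-\tfrac12\hat{y})$, and compare with the definitions of $\tilde{x}$ and $\bar{x}$, then obtain the second identity by restricting to $E$ (the paper phrases this as orthogonal projection onto $\R^E$, using $\setdef{a,b} \in \tilde{M}_j$ for all $j$). Your explicit justification of the cardinality identities via the partition from Claim~\ref{TheoremUpMinimalCombination} is a fine, slightly more detailed rendering of what the paper leaves implicit.
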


Claims~\ref{TheoremUpIndexSetFraction} and~\ref{TheoremUpBarycenter} yield
\begin{align*}
  (\hat{x},\hat{y}) &= 
    \frac{1}{k} \left( \sum_{j \in J'_1} (\chi(\hat{M}_j),1)
    + \sum_{j \in J'_2} (\chi(\hat{M}_j),1)
    + \sum_{j \in [k] \setminus (J'_1 \dcup J'_2)} (\chi(\hat{M}_j),0) \right),
\end{align*}
and it remains to prove that all participating vectors are actually feasible for $\PmatchOne$.

To this end, let $j \in J'_1$ and observe that
$\setdef{u_2,w_2} \in \bar{M}_j$ and, by Claim~\ref{TheoremUpIndexSetContained}, $\setdef{u_1,a},\setdef{w_1,b} \in \bar{M}_j$.
Thus, the symmetric difference with $C_1$ yields $\setdef{u_1,w_1},\setdef{u_2,w_2} \in \hat{M}_j$.
Similarly, we have $\setdef{u_1,w_1},\setdef{u_2,w_2} \in \hat{M}_j$ for all $j \in J'_2$.
Let $j \in [k] \setminus (J'_1 \dcup J'_2)$.
First, $\bar{M}_j$ contains none of the edges $\setdef{u_1,w_1}$, $\setdef{u_2,w_2}$.
Second, the construction of $\hat{M}_j$ from $\bar{M}_j$ adds at most
one of the two edges $\setdef{u_1,w_1}$, $\setdef{u_2,w_2}$,
which proves that $\hat{M}_j$ does not contain both of them.
This concludes the proof.
\hfill $\qed$

\bigskip

Before actually proving the claims of this section,
we list further valid inequalities.

\DeclareDocumentCommand\LabelTheoremUpInequalitiesGood{}{(d)}

\begin{proposition}
  \label{TheoremUpInequalities}
  Let $(\hat{x},\hat{y})$ satisfy Constraints~\eqref{ConstraintMatchingNonnegative},\eqref{ConstraintMatchingDegree} and~\eqref{ConstraintMatchingBound}
  as well as Inequality~\eqref{ConstraintMatchingQuadraticUp} for $S = \setdef{u_1, w_2}$.
  Define
  \begin{gather*}
    \inequalitiesUp\coloneqq \setdef{ S \subseteq U \dcup W }[ \text{$|S|$ is even and $S \cap \Vspecial \in \setdef{ \setdef{u_1,w_2}, \setdef{u_2,w_1} }$ } ].
  \end{gather*}
  Then $(\hat{x},\hat{y})$ satisfies the following inequalities:

  \begin{enumerate}[(a)]
  \item
    \label{TheoremUpInequalitiesBad}
    $x_{e_1} + x_{e_2} - y \leq 1$.
  \item
    \label{TheoremUpInequalitiesHalfUp}
    $x(E[S]) + x_{e_i} - \frac{1}{2}y \leq \frac{1}{2}|S|$ for $i \in \setdef{1,2}$, $S \subseteq U \dcup W$ with $|S|$ even and $e_i \in \delta(S)$.
  \item
    \label{TheoremUpInequalitiesFullUp}
    $x(E[S]) + x_{e_1} + x_{e_2} - y \leq \frac{1}{2}|S|$ (i.e., Inequality~\eqref{ConstraintMatchingQuadraticUp}) for all $S \in \inequalitiesUp \supseteq \facetsUp$
  \end{enumerate}
  If $(\hat{x},\hat{y})$ satisfies Inequality~\eqref{ConstraintMatchingQuadraticUp} for some $S^* \in \facetsUp$ with equality or violates that equality, then the following inequalities hold as well:
  \begin{enumerate}[(a)]
  \item[\LabelTheoremUpInequalitiesGood{}]
    $\hat{x}_{e_i} - \frac{1}{2}\hat{y} \geq 0$ for $i=1,2$.
  \end{enumerate}
\end{proposition}

\begin{proof}[Proof of Proposition~\ref{TheoremUpInequalities}]
  We prove validity for each inequality individually:
  \begin{enumerate}[(a)]
  \item
    The inequality is the sum of Inequality~\eqref{ConstraintMatchingQuadraticUp} for $S = \setdef{u_1, w_2}$ and $-\hat{x}_{\setdef{u_1,w_2}} \leq 0$.
  \item
    Since $|S \cup e_i|$ is odd (and since $K_{m,n}$ is bipartite), the Blossom Inequality $\hat{x}(E[S \cup e_i]) \leq \frac{1}{2}|S|$ is implied by Constraints~\eqref{ConstraintMatchingNonnegative} and~\eqref{ConstraintMatchingDegree}.
    Adding $-\hat{x}_e \leq 0$ for all $e \in E[S \cup e_1] \setminus (E[S] \cup \setdef{e_1})$ and $-\frac{1}{2}\hat{y} \leq 0$ yields the desired inequality.
  \item
    We only have to prove the statement for $S \in \inequalitiesUp \setminus \facetsUp$.
    We can furthermore assume w.l.o.g.\ $S \cap \Vspecial = \setdef{u_1,w_2}$ and $|S \cap U| < |S \cap W|$, since the other cases are similar.
    Let $U' \coloneqq S \cap U$ and $W' \coloneqq S \cap W$ and observe that $|U'| \leq |W'| - 2$ because $|S|$ is even.
    Then the sum of $\hat{x}(\delta(u)) \leq 1$ for all $u \in U'$ plus the sum of $-\hat{x}_e \leq 0$ for all $e \in \delta(U') \setminus (E[S] \cup \setdef{e_1})$ reads $\hat{x}(E[S]) + x_{e_1} \leq |U'| = \frac{1}{2}|S| -1$.
    Adding $\hat{x}_{e_2} \leq 1$ and $-\hat{y} \leq 0$ yields the desired inequality.
  \item
    Let $i \in \setdef{1,2}$ and $j \coloneqq 3-i$.
    Similar to the proof of~\eqref{TheoremUpInequalitiesHalfUp} we have that the Blossom inequality $\hat{x}(E[S^*]) + \hat{x}_{e_j} \leq \frac{1}{2}|S^*|$ is implied by Constraints~\eqref{ConstraintMatchingNonnegative} and~\eqref{ConstraintMatchingDegree}.
    Subtracting this from $\hat{x}(E[S^*]) + \hat{x}_{e_1} + \hat{x}_{e_2} - \hat{y} \geq \frac{1}{2}|S^*|$ and adding $\frac{1}{2} \hat{y} \geq 0$ we derive $\hat{x}_{e_i} - \frac{1}{2} \hat{y} \geq 0$.
  \end{enumerate}
  This concludes the proof.
\end{proof}

In this section we are in the situation that Inequality~\eqref{ConstraintMatchingQuadraticUp} is satisfied for all $S \in \facetsUp$ (and not just for $S = \setdef{u_1,w_2}$) and that it is satisfied with equality for $S^*$.
In Section~\ref{SectionSeparation} we will discuss separation algorithms, for which we need the refined conditions, i.e., we will exploit that the inequality only has to be satisfied for the single set $S$ and that an inequality may be violated by the given point.

\begin{proof}[Proof of Claim~\ref{TheoremUpInMatchingPolytope}]
  Since $(\hat{x},\hat{y}) \geq \zerovec$, Parts~\eqref{TheoremUpInequalitiesBad} and \LabelTheoremUpInequalitiesGood{} of Proposition~\ref{TheoremUpInequalities}
  yield $\bar{x} \geq \zerovec$.
  The degree constraints are also satisfied, since $\bar{x}(\delta(w)) = \hat{x}(\delta(w))$ for the nodes $w \in \Vspecial$
  and since $\bar{x}(\delta(a)) = \bar{x}(\delta(b)) = 1$.

  Suppose, for the sake of contradiction, that $\bar{x}(E[\bar{S}]) > \frac{1}{2}(\bar{S}|-1)$
  for some odd-cardinality set $\bar{S} \subseteq \bar{V}$.
  Clearly, $\tilde{x}(E[\bar{S}]) \leq \frac{1}{2}(|\bar{S}|-1)$,
  i.e., $(\bar{x} - \tilde{x})(E[\bar{S}]) > 0$.
  This implies that $E[\bar{S}]$ must intersect some $C_i$ ($i=1,2$) in such a way that the sum of the respective modifications (increase or decrease by $\hat{x}_{e_i} - \frac{1}{2}\hat{y}$) is positive.
  Similar to the proof of Claim~\ref{TheoremDownInMatchingPolytope}, we conclude that $\bar{S}$ must touch one of the cycles in precisely two nodes, whose connecting edge $e$ satisfies $\bar{x}_e > \tilde{x}_e$.
  Hence, (at least) one of the following four conditions must be satisfied:
  \begin{alignat*}{4}
    \text{(1) } & \bar{S} \cap \setdef{u_1,w_1,a,b} = \setdef{u_1,a}, & \qquad\text{(2) } & \bar{S} \cap \setdef{u_1,w_1,a,b} = \setdef{w_1,b}, \\
    \text{(3) } & \bar{S} \cap \setdef{u_2,w_2,a,b} = \setdef{u_2,b}, & \qquad\text{(4) } & \bar{S} \cap \setdef{u_2,w_2,a,b} = \setdef{w_2,a}.
  \end{alignat*}
  We define $\bar{V}^* \coloneqq \setdef{u_1,u_2,w_1,w_2,a,b}$ and $S \coloneqq \bar{S} \setminus \setdef{a,b}$.
  Note that we always have $|S| = |\bar{S}| -  1$ since each of the four conditions implies that either $a$ or $b$ is contained in $\bar{S}$, and hence $|S|$ is even.
  We now make a case distinction, based on $\bar{S} \cap \bar{V}^*$.
  All potential intersections $\bar{S} \cap \bar{V}^*$ arise from those above by adding a subset of the missing two elements, e.g., in (1) we have to consider adding any subset of $\setdef{u_2, w_2}$ to $\setdef{u_1,a}$.
  After elimination of two duplicates, this leads to 14 possible node sets, which we take care of in three cases.
  The cases arise by inspecting the modifications (from $\hat{x}$ to $\bar{x}$) that occur within $E[\bar{S}]$.

\medskip
  \textbf{Case 1}: $\bar{S} \cap \bar{V}^*$ is equal to
  $\setdef{u_1,a}$, $\setdef{u_1,a,u_2}$, $\setdef{u_1,a,u_2,w_2}$, $\setdef{w_1,b}$, $\setdef{w_1,b,w_2}$ or $\setdef{w_1,b,w_2,u_2}$.

  In this case $\bar{x}(E[\bar{S}]) = \hat{x}(E[S]) + \hat{x}_{e_1} - \frac{1}{2}\hat{y} \leq \frac{1}{2}|S| = \frac{1}{2}(|\bar{S}|-1)$
  by Proposition~\ref{TheoremUpInequalities}~\eqref{TheoremUpInequalitiesHalfUp}, which yields a contradiction.

\medskip
  \textbf{Case 2}: $\bar{S} \cap \bar{V}^*$ is equal to 
  $\setdef{u_2,b}$, $\setdef{u_2,b,u_1}$, $\setdef{u_2,b,u_1,w_1}$, $\setdef{w_2,a}$, $\setdef{w_2,a,w_1}$, $\setdef{w_2,a,w_1,u_1}$.

  In this case $\bar{x}(E[\bar{S}]) = \hat{x}(E[S]) + \hat{x}_{e_2} - \frac{1}{2}\hat{y} \leq \frac{1}{2}|S| = \frac{1}{2}(\bar{S}|-1)$
  by Proposition~\ref{TheoremUpInequalities}~\eqref{TheoremUpInequalitiesHalfUp}, which yields a contradiction.

\medskip
  \textbf{Case 3}: $\bar{S} \cap \bar{V}^*$ is equal to 
  $\setdef{u_1,a,w_2}$ or $\setdef{u_2,b,w_1}$ and $S \in \inequalitiesUp$.

  In this case
  $\bar{x}(E[\bar{S}]) = \hat{x}(E[S]) + \hat{x}_{e_1} + \hat{x}_{e_2} - \hat{y} \leq \frac{1}{2}|S| = \frac{1}{2}(|\bar{S}|-1)$
  by Proposition~\ref{TheoremUpInequalities}~\eqref{TheoremUpInequalitiesFullUp}, which yields a contradiction.
\end{proof}

\begin{proof}[Proof of Claim~\ref{TheoremUpCut}]
  Let $S^* \in \facetsUp$ be such that $(\hat{x},\hat{y})$ satisfies Inequality~\eqref{ConstraintMatchingQuadraticUp} with equality.
  If $u_1,w_2 \in S^*$, then we define $\bar{S} \coloneqq S^* \cup \setdef{a}$,
  and otherwise $\bar{S} \coloneqq S^* \cup \setdef{b}$.
  A simple calculation shows that $\bar{x}(E[\bar{S}]) = \hat{x}(E[S^*]) + \hat{x}_{e_1} + \hat{x}_{e_2} - \hat{y} = \frac{1}{2}|S^*| = \frac{1}{2}(|\bar{S}| - 1)$,
  i.e., $\bar{x}$ satisfies the Blossom Inequality induced by $\bar{S}$ with equality.
  Furthermore, $\bar{x}$ satisfies the degree inequalities \eqref{ConstraintMatchingDegree} for nodes $a$ and $b$
  with equality.
  This implies that, for all $i \in [k]$, the characteristic vector $\chi(\bar{M}_j)$ satisfies these three
  inequalities with equality, i.e.,
  we have $|\bar{M}_j \cap E[\bar{S}]| = \frac{1}{2}(|\bar{S}|-1)$ and $|\bar{M}_j \cap \delta(a)| = |\bar{M}_j \cap \delta(b)| = 1$.
  From the first equation we derive $|\bar{M}_j \cap \delta(\bar{S})| \leq 1$.
  This, together with the second equation proves the claimed properties.
\end{proof}

\begin{proof}[Proof of Claim~\ref{TheoremUpMinimalCombination}]
  First, the sets $J_u$, $J_w$, $J_1$, $J_2$ and $N$ are disjoint
  since the indexed matchings all match nodes $a$ and $b$ in different ways.
  Second, their union is equal to $[k]$ due to the second part of Claim~\ref{TheoremUpCut}.
  From this we obtain
  $|J_u| + |J_1| = \bar{x}_{\setdef{u_1,a}} k = \left(\hat{x}_{e_1} - \frac{1}{2}y\right)k = \bar{x}_{\setdef{w_1,b}} k = |J_w| + |J_1|$,
  and conclude that $|J_u| = |J_w|$.
  
  Now suppose, for the sake of contradiction, that $J_u \neq \emptyset$  
  (and thus $|J_w| = |J_u| \geq 1$).
  Let $j \in J_u$ and $j' \in J_w$ and let $C$ be the connected component (i.e., an alternating cycle or path) of $\bar{M}_j \Delta \bar{M}_{j'}$
  that contains $\setdef{u_2,b}$.

  We claim that $\setdef{u_1,a} \notin C$.
  Assuming the contrary, there must exist an odd-length (alternating) path in $K_{m,n}$ that connects either $u_1$ with $u_2$ or $w_1$ with $w_2$
  or there must exist an even-length (alternating) path in $K_{m,n}$ that connects either $u_1$ with $w_1$ or $u_2$ with $w_2$.
  Since $K_{m,n}$ is bipartite, none of these paths exist, which proves $\setdef{u_1,a} \notin C$.

  Define two new matchings $\bar{M}'_j \coloneqq \bar{M}_j \Delta C$ and $\bar{M}'_{j'} \coloneqq \bar{M}_{j'} \Delta C$,
  and note that $\chi(\bar{M}_j) + \chi(\bar{M}_{j'}) = \chi(\bar{M}'_j) + \chi(\bar{M}'_{j'})$,
  i.e., we can replace $\bar{M}_j$ and $\bar{M}_{j'}$ by $\bar{M}'_j$ and $\bar{M}'_{j'}$ in the convex combination.
  The fact that $\bar{M}'_j$ contains the edges $\setdef{u_1,a}$ and $\setdef{w_1,b}$
  and that $\bar{M}'_{j'}$ contains the edges $\setdef{w_2,a}$ and $\setdef{u_2,b}$ contradicts
  the assumption that the convex combination was chosen with minimum $|J_u| + |J_w|$.
  Hence, $J_u = J_w = \emptyset$.
\end{proof}

\begin{proof}[Proof of Claim~\ref{TheoremUpIndexSetContained}]
  Let $j \in J'_1$. 
  Using $\setdef{u_2,w_2} \in \bar{M}_j$, Claim~\ref{TheoremUpCut}
  shows that $\setdef{a,b} \notin \bar{M}_j$, and thus (since $u_2$ and $w_2$ are already matched to each other)
  that $\bar{M}_j$ contains the two edges $\setdef{u_1,a}$ and $\setdef{w_1,b}$, i.e., $j \in J_1$.
  The proof of $J'_2 \subseteq J_2$ is similar.

  From Claim~\ref{TheoremUpMinimalCombination} we have $J_1 \cap J_2 = \emptyset$, and hence $J'_1 \cap J'_2 = \emptyset$ holds as well.
\end{proof}

\begin{proof}[Proof of Claim~\ref{TheoremUpIndexSetFraction}]
  By Claim~\ref{TheoremUpIndexSetContained}, we have $J'_1 \cap J'_2 = \emptyset$.
  Hence, $|J'_1 \dcup J'_2| = |J'_1| + |J'_2| = k \bar{x}_{e_1} + k \bar{x}_{e_2} = k \cdot \frac{1}{2}\hat{y} + k \cdot \frac{1}{2} \hat{y} = k \hat{y}$, which concludes the proof.
\end{proof}

\begin{proof}[Proof of Claim~\ref{TheoremUpBarycenter}]
  Similar to the proof of Claim~\ref{TheoremDownBarycenter}, we
  consider the vector
  $d \coloneqq \sum_{j=1}^k (\chi(\tilde{M}_j) - \chi(\bar{M}_j))$.
  By construction of the $\tilde{M}_j$, we have
  \begin{itemize}
  \item
    $d_e = 0$ for all $e \notin C_1 \cup C_2$,
  \item
    $d_{e_1} = -d_{\setdef{u_1,a}} = -d_{\setdef{w_1,b}} = |J_1| = (\hat{x}_{e_1} - \frac{1}{2}\hat{y})k$,  
  \item
    $d_{e_2} = -d_{\setdef{u_2,b}} = -d_{\setdef{w_2,a}} = |J_2| = (\hat{x}_{e_2} - \frac{1}{2}\hat{y})k$, and
  \item
    $d_{\setdef{a,b}} = d_{e_1} + d_{e_2} = (\hat{x}_{e_1} + \hat{x}_{e_2} - \hat{y})k$.
  \end{itemize}
  A simple comparison with the construction of $\tilde{x}$ and $\bar{x}$ from $\hat{x}$
  proves the first part.

  The construction of $\hat{M}_j$ from $\tilde{M}_j$ by removing edge $\setdef{a,b}$
  corresponds to the fact that $\hat{x}$ is the orthogonal projection of $\tilde{x}$ onto $\R^E$,
  which proves the second part.
\end{proof}

\section{Facet proofs}
\label{SectionFacets}

We start by establishing the dimensions of the three polytopes and then consider all inequality classes regarding whether they induce facets.

\begin{proposition}
  The polytopes $\PmatchOne$, $\PmatchOneDown$ and $\PmatchOneUp$ are full-dimensional.
\end{proposition}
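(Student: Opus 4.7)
The plan is to exhibit $mn+2$ affinely independent points that are common vertices of all three polytopes. Observe that every vertex $(\chi(M),y)$ of $\PmatchOne$ (that is, a pair with $y=1$ if and only if $\setdef{e_1,e_2} \subseteq M$) automatically satisfies the weaker relationships ``$y=1$ implies $\setdef{e_1,e_2} \subseteq M$'' and ``$y=0$ implies $e_1 \notin M$ or $e_2 \notin M$''. Hence every vertex of $\PmatchOne$ lies in both $\PmatchOneDown$ and $\PmatchOneUp$, so a single affine basis inside $\PmatchOne$ certifies full-dimensionality for all three polytopes at once.

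Concretely, I would take the following $mn+2$ vertices of $\PmatchOne$: the point $(\zerovec,0)$ corresponding to the empty matching; the $mn$ points $(\chi(\setdef{e}),0)$, one for each edge $e \in E$ (the value $y=0$ is consistent because a one-edge matching cannot contain both $e_1$ and $e_2$); and the point $(\chi(\setdef{e_1,e_2}),1)$. To verify affine independence, I subtract $(\zerovec,0)$ from the other $mn+1$ points. The first $mn$ difference vectors become the standard unit vectors of $\R^E$ with a zero appended in the $y$-coordinate, which are linearly independent among themselves. The remaining difference vector $(\chi(\setdef{e_1,e_2}),1)$ has a nonzero $y$-coordinate, so it lies outside the span of the previous ones. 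This yields $mn+1$ linearly independent vectors in the ambient space $\R^E \times \R$ of dimension $mn+1$, proving full-dimensionality.

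There is essentially no obstacle here; the proposition is a sanity check that no hidden linear equation is satisfied by every point of the three polytopes. The only matter to verify is that the chosen single-edge matchings are legitimate: their $y$-value is $0$, which is consistent with the defining relationships of all three polytopes since neither ``$y=1 \Rightarrow \setdef{e_1,e_2}\subseteq M$'' nor ``$y=0 \Rightarrow e_1 \notin M$ or $e_2 \notin M$'' is violated when $y=0$ and at most one of $e_1,e_2$ appears in the matching. No blossom or quadratic inequalities enter the argument.
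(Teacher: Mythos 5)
Your proof is correct and uses exactly the same affine basis as the paper: the empty matching, the $|E|$ single-edge matchings (all with $y=0$), and $(\chi(\setdef{e_1,e_2}),1)$, observed to lie in all three polytopes simultaneously. You merely spell out the affine-independence verification in more detail; there is no substantive difference.
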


\begin{proof}
  The point $(\chi(\emptyset), 0)$, the points $(\chi(\setdef{e}),0)$ for all $e \in E$ and the point $(\chi(\setdef{e_1,e_2}),1)$ are  $|E|+2$ affinely independent points that are contained in all three polytopes.
  This proves the statement.
\end{proof}

\begin{proposition}
  Let $e^* \in E$.
  Then Inequality~\eqref{ConstraintMatchingNonnegative} defines a facet for $\PmatchOneUp$.
  Furthermore, it defines a facet for $\PmatchOne$ (and thus for $\PmatchOneDown$) if and only if $e^* \notin \setdef{e_1,e_2}$.
\end{proposition}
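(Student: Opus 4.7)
The plan is to leverage the fact, established in the preceding proposition, that all three polytopes are full-dimensional of dimension $|E|+1$. Thus for each case one must either exhibit $|E|+1$ affinely independent points of the polytope in the face $x_{e^*} = 0$, or show that this face lies in a lower-dimensional affine subspace.

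For $\PmatchOneUp$ and arbitrary $e^* \in E$, I would exhibit the $|E|+1$ points
\begin{align*}
  (\chi(\emptyset), 0),\quad (\chi(\setdef{e}), 0) \text{ for all } e \in E \setminus \setdef{e^*},\quad (\chi(\emptyset), 1).
\end{align*}
The last point lies in $\PmatchOneUp$ because the defining implication ``$y=0 \Rightarrow e_1 \notin M$ or $e_2 \notin M$'' is vacuous when $y = 1$. All points satisfy $x_{e^*} = 0$, and affine independence follows immediately because their differences with the origin are distinct coordinate unit vectors (the $|E|-1$ edge directions for $e \neq e^*$ together with the $y$-direction).

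For $\PmatchOne$ with $e^* \notin \setdef{e_1,e_2}$ I would replace the last point above by $(\chi(\setdef{e_1,e_2}), 1)$, which is a vertex of $\PmatchOne$ satisfying $x_{e^*}=0$ since $e^* \neq e_1,e_2$. Affine independence holds because the unit directions in edges $e \in E \setminus \setdef{e^*}$ are linearly independent, and the new point's difference from the origin has a nonzero $y$-component. Since $\PmatchOne \subseteq \PmatchOneDown$, the same $|E|+1$ points witness the facet property for $\PmatchOneDown$ as well.

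For the converse, suppose $e^* \in \setdef{e_1,e_2}$ and consider $\PmatchOne$ (or $\PmatchOneDown$). The valid inequality~\eqref{ConstraintMatchingQuadraticGood}, $y \leq x_{e^*}$, together with $x_{e^*} \geq 0$, forces $y = 0$ on the face $\setdef{(x,y) \in \PmatchOne \mid x_{e^*} = 0}$. Since $(\chi(\setdef{e_1,e_2}),1)$ shows $y=0$ is not an implicit equation of the polytope, the face is contained in two distinct hyperplanes whose defining linear forms ($x_{e^*}$ and $y$) are linearly independent; its dimension is therefore at most $|E|-1 < |E|$, so it is not a facet. I expect the only mildly delicate point is articulating this last dimension-drop argument cleanly; everything else is routine vertex-counting.
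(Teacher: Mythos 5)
Your proposal is correct and follows essentially the same route as the paper: exhibiting $|E|+1$ affinely independent points on the face $x_{e^*}=0$ (the paper uses $(\chi(\setdef{e_1,e_2}),1)$ for $\PmatchOneUp$ when $e^*\notin\setdef{e_1,e_2}$ and $(\chi(\emptyset),1)$ otherwise, while you use $(\chi(\emptyset),1)$ uniformly, which is equally valid), and handling the case $e^*\in\setdef{e_1,e_2}$ via the valid inequalities $0\leq y$ and $y\leq x_{e_i}$. The only cosmetic difference is that the paper concludes non-facetness by noting $x_{e^*}\geq 0$ is the sum of these two valid inequalities, whereas you phrase it as a dimension drop ($y=0$ forced on the face); both arguments are sound and equivalent.
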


\begin{proof}
  If $e^* \notin \setdef{e_1,e_2}$, we consider the following set of $|E|+1$ points:
  \begin{gather*}
    (\chi(\emptyset),0),\quad
    (\chi(\setdef{e_1,e_2}),1),\quad
    (\chi(\setdef{e}),0) \text{ for all } e \in E \setminus \setdef{e^*}.
  \end{gather*}
  Since they are clearly affinely independent, satisfy $x_{e^*} \geq 0$ with equality,
  and are contained in all three polytopes,
  we obtain that Inequality~\eqref{ConstraintMatchingNonnegative}
  is facet-defining for each of them.
  Otherwise, consider $e^* = e_i$ for some $i \in \setdef{1,2}$.
  For $\PmatchOneUp$ we can replace $(\chi(\setdef{e_1,e_2}),1)$ by $(\chi(\emptyset),1)$
  to obtain the same result.
  For the other two polytopes, $x_{e^*} \geq 0$ is clearly implied by $0 \leq y$ and $y \leq x_{e_i}$,
  and hence not facet-defining.
\end{proof}

\begin{proposition}
  \label{TheoremFacetDegree}
  Let $v^* \in U \dcup W$
  and let $k \coloneqq |\delta(v^*)|$.
  Then Inequality~\eqref{ConstraintMatchingDegree} is facet-defining for
  \begin{itemize}
  \item
    \label{TheoremFacetDegreeMatcheOneUp}
    $\PmatchOneUp$ in any case, for
  \item
    \label{TheoremFacetDegreeMatcheOne}
    $\PmatchOneDown$ if and only if $k \geq 3$ or $v^* \in \Vspecial$, and for
  \item
    $\PmatchOne$ if and only if $k \geq 3$.
  \end{itemize}
\end{proposition}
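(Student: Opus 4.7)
The plan is to handle the three parts uniformly: for each polytope where the inequality is claimed to be facet-defining, I would exhibit $|E|+1$ affinely independent points on the hyperplane $\{x(\delta(v^*))=1\}$ that lie in the polytope; for each polytope where it is not, I would exhibit a second linear equation that holds on the whole face, showing the face has dimension at most $|E|-1$.

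First I would construct a common base of $|E|$ points (all with $y=0$): (i) $(\chi(\{e\}),0)$ for every $e \in \delta(v^*)$, giving $k$ points, and (ii) $(\chi(\{e_f,f\}),0)$ for every $f \in E \setminus \delta(v^*)$, where $e_f \in \delta(v^*)$ is chosen disjoint from $f$, which exists because $m,n \geq 2$. These lie in all three polytopes, except that in $\PmatchOne$ the case $\{e_f,f\} = \{e_1,e_2\}$ must be avoided, since it would force $y=1$. A single further $y=1$ point is then needed to complete the list. For $\PmatchOneUp$ the point $(\chi(\{e_0\}),1)$ for any $e_0 \in \delta(v^*)$ works, because $y=1$ imposes no constraint there; hence the proposition holds unconditionally for $\PmatchOneUp$. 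For $\PmatchOneDown$ and $\PmatchOne$ I need a matching $M^* \supseteq \{e_1,e_2\}$ that covers $v^*$: when $v^* \in \Vspecial$ the matching $\{e_1,e_2\}$ itself works, and otherwise I need an edge in $\delta(v^*)$ disjoint from both $e_1$ and $e_2$, which exists precisely when $k \geq 3$. In $\PmatchOne$ with $v^* \in \Vspecial$ and $k=2$, the collision $\{e_f,f\} = \{e_1,e_2\}$ is forced for $f$ equal to the opposite special edge, and the list collapses to only $|E|$ distinct points, which is exactly why the facet property fails in that case.

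For the non-facet direction I would exhibit implicit equations. If $v^* \notin \Vspecial$ and $k=2$, say $v^* \in U$ with $W = \{w_1,w_2\}$, then whichever of $w_1,w_2$ is matched to $v^*$ blocks the corresponding $e_i$, so $\{e_1,e_2\} \not\subseteq M$; hence $y=0$ holds throughout the face in both $\PmatchOneDown$ and $\PmatchOne$. If $v^* \in \Vspecial$ and $k=2$, say $v^*=u_1$ and $n=2$, a case split on whether $u_1$ is matched via $e_1$ or via $\{u_1,w_2\}$ yields $y = x_{e_2}$ on the face in $\PmatchOne$ (the second case forces both $e_1$ and $e_2$ out of $M$). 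In $\PmatchOneDown$ no such equation holds, because the matching $\{e_1,e_2\}$ admits both $y=0$ and $y=1$ as lifts, providing the extra point that completes the facet construction.

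The main obstacle I anticipate is verifying affine independence of the $|E|+1$ constructed points, since the base matchings overlap in edges. I would do this by subtracting a fixed base point and organising the resulting differences by edge-coordinates in $\delta(v^*)$, edge-coordinates outside $\delta(v^*)$, and the $y$-coordinate, then checking that each block achieves the required rank; the hypotheses $m,n \geq 2$, together with $k \geq 3$ (or $v^* \in \Vspecial$) where needed, are precisely what makes the necessary edges and matchings available.
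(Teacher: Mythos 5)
Your overall strategy coincides with the paper's: exhibit $|E|+1$ affinely independent points on $\{x(\delta(v^*))=1\}$ for the facet cases, and for the non-facet cases show the face is degenerate (your implicit equations $y=0$, resp.\ $y=x_{e_2}$, are a perfectly valid substitute for the paper's argument that the degree inequality is a sum of two valid inequalities). However, there is one concrete gap, in the part you claim holds ``unconditionally'': your assertion that the base points $(\chi(\{e_f,f\}),0)$ lie in \emph{all three} polytopes, with a caveat only for $\PmatchOne$, is false for $\PmatchOneUp$. The point $(\chi(\{e_1,e_2\}),0)$ is infeasible for $\PmatchOneUp$ as well, since there $y=0$ requires $e_1\notin M$ or $e_2\notin M$. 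In the subcase $k=2$ and $v^*\in\Vspecial$ (say $v^*=u_1$, $n=2$) the collision $\{e_f,f\}=\{e_1,e_2\}$ is forced, exactly as you observe for $\PmatchOne$, so your base for $\PmatchOneUp$ contains an infeasible point and the count of feasible points on the face drops below $|E|+1$. Hence the first bullet of the proposition --- facetness for $\PmatchOneUp$ in \emph{every} case --- is not established by your construction in precisely that subcase, which is the one the paper treats separately with a different point family (e.g.\ $(\chi(\{e_1\}),0)$, $(\chi(\{\{u_1,w_2\}\}),0)$, $(\chi(\{e_1,e_2\}),1)$, $(\chi(\{e_1\}),1)$, together with two-edge matchings covering $u_1$).

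The gap is local and repairable: for $\PmatchOneUp$ replace the forced collision point $(\chi(\{e_1,e_2\}),0)$ by $(\chi(\{e_1,e_2\}),1)$, which does lie in $\PmatchOneUp$ and on the face, and then redo the affine-independence check with the two $y=1$ points $(\chi(\{e_1,e_2\}),1)$ and $(\chi(\{e_0\}),1)$ present (or simply adopt the paper's explicit list for this subcase). The remaining parts of your proposal --- collision avoidance via $e_f\in\delta(v^*)\setminus\{e_1,e_2\}$ when $k\geq 3$, the extra $y=1$ point $(\chi(\{e_1,e_2\}),1)$ or $(\chi(\{e_1,e_2,\bar e\}),1)$ for $\PmatchOne$ and $\PmatchOneDown$, and the two implicit-equation arguments --- match the paper's proof in substance, with the affine-independence verification deferred in both.
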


\begin{proof}
  First note that $k = m$ or $k = n$, since we consider the complete bipartite graph, and thus $k \geq 2$.
  Now assume $k \geq 3$ and consider the points
  \begin{gather*}
    (\chi(\setdef{e}),0) \text{ for all } e \in \delta(v^*).
  \end{gather*}
  For each $e \in E \setminus \delta(v^*)$, let $f_e \in \delta(v^*) \setminus \setdef{e_1,e_2}$
  be an edge that is disjoint from $e$,
  which must exist since $v^*$ has degree at least $3$.
  Then consider the points
  \begin{gather*}
    (\chi(\setdef{e,f_e}),0) \text{ for all } e \in E \setminus \delta(v^*).
  \end{gather*}
  Due to $f_e \neq e_i$ for $i=1,2$ we have that all points are feasible.
  If $v^* \in \Vspecial$, then the two sets of points can be extended with
  $(\chi(\setdef{e_1,e_2}),1)$ to a set of $|E|+1$ affinely independent points in $\PmatchOne$
  that satisfy Inequality~\eqref{ConstraintMatchingDegree} for $v = v^*$ with equality.

  Otherwise, i.e., if $v^* \notin V^*$, let $\bar{e} \in \delta(v^*)$ be any edge disjoint from $\Vspecial$.
  Then the two sets of points can be extended with
  $(\chi(\setdef{e_1,e_2,\bar{e}}),1)$ to a set of $|E|+1$ affinely independent points in $\PmatchOne$
  that satisfy Inequality~\eqref{ConstraintMatchingDegree} for $v = v^*$ with equality.
  This proves the theorem for $k \geq 3$ for all three polytopes.

  Consider the case of $k = 2$ and $v^* \notin \Vspecial$.
  By symmetry, we can assume w.l.o.g.\ $v^* \in U$.
  Then Inequality~\eqref{ConstraintMatchingDegree} for $v = v^*$ is the sum
  of Inequality~\eqref{ConstraintMatchingQuadraticDown} for $S = \setdef{v^*,w_1,w_2}$ and $-y \leq 0$.
  Since both inequalities are valid for $\PmatchOne$ and $\PmatchOneDown$,
  Inequality~\eqref{ConstraintMatchingDegree} does not define a facet for these polytopes.
  To see that it is facet-defining for $\PmatchOneUp$, one can easily check that the points
  \begin{align*}
    &(\chi(\setdef{\setdef{v^*,w_1}}),0),\quad
    (\chi(\setdef{\setdef{v^*,w_2}}),0),\quad
    (\chi(\setdef{\setdef{v^*,w_1}}),1), \text{ and } \\
    &(\chi(\setdef{\setdef{v^*,w_i},e}),0) \text{ for all } e \in E \setminus \delta(v^*) 
      \text{ and } i \in \setdef{1,2} \text{ with $w_i \notin e$}
  \end{align*}
  are contained in $\PmatchOneUp$, are affinely independent and satisfy $x(\delta(v^*)) = 1$.

  It remains to consider the case of $k = 2$ and $v^* \in \Vspecial$.
  Again by symmetry we can assume w.l.o.g.\ $v^* = u_1$.
  Then Inequality~\eqref{ConstraintMatchingDegree} is the sum
  of Inequality~\eqref{ConstraintMatchingQuadraticUp} for $S = \setdef{u_1,w_2}$
  and Inequality~\eqref{ConstraintMatchingQuadraticGood} for $i = 2$.
  Both inequalities are valid for $\PmatchOne$,
  and hence Inequality~\eqref{ConstraintMatchingDegree} for $v = v^*$
  cannot be facet-defining for this polytope.
  To see that it is facet-defining for the other two polytopes, we consider the points
  \begin{align*}
    &(\chi(\setdef{e_1}),0),\quad
    (\chi(\setdef{\setdef{u_1,w_2}}),0),\quad
    (\chi(\setdef{e_1,e_2}),1),\text{ and } \\
    &(\chi(\setdef{\setdef{a_1,w_i},e}),0) \text{ for all } e \in E \setminus \setdef{e_1,e_2,\setdef{u_1,w_2}}
      \text{ and } i \in \setdef{1,2} \text{ with $w_i \notin e$}
  \end{align*}
  in $\PmatchOne$.
  On the one hand, they can be extended with
  $(\chi(\setdef{e_1,e_2}),0)$ to a set of $|E|+1$ affinely independent points in $\PmatchOneDown$
  that satisfy Inequality~\eqref{ConstraintMatchingDegree} for $v = v^*$ with equality.
  On the other hand, they can be extended with
  $(\chi(\setdef{e_1}),1)$ to a set of $|E|+1$ affinely independent points in $\PmatchOneUp$
  that satisfy Inequality~\eqref{ConstraintMatchingDegree} for $v = v^*$ with equality.
  This concludes the proof.
\end{proof}

\begin{proposition}
  \label{TheoremFacetBounds}
  The inequality $y \geq 0$ is facet-defining for $\PmatchOneUp$, $\PmatchOneDown$ and $\PmatchOne$,
  while $y \leq 1$ is facet-defining for $\PmatchOneUp$, but not for $\PmatchOneDown$ and $\PmatchOne$.
\end{proposition}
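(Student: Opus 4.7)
The plan is to handle the four claims separately, with the two facet-defining claims proved by exhibiting $|E|+2$ affinely independent points, and the two non-facet claims proved by rewriting $y \leq 1$ as a nontrivial combination of other valid inequalities.

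For $y \geq 0$ in all three polytopes, I would use the $|E|+2$ points $(\chi(\emptyset),0)$ and $(\chi(\{e\}),0)$ for every $e \in E$. All of them lie in $\PmatchOne$ (and hence in $\PmatchOneDown$ and $\PmatchOneUp$) because setting $y=0$ is always compatible with an arbitrary single-edge or empty matching under all three relations. They all satisfy $y=0$ with equality, and affine independence follows from the fact that the $|E|+1$ characteristic vectors $\chi(\emptyset)$ and $\chi(\{e\})$ are already affinely independent in $\R^E$.

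For $y \leq 1$ in $\PmatchOneUp$, I would similarly use the points $(\chi(\emptyset),1)$ and $(\chi(\{e\}),1)$ for all $e \in E$. Here I rely on the fact that in $\PmatchOneUp$ the $y$-variable is only constrained by the implication $y=0 \Rightarrow e_1 \notin M \text{ or } e_2 \notin M$, so $y=1$ is feasible with any matching, including $\emptyset$ and every single-edge matching. Affine independence is again immediate from the $\chi$-parts.

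For $y \leq 1$ in $\PmatchOneDown$ and $\PmatchOne$, I would show that the inequality is already implied by previously stated constraints, so the face $\{y=1\}$ is not a facet. Explicitly, summing Inequality~\eqref{ConstraintMatchingQuadraticGood} for $i=1$, which gives $y \leq x_{e_1}$, with the degree Inequality~\eqref{ConstraintMatchingDegree} for $v=u_1$, which gives $x(\delta(u_1)) \leq 1$, and with $-x_e \leq 0$ for every $e \in \delta(u_1) \setminus \{e_1\}$, one obtains $y \leq 1$. Since all summands are valid for both $\PmatchOneDown$ and $\PmatchOne$, the inequality $y \leq 1$ cannot define a facet of either polytope.

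There is no real obstacle here; the only thing to be a bit careful about is that in the $y \geq 0$ argument the points must lie in $\PmatchOne$, which they do trivially since they all have $y=0$ and either contain neither $e_1$ nor $e_2$ (so the biconditional holds vacuously) or contain exactly one of them. Everything else is routine bookkeeping.
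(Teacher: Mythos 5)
Your proposal is correct and takes essentially the same route as the paper: the same $|E|+1$ points $(\chi(\emptyset),k)$ and $(\chi(\setdef{e}),k)$ for $e \in E$ establish the two facet claims, and the non-facet claim is settled by expressing $y \leq 1$ as a sum of other valid inequalities (the paper sums $y \leq x_{e_1}$ with $x_{e_1} \leq 1$, which is just your combination with the degree constraint and nonnegativities pre-aggregated). One small slip: your opening sentence promises $|E|+2$ affinely independent points, but a facet of these full-dimensional polytopes in $\R^E \times \R$ requires only $|E|+1$ such points on the face — which is exactly the number you actually exhibit — so the argument itself is fine.
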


\begin{proof}
  For fixed value $k \in \setdef{0,1}$, the point $(\chi(\emptyset), k)$ and the points $(\chi(\setdef{e}), k)$ for all $e \in E$ are $|E|+1$ affinely independent points.
  For $k = 0$, they are contained in all three polytopes and satisfy $y \geq 0$ with equality, which proves the first statement.
  For $k = 1$, they are contained in $\PmatchOneUp$ and satisfy $y \leq 1$ with equality,
  which proves one direction of the second statement.
  For the reverse direction, observe that $y \leq 1$ is the sum of Inequality~\eqref{ConstraintMatchingQuadraticGood}
  for $i \in \setdef{1,2}$ and $x_{e_i} \leq 1$, which concludes the proof.
\end{proof}

\begin{proposition}
  For $i^*=1,2$,
  Inequalities~\eqref{ConstraintMatchingQuadraticGood}
  define facets for $\PmatchOne$ and $\PmatchOneDown$.
\end{proposition}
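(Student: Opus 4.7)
By the symmetry between $e_1$ and $e_2$, it suffices to treat $i^* = 1$. Since both $\PmatchOne$ and $\PmatchOneDown$ are full-dimensional in $\R^E \times \R$, we need to exhibit $|E|+1$ affinely independent points inside $\PmatchOne$ (and hence also inside $\PmatchOneDown$) that satisfy $y = x_{e_1}$ with equality. Once this is done, the statement follows for both polytopes simultaneously.

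The plan is to use the following family of $|E|+1$ vectors:
\begin{itemize}
\item $(\chi(\emptyset), 0)$;
\item $(\chi(\setdef{e}), 0)$ for every $e \in E \setminus \setdef{e_1}$, giving $|E| - 1$ points;
\item $(\chi(\setdef{e_1,e_2}), 1)$.
\end{itemize}
Each of these is the pair associated with an honest matching together with the correct $y$-value dictated by the biconditional $y = 1 \iff \setdef{e_1,e_2} \subseteq M$, so every point lies in $\PmatchOne$. Checking $y = x_{e_1}$: the first point and the second family give $0 = 0$ (since $e_1$ is excluded from the singleton matchings), and the last point gives $1 = 1$.

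For affine independence, subtract $(\chi(\emptyset),0)$ from each of the remaining $|E|$ vectors. The resulting differences consist of $|E| - 1$ distinct standard unit vectors $\unitvec{e}$ for $e \in E \setminus \setdef{e_1}$ (with zero $y$-coordinate), together with the vector $\unitvec{e_1} + \unitvec{e_2} + \unitvec{y}$. The latter has a nonzero $y$-coordinate while the others do not, and once one subtracts a multiple of $\unitvec{e_2}$ to eliminate the $e_2$-entry, the $e_1$-coordinate witnesses independence from the remaining unit vectors. Hence the $|E|$ differences are linearly independent, which is exactly the required affine independence.

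The main (and only) obstacle is essentially bookkeeping: we must avoid placing $e_1$ itself in any of the ``$y=0$'' singleton matchings, because that would violate $y = x_{e_1}$; this is why $e_1$ is specifically excluded from the second family and is instead ``absorbed'' into the $y=1$ point $(\chi(\setdef{e_1,e_2}), 1)$. Since all points constructed are characteristic vectors of matchings with the biconditional $y=1 \iff \setdef{e_1,e_2}\subseteq M$, they also lie in $\PmatchOneDown$, so both facet claims follow at once.
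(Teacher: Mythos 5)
Your proof is correct and follows essentially the same route as the paper: the same $|E|+1$ points $(\chi(\emptyset),0)$, $(\chi(\setdef{e}),0)$ for $e \in E \setminus \setdef{e_1}$, and $(\chi(\setdef{e_1,e_2}),1)$, shown to be affinely independent, tight for the inequality, and contained in both polytopes. You merely add the explicit symmetry reduction to $i^*=1$ and spell out the affine-independence check that the paper leaves implicit.
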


\begin{proof}
  Let $i^* \in \setdef{1,2}$.
  The points $(\chi(\emptyset), 0)$ and $(\chi(\setdef{e_1,e_2}),1)$ 
  and the points $(\chi(\setdef{e}),0)$ for all $e \in E \setminus \setdef{e_{i^*}}$
  are $|E|+1$ affinely independent points that are contained in both polytopes and satisfy $x_{i^*} \leq y$ with equality,
  which proves the statement.
\end{proof}

For the remaining two proofs we will consider a set $S^* \subseteq U \dcup W$ of nodes
and denote by $U^* \coloneqq S^* \cap U$ and $W^* \coloneqq S^* \cap W$ the induced sides of the bipartition.
For a matching $M$ in $K_{m,n}$ we denote by $y(M) \in \setdef{0,1}$ its corresponding $y$-value,
i.e., $y(M) = 1$ if and only if $e_1,e_2 \in M$.
Note that this implies $(\chi(M),y(M)) \in \PmatchOne$.
Another concept from matching theory also turns out to be useful:
We say that a matching is \emph{near-perfect} in a set of nodes if
it matches all nodes but one of this set.

\begin{proposition}
  For all $S^* \in \facetsDown$,
  Inequalities~\eqref{ConstraintMatchingQuadraticDown}
  define facets for $\PmatchOne$ and $\PmatchOneDown$.
\end{proposition}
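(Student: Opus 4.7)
Fix $S^* \in \facetsDown$ and, by the $U$-$W$ symmetry, assume $S^* \cap \Vspecial = \{u_1,u_2\}$ with $|U^*|=|W^*|+1$, where $U^*:=S^*\cap U$ and $W^*:=S^*\cap W$; the inequality then reads $x(E[S^*])+y \leq |W^*|$. My plan is to show that every affine equation $\sum_{e\in E}\alpha_e x_e+\beta y=\gamma$ holding on all vertices of $\PmatchOne$ that lie on the face $F$ defined by equality must be a scalar multiple of the facet equation; combined with full-dimensionality of $\PmatchOne$ this gives the facet, and since the same tight points lie in $\PmatchOneDown\supseteq\PmatchOne$ they also witness a facet of $\PmatchOneDown$.

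I first classify the tight vertices: $(\chi(M),y(M))$ lies on $F$ iff either (i) $y(M)=0$ and $M$ is near-perfect on $S^*$ with its unmatched node in $U^*$, or (ii) $y(M)=1$ (so $e_1,e_2\in M$ and $u_1,u_2$ are matched to $w_1,w_2\notin S^*$) and $M\cap E[S^*]$ is a perfect matching between $U^*\setminus\{u_1,u_2\}$ and $W^*\setminus\{w^0\}$ for some $w^0\in W^*$ (both sides of size $|W^*|-1$). Any such matching on $S^*$ can be freely extended by an arbitrary matching on $V\setminus S^*$, because $K_{m,n}$ is complete bipartite.

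For the second step, for every $e^*\in E\setminus E[S^*]$ I exhibit two tight matchings differing only on $e^*$, forcing $\alpha_{e^*}=0$. If $e^*\subseteq V\setminus S^*$, take any tight $y=0$ matching $M$ whose endpoints avoid $e^*$ and compare with $M\cup\{e^*\}$. If $e^*\in\delta(S^*)$ has its $S^*$-endpoint $u\in U^*$ (including the cases $e^*\in\{e_1,e_2\}$), complete $e^*$ by a perfect matching between $W^*$ and $U^*\setminus\{u\}$ (both of size $|W^*|$) and compare with removing $e^*$. If $e^*\in\delta(S^*)$ has its $S^*$-endpoint $w\in W^*$, then no tight $y=0$ matching can contain $e^*$, so I invoke a $y=1$ matching instead: combine $\{e_1,e_2,e^*\}$ with a perfect matching between $U^*\setminus\{u_1,u_2\}$ and $W^*\setminus\{w\}$ (both of size $|W^*|-1$) and again compare with removing $e^*$. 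After this step the equation reduces to $\sum_{e\in E[S^*]}\alpha_e x_e+\beta y=\gamma$.

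Finally, restricting to $y=0$ tight vertices yields $\sum_{e\in M\cap E[S^*]}\alpha_e=\gamma$ for every near-perfect matching $M$ of $S^*$ with unmatched node in $U^*$. Bipartite 4-cycle exchanges inside $G[S^*\setminus\{u^0\}]\cong K_{|W^*|,|W^*|}$ give $\alpha_{\{u,w\}}=p_u+q_w$, and the alternating swap along a 2-edge path $u^0$-$w_0$-$u^1$ (which transforms a near-perfect matching with $u^0$ unmatched into one with $u^1$ unmatched) yields $\alpha_{\{u^0,w_0\}}=\alpha_{\{u^1,w_0\}}$, so the $U^*$-potential $p_u$ collapses to a single constant $p$. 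Substituting a $y=1$ tight vertex with unmatched $w^0\in W^*$ gives $(|W^*|-1)p+\sum_{w\neq w^0}q_w+\beta=\gamma$; varying $w^0$ forces $q_w$ to be a constant $q$, and then the $y=0$ identity $|W^*|(p+q)=\gamma$ yields $\beta=p+q$, so the equation is $(p+q)$ times the facet equation. The main obstacle will be the $\delta(S^*)$-case with endpoint in $W^*$, since no $y=0$ tight matching uses such an edge and we must rely on the $y=1$ construction; its existence hinges on the parity condition $|U^*|=|W^*|+1$ baked into the definition of $\facetsDown$. The degenerate case $|W^*|=1$ is worth a separate direct check, since the exchange arguments degenerate even though the conclusion still goes through.
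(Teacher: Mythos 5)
Your proposal is correct and follows essentially the same route as the paper's proof: the same w.l.o.g.\ normalization, the same characterization of tight vertices (near-perfect matchings in $E[S^*]$ with $y=0$, and $\{e_1,e_2\}$ plus a near-perfect matching of $S^*\setminus\{u_1,u_2\}$ with $y=1$), and the same exchange constructions — in particular using the $y=1$ matchings for cut edges with $W^*$-endpoint and for pinning the $y$-coefficient. The only differences are presentational (affine-hull/equation-space argument instead of a dominating valid inequality, and node potentials instead of pairwise coefficient equalities), and your deferred $|W^*|=1$ check indeed goes through directly.
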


\begin{proof}
  Let $S^* \in \facetsDown$.
  We will assume w.l.o.g.\ that $|U^*| = |W^*| + 1$ (i.e., $u_1,u_2 \in S^*$), since the proof for $|U^*| = |W^*| - 1$ is similar.
  Let $\mathcal{M}$ denote the set of matchings $M$ in $K_{m,n}$ that
  induce a near-perfect matching in $E[S^*]$ or
  induce a near-perfect matching in $E[S^* \setminus \Vspecial]$
  and contain edges $e_1$ and $e_2$.
  In the first case we have $|M \cap E[S^*]| = \frac{1}{2}(|S^*|-1)$ and $y(M) = 0$,
  and in the second case we have $|M \cap E[S^*]| = \frac{1}{2}(|S^*|-3)$ and $y(M) = 1$.
  Hence, for all $M \in \mathcal{M}$, the vector $(\chi(M),y(M))$ satisfies Inequality~\eqref{ConstraintMatchingQuadraticDown} with equality.

  Let $\scalprod{c}{x} + \gamma y \leq \delta$ dominate Inequality~\eqref{ConstraintMatchingQuadraticDown} for $S = S^*$,
  i.e., it is valid for $\PmatchOneDown$
  and for all $M \in \mathcal{M}$ we have $\scalprod{c}{\chi(M)} + \gamma y(M) = \delta$.
  We now analyze the coefficients and the right-hand side of the inequality.
  \begin{enumerate}
  \item[(i)]
    Let $e \in E \setminus E[S^*]$.
    If $e$ intersects $S^*$, then let $v \in e \cap S^*$ be its endnode in $S^*$, otherwise
    let $v \in S^*$ be arbitrary.
    If $v \in U^*$, then let $M_1$ be a perfect matching in $E[S^* \setminus \setdef{v}]$
    (which exists due to $|U^* \setminus \setdef{v}| = |W^*|$).
    Otherwise, let $M'$ be a perfect matching in $E[S^* \setminus \setdef{v,u_1,u_2}]$
    (which exists due to $|U^* \setminus \setdef{u_1,u_2}| = |W^* \setminus \setdef{v}|$),
    and extend it to the matching $M_1 \coloneqq M' \cup \setdef{e_1,e_2}$.
    Then $e$ does not intersect any edge of $M_1$ and thus $M_2 \coloneqq M_1 \dcup \setdef{e}$ is also a matching
    that satisfies $y(M_1) = y(M_2)$.
    By construction we have $M_1,M_2 \in \mathcal{M}$,
    and hence $\scalprod{c}{\chi(M_1)} + \gamma y(M_1) = \delta = \scalprod{c}{\chi(M_2)} + \gamma y(M_2)$.
    This proves $c_e = 0$.
  \item[(ii)]
    Let $u \in U^*$ and let $e = \setdef{u,v}$ and $f = \setdef{u,w}$ be two incident edges with 
    endnodes $v,w \in W^*$.
    Let $M_1$ be a perfect matching in $E[S^* \setminus \setdef{v}]$ that uses edge $f$.
    Then $M_2 \coloneqq (M_1 \setminus \setdef{f}) \dcup \setdef{e}$ is perfect in $E[S^* \setminus \setdef{w}]$.
    Clearly, $M_1,M_2 \in \mathcal{M}$ by construction,
    and we obtain $\scalprod{c}{\chi(M_1)} + \gamma y(M_1) = \delta = \scalprod{c}{\chi(M_2)} + \gamma y(M_2)$,
    i.e., $c_f = c_e$.
  \item[(iii)]
    If $|W^*| \geq 2$, then also $|U^*| \geq 3$.
    Let $v,w \in W^*$ be two nodes, let $u \in U^* \setminus \setdef{u_1,u_2}$,
    and let $e \coloneqq \setdef{u,v}$ and $f \coloneqq \setdef{u,w}$.
    Let $M'$ be a perfect matching in $E[S^* \setminus \setdef{u_1,u_2,u,v,w}]$ 
    (which exists due to $|U^* \setminus \setdef{u_1,u_2,u}| = |W^* \setminus \setdef{v,w}|$).
    Define matchings $M_1 \coloneqq M' \dcup \setdef{e,e_1,e_2}$ and $M_2 \coloneqq M' \dcup \setdef{f,e_1,e_2}$
    and observe that $M_1,M_2 \in \mathcal{M}$ and $y(M_1) = 1 = y(M_2)$.
    Thus, $\scalprod{c}{\chi(M_1)} + \gamma y(M_1) = \delta = \scalprod{c}{\chi(M_2)} + \gamma y(M_2)$,
    i.e., $c_e = c_f$.
  \item[(iv)]
    Let $M_1$ be a perfect matching in $E[S^* \setminus \setdef{u_1}]$ and
    let $e \in M_1$ be the edge that matches $u_2$.
    Define matching $M_2 \coloneqq (M_1 \setminus \setdef{e}) \cup \setdef{e_1,e_2}$,
    and note that $M_1,M_2 \in \mathcal{M}$, $y(M_1) = 0$ and $y(M_2) = 1$.
    By (i), we have $c_{e_1} = c_{e_2} = 0$,
    and using
    $\scalprod{c}{\chi(M_1)} + \gamma y(M_1) = \delta = \scalprod{c}{\chi(M_2)} + \gamma y(M_2)$,
    we obtain $c_e = \gamma$.
  \end{enumerate}
  The arguments above already fix $(c,\gamma)$ up to multiplication with a scalar.
  Hence we can assume that $\gamma = 1$, which proves that
  $(c,\gamma)$ is equal to the coefficient vector of Inequality~\eqref{ConstraintMatchingQuadraticDown} for $S = S^*$.
  Since there always exists a near-perfect matching $M$ in $E[S^*]$, and
  since such a matching has cardinality $|M| = \frac{1}{2}(|S^*|-1)$,
  we derive $\delta = \frac{1}{2}(|S^*|-1)$,
  which concludes the proof.
\end{proof}

\begin{proposition}
  For all $S^* \in \facetsUp$,
  Inequalities~\eqref{ConstraintMatchingQuadraticUp}
  define facets for $\PmatchOne$ and $\PmatchOneUp$.
\end{proposition}

\begin{proof}
  Let $S^* \in \facetsUp$.
  We will assume w.l.o.g.\ that $u_1,w_2 \in S^*$, since the proof for $u_2,w_1 \in S^*$ is similar.
  Let $\mathcal{M}$ denote the set of matchings $M$ in $K_{m,n}$ that
  either induce a perfect matching in $E[S^*]$
  or contain exactly one edge $e \in \setdef{e_1,e_2}$ and induce a near-perfect matching in $E[S^* \setminus e]$
  or contain both, $e_1$ and $e_2$, and induce a perfect matching in $E[S^* \setminus \setdef{u_1,w_2}]$.
  In the first two cases we have $|M \cap (E[S^*] \cup \setdef{e_1,e_2})| = \frac{1}{2}|S^*|$ and $y(M) = 0$,
  and in the third case we have $|M \cap (E[S^*] \cup \setdef{e_1,e_2})| = \frac{1}{2}(|S^*|) + 1$ and $y(M) = 1$.
  Hence, for all $M \in \mathcal{M}$, the vector $(\chi(M),y(M))$
  satisfies Inequality~\eqref{ConstraintMatchingQuadraticUp} with equality.
  Let $\scalprod{c}{x} + \gamma y \leq \delta$ dominate Inequality~\eqref{ConstraintMatchingQuadraticUp} for $S = S^*$,
  i.e., it is valid for $\PmatchOneUp$ 
  and for all $M \in \mathcal{M}$ we have $\scalprod{c}{\chi(M)} + \gamma y(M) = \delta$.
  We now analyze the coefficients and the right-hand side of the inequality.
  \begin{itemize}
  \item
    Let $e \in E \setminus (E[S^*] \cup \setdef{e_1,e_2})$.
    If $e$ intersects $S^*$, then let $v \in e \cap S^*$ be its endnode in $S^*$, otherwise
    let $v \in S^*$ be arbitrary.
    If $v \in U^*$, then let $M'$ be a perfect matching in $E[S^* \setminus \setdef{v,w_2}]$
    (which exists due to $|U^* \setminus \setdef{v}| = |W^* \setminus \setdef{w_2}|$),
    and extend it to the matching $M_1 \coloneqq M' \cup \setdef{e_2}$.
    Otherwise, let $M'$ be a perfect matching in $E[S^* \setminus \setdef{v,u_1}]$
    (which exists due to $|U^* \setminus \setdef{u_1}| = |W^* \setminus \setdef{v}|$),
    and extend it to the matching $M_1 \coloneqq M' \cup \setdef{e_1}$.
    Then $e$ does not intersect any edge of $M_1$ and thus $M_2 \coloneqq M_1 \dcup \setdef{e}$ is also a matching
    that satisfies $y(M_1) = 0 = y(M_2)$.
    By construction we have $M_1,M_2 \in \mathcal{M}$,
    and hence $\scalprod{c}{\chi(M_1)} + \gamma y(M_1) = \delta = \scalprod{c}{\chi(M_2)} + \gamma y(M_2)$.
    This proves $c_e = 0$.
  \item
    Let $u \in S^* \setminus \setdef{u_1,w_2}$ and let $e = \setdef{u,v}$ and $f = \setdef{u,w}$ be two incident edges with 
    endnodes $v,w \in S^*$.
    W.l.o.g.\ we can assume $u \in U^*$, since the case of $u \in W^*$ is similar.
    Let $M'$ be a perfect matching in $E[S^* \setminus \setdef{u_1,u,v,w}]$
    (which exists due to $|U^* \setminus \setdef{u_1,u}| = |W^* \setminus \setdef{v,w}|$).
    Define the two matchings $M_1 \coloneqq M' \dcup \setdef{e_1,e}$ and $M_2 \coloneqq M' \dcup \setdef{e_1,f}$,
    and observe that $M_1,M_2 \in \mathcal{M}$ and $y(M_1) = 0 = y(M_2)$.
    From $\scalprod{c}{\chi(M_1)} + \gamma y(M_1) = \delta = \scalprod{c}{\chi(M_2)} + \gamma y(M_2)$
    we obtain that $c_e = c_f$.
  \item
    Let $M'$ be a perfect matching in $E[S^* \setminus \setdef{u_1,w_2}]$.
    Define the matchings
    $M_1 \coloneqq M' \dcup \setdef{\setdef{u_1,w_2}}$,
    $M_2 \coloneqq M' \dcup \setdef{e_1}$,
    $M_3 \coloneqq M' \dcup \setdef{e_{2}}$ and
    $M_4 \coloneqq M' \dcup \setdef{e_1,e_2}$.
    By construction we have $M_1,M_2,M_3,M_4 \in \mathcal{M}$, $y(M_1) = y(M_2) = y(M_3) = 0$ and $y(M_4) = 1$.
    Thus, $\scalprod{c}{\chi(M_i)} + \gamma y(M_i) = \delta$ for $i=1,2,3,4$,
    which proves
    $c_{\setdef{u_1,w_2}} = c_{e_1} = c_{e_2} = c_{e_1} + c_{e_2} - \gamma$.
  \end{itemize}
  The arguments above already fix $(c,\gamma)$ up to multiplication with a scalar.
  Hence we can assume $\gamma = 1$, which proves that
  $(c,\gamma)$ is equal to the coefficient vector of Inequality~\eqref{ConstraintMatchingQuadraticUp} for $S = S^*$.
  Since there always exists a perfect matching $M$ in $E[S^*]$, and
  since such a matching has cardinality $|M| = \frac{1}{2}|S^*|$,
  we derive $\delta = \frac{1}{2}|S^*|$.
  This concludes the proof.
\end{proof}

\section{Separation problems}
\label{SectionSeparation}

By the polynomial-time equivalence of separation and optimization~\cite{GroetschelLS81,KarpP80,PadbergR81},
using the fact that we can optimize over the polytopes in polynomial time,
it is evident that the separation problems for the three polytope families can be
solved in polynomial time.
Furthermore, Klein~\cite{Klein14} presents separation algorithms for Constraints~\eqref{ConstraintMatchingQuadraticDown} and~\eqref{ConstraintMatchingQuadraticUp}
in the context of perfect matchings.
A closer look into the proofs reveals that the correctness of these algorithms only requires the ``$\leq$''-part of Equations~\eqref{ConstraintPerfectMatchingDegree}, i.e., the algorithms work for arbitrary matchings as well.
In fact, they require that, for each of the inequality classes, a separation algorithm (such as the famous Padberg-Rao algorithm~\cite{PadbergR82}) for the Blossom Inequalities has to be run in two (symmetric) auxiliary graphs.

In view of this fact it is desirable to find separation algorithms that require only a single execution of such a separation routine per inequality class.
Fortunately, it turns out that the constructions from Sections~\ref{SectionDown} and~\ref{SectionUp} are in fact
reductions of the respective separation problems to the separation problem for Blossom Inequalities
in the respective auxiliary graphs, and hence have this desirable property.
In the remainder of this section we present the details of this observation.

\begin{proposition}
  The separation problem for $\PmatchOneDown$ can be solved in polynomial time.
\end{proposition}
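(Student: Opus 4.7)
The plan is to exploit the very construction underlying Section~\ref{SectionDown}: given any candidate point $(\hat{x}, \hat{y}) \in \Q^E \times \Q$, first check in polynomial time the $O(|E|+|U|+|W|)$ inequalities from \eqref{ConstraintMatchingNonnegative}, \eqref{ConstraintMatchingDegree}, \eqref{ConstraintMatchingBound} and \eqref{ConstraintMatchingQuadraticGood}, returning any one of them that is violated. By Theorem~\ref{TheoremDownComplete}, the only remaining task is to separate the (exponentially many) inequalities in the class~\eqref{ConstraintMatchingQuadraticDown}.

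For that task, I would build exactly the auxiliary graph $\bar{G} = (U \dcup W, \bar{E})$ and the lifted vector $\bar{x} \in \Q^{\bar{E}}$ from Section~\ref{SectionDown}, and then call a polynomial-time separation oracle (e.g.\ the Padberg--Rao algorithm~\cite{PadbergR82}) for the matching polytope of $\bar{G}$ on the point $\bar{x}$. The degree and nonnegativity constraints of that matching polytope translate directly into constraints we have already verified (using $\hat{y} \geq 0$ and the fact that \eqref{ConstraintMatchingQuadraticGood} forces $\bar{x}_{e_i} \geq 0$), so any strictly violated inequality produced by the oracle must be a Blossom Inequality $\bar{x}(E_{\bar{G}}[\bar{S}]) > \frac{1}{2}(|\bar{S}|-1)$ for some odd $|\bar{S}|$. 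If none is found, Theorem~\ref{TheoremDownComplete} together with the reverse direction of the reduction certifies $(\hat{x}, \hat{y}) \in \PmatchOneDown$.

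The translation back is the core combinatorial step, and mirrors the case analysis in the proof of Claim~\ref{TheoremDownInMatchingPolytope}. A Blossom Inequality in $\bar{G}$ can only be violated by $\bar{x}$ when $\bar{x}(E_{\bar{G}}[\bar{S}]) > \hat{x}(E[\bar{S}])$, which forces $e_u$ or $e_w$ to lie in $E_{\bar{G}}[\bar{S}]$ while being ``uncompensated'' by $e_1$ and $e_2$. Walking through the possibilities for $\bar{S} \cap \Vspecial$ shows that the only way to net a positive contribution is $\bar{S} \cap \Vspecial \in \{ \{u_1,u_2\}, \{w_1,w_2\} \}$, i.e.\ $\bar{S} \in \inequalitiesDown$. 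In that case $\bar{x}(E_{\bar{G}}[\bar{S}]) = \hat{x}(E[\bar{S}]) + \hat{y}$, so the violated Blossom Inequality in $\bar{G}$ is literally the inequality~\eqref{ConstraintMatchingQuadraticDown} for the set $\bar{S}$, which is valid for $\PmatchOneDown$ by Proposition~\ref{TheoremDownInequalities} (and hence can be returned as the separating inequality, regardless of whether $\bar{S} \in \facetsDown$ or merely $\bar{S} \in \inequalitiesDown \setminus \facetsDown$).

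The only step that requires genuine thought is verifying that the case analysis leaves no other way to violate a Blossom Inequality of $\bar{G}$; all other steps are routine checks or invocations of well-known polynomial-time subroutines. Since the oracle runs in polynomial time and the pre- and post-processing are linear in the encoding of $\bar{G}$, the overall separation procedure is polynomial.
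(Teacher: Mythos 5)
Your proposal is correct and follows essentially the same route as the paper: after directly checking Constraints~\eqref{ConstraintMatchingNonnegative}--\eqref{ConstraintMatchingQuadraticGood}, you build the auxiliary graph $\bar{G}$ and point $\bar{x}$ from Section~\ref{SectionDown} and run a single Blossom-inequality separation (Padberg--Rao) there, with correctness resting on the same correspondence used in Claim~\ref{TheoremDownInMatchingPolytope} and Proposition~\ref{TheoremDownInequalities}. Your explicit translation of a violated Blossom Inequality for a set $\bar{S} \in \inequalitiesDown$ back into a valid violated inequality of type~\eqref{ConstraintMatchingQuadraticDown} only spells out a step the paper leaves implicit, so there is no substantive difference.
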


\begin{proof}
  Let $(\hat{x},\hat{y}) \in \R^E \times \R$.
  We first check directly whether one of the Constraints~\eqref{ConstraintMatchingNonnegative}, \eqref{ConstraintMatchingDegree}, \eqref{ConstraintMatchingBound}
  or \eqref{ConstraintMatchingQuadraticGood} is violated and return a violated inequality if one exists.
  It remains to find a violated Inequality~\eqref{ConstraintMatchingQuadraticDown} if possible.

  To this end, construct $\bar{G}$ and $\bar{x}$ as in Section~\ref{SectionDown}.
  On the one hand, if $(\hat{x},\hat{y})$ violates Inequality~\eqref{ConstraintMatchingQuadraticDown} for some $S \in \facetsDown$, then $\bar{x}$ violates the corresponding Blossom Inequality in the auxiliary graph $\bar{G}$ (defined in Section~\ref{SectionDown}).
  On the other hand, if $(\hat{x},\hat{y}) \in \PmatchOneDown$, then $\bar{x}$ is in $\bar{G}$'s matching polytope, as proved in Claim~\ref{TheoremDownInMatchingPolytope}.
  Hence, in order to find a set $S \in \facetsDown$ that induces a violated Inequality~\eqref{ConstraintMatchingQuadraticDown} (if such a set exists) we just have to run the separation algorithm for the Blossom Inequalities in the graph $\bar{G}$ from Section~\ref{SectionDown} with respect to $\bar{x}$.
\end{proof}

Note that the proof above implies that the matching polytope for $\bar{G}$, intersected with the hyperplane defined by $x_{e_a} = x_{e_b}$,
is an extended formulation for $\PmatchOneDown$. 
In fact, the two polytopes must even be affinely isomorphic for dimension reasons.
This is also justified by the fact that for the proof of Lemma~\ref{TheoremDownCombination} we only used
the tightness of an Inequality~\eqref{ConstraintMatchingQuadraticDown} to control our matchings,
but not for the proof that $\bar{x}$ is in $\bar{G}$'s matching polytope.

This is different for the upward monotonization, for which we were not able
to identify a direct relation between $\PmatchOneUp$ and the matching polytope of the auxiliary graph.

\begin{proposition}
  The separation problem for $\PmatchOneUp$ can be solved in polynomial time.
\end{proposition}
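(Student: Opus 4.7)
The plan is to mirror the proof for $\PmatchOneDown$ by reducing separation for $\PmatchOneUp$ to Blossom separation in the auxiliary graph $\bar{G}$ of Section~\ref{SectionUp}. First I would dispatch the basic constraints~\eqref{ConstraintMatchingNonnegative}, \eqref{ConstraintMatchingDegree} and~\eqref{ConstraintMatchingBound} by direct inspection. In contrast to the downward case, I would additionally check the two explicit inequalities~\eqref{ConstraintMatchingQuadraticUp} for $S = \setdef{u_1,w_2}$ and $S = \setdef{u_2,w_1}$; these are precisely the hypotheses needed to invoke Proposition~\ref{TheoremUpInequalities}. Any violation found at this stage is returned immediately; otherwise proceed.

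Next, construct $\bar{G}$ and $\bar{x}$ exactly as in Section~\ref{SectionUp} and invoke a polynomial-time Blossom separator (e.g., Padberg--Rao~\cite{PadbergR82}) on $\bar{x}$ in $\bar{G}$. The correspondence established at the beginning of the proof of Claim~\ref{TheoremUpCut} shows that for every $S \in \facetsUp$ the augmented odd set $\bar{S} := S \cup \setdef{a}$ or $\bar{S} := S \cup \setdef{b}$ (depending on whether $u_1, w_2 \in S$ or $u_2, w_1 \in S$) satisfies $\bar{x}(E[\bar{S}]) - \frac{1}{2}(|\bar{S}|-1) = \hat{x}(E[S]) + \hat{x}_{e_1} + \hat{x}_{e_2} - \hat{y} - \frac{1}{2}|S|$. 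Hence a violation of~\eqref{ConstraintMatchingQuadraticUp} for some $S \in \facetsUp$ occurs if and only if the Blossom Inequality indexed by the corresponding $\bar{S}$ is violated by $\bar{x}$, so if $(\hat{x},\hat{y}) \notin \PmatchOneUp$ the separator will report a violated odd set.

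The decisive step is then translating a returned violated set $\bar{S}$ back into an inequality of $\PmatchOneUp$. Here I would reuse the case distinction in the proof of Claim~\ref{TheoremUpInMatchingPolytope}: once the basic constraints and the two explicit up-inequalities have been verified, Cases~1 and~2 there are dominated by Proposition~\ref{TheoremUpInequalities}~\eqref{TheoremUpInequalitiesHalfUp}, and Case~3 is dominated by~\eqref{TheoremUpInequalitiesFullUp}, so none of them can account for a violation. The only remaining possibility is Case~4, in which $S := \bar{S} \setminus \setdef{a,b}$ lies in $\facetsUp$ and, via the displayed identity, yields a violated~\eqref{ConstraintMatchingQuadraticUp} to return.

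The main obstacle I expect is precisely this translation step: as emphasized in the paragraph preceding the proposition, $\PmatchOneUp$ is not an affine image of the matching polytope of $\bar{G}$, so one cannot simply read off the original facet from the lifted one. The remedy is to pre-separate the two small up-inequalities so that the hypotheses of Proposition~\ref{TheoremUpInequalities} hold, after which the case analysis of Claim~\ref{TheoremUpInMatchingPolytope} does all the remaining work and the argument essentially amounts to recasting that analysis as a deterministic algorithm.
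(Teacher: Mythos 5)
Your proposal follows essentially the same route as the paper: pre-separate the trivial constraints together with the small inequality~\eqref{ConstraintMatchingQuadraticUp} so that Proposition~\ref{TheoremUpInequalities} applies, build $\bar{G}$ and $\bar{x}$ as in Section~\ref{SectionUp}, run a single Blossom separation, and use the case analysis of Claim~\ref{TheoremUpInMatchingPolytope} (with Case~4 giving the one-to-one correspondence) to translate a violated odd-set back into a violated inequality for $\facetsUp$. Your write-up is correct and in fact spells out the back-translation step somewhat more explicitly than the paper does, so no further comment is needed.
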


\begin{proof}
  Let $(\hat{x},\hat{y}) \in \R^E \times \R$.
  We first check directly whether one of the Constraints~\eqref{ConstraintMatchingNonnegative}, \eqref{ConstraintMatchingDegree}, \eqref{ConstraintMatchingBound} or \eqref{ConstraintMatchingQuadraticUp} for $S = \setdef{a_1,b_2}$ is violated and return a violated inequality if one exists.
  It remains to find a violated Inequality~\eqref{ConstraintMatchingQuadraticUp} (for $S \neq \setdef{a_1,b_2}$) if possible.

  To this end, construct $\bar{x}$ as in Section~\ref{SectionUp}.
  Since we checked the single Inequality~\eqref{ConstraintMatchingQuadraticUp} beforehand, the requirements for
  Proposition~\ref{TheoremUpInequalities} are satisfied.
  If $\bar{x}$ contains a negative entry, the contrapositive of Proposition~\ref{TheoremUpInequalities}~\LabelTheoremUpInequalitiesGood{} implies that all Inequalities~\eqref{ConstraintMatchingQuadraticUp} must be satisfied.
  Otherwise, the proof of Claim~\ref{TheoremUpInMatchingPolytope} immediately shows that $\bar{x}$ is in the matching polytope of $\bar{G}$ if and only if $(\hat{x},\hat{y})$ is in $\PmatchOneUp$.
  Furthermore, Case~3 in the proof shows a one-to-one correspondence  Inequalities~\eqref{ConstraintMatchingQuadraticUp} for $\PmatchOneUp$ and the Blossom Inequalities for $\bar{G}$'s matching polytope.
  Hence, in order to find a set $S \in \facetsUp$ that induces a violated Inequality~\eqref{ConstraintMatchingQuadraticUp} (if such a set exists) we just have to run the separation algorithm for the Blossom Inequalities in the graph $\bar{G}$ from Section~\ref{SectionUp} with respect to $\bar{x}$.
\end{proof}

\section{Generalization to capacitated b-matchings}
\label{SectionBMatchings}

\DeclareDocumentCommand\PbmatchOne{o}{P_{\text{\IfValueTF{#1}{\ensuremath{#1}}{$b$}-match}}^{\text{1Q}}}
\DeclareDocumentCommand\bCut{}{\mathcal{S}}

Consider again the complete bipartite graph $K_{m,n}$ with node sets $U$ and $W$, and edge set $E$.
For a vector $b \in \Z_+^{U \dcup W}$, a vector $x \in \Z_+^E$ that satisfies $x(\delta(v)) \leq b_v$ for each $v \in U \dcup W$ is called a \emph{$b$-matching}.
The goal of this section is to extend the polyhedral results of Section~\ref{SectionResults},
first to uncapacitated $b$-matchings and then to capacitated $b$-matchings, i.e., $b$-matchings that satisfy a capacity constraint $x_e \leq c_e$ for some vector $c \in \Z_+^E$ for every edge.

A special case of this problem is the one with $c = \onevec[E]$ and $b = 2 \cdot \onevec[V]$, where $\onevec$ denotes the all-ones vector. Here, feasible solutions correspond to sets of node-disjoint cycles.
Thus, our results will yield a polyhedral description for the \emph{cycle cover problem} with one linearized quadratic term for bipartite graphs.
This may be used to model the quadratic cycle cover problem in which costs also depend on two subsequent edges (see~\cite{GalbiatiGM14}).
In fact, since Hamiltonian paths are cycles as well, it may also be used for the quadratic TSP problem~\cite{Fischer13,FischerFJKMG14}, although bipartite graphs play no major role for this problem.

The overall proof strategy is common to both extensions, and hence we summarize it here.
We will start by proving that a certain set of inequalities is valid.
Then we will write the (integral) polytope $P$ in question as a projection of another (integral) polytope $Q$ of which we know the description in terms of inequalities.
We then consider an arbitrary point $x$ that satisfies the inequalities, and prove that there exists a pre-image (with respect to the projection map) $\bar{x} \in Q$.
This suffices since then $\bar{x}$ is a convex combination of vertices of $Q$, and thus $x$ is a convex combination of the projected vertices, i.e., $x \in P$.

\subsection{Uncapacitated b-matchings}

We start by generalizing the polyhedral results of Section~\ref{SectionResults} to $b$-matchings.
In order to linearize a product of two \emph{binary} variables, we assume that $b_v = 1$ holds for all nodes $v \in \Vspecial$.
Note that the variables are already binary if one endnode of every edge has this property, but we will be able to handle this more general case as soon as we introduce capacities.
We consider the polytope
\begin{multline*}
  \PbmatchOne \coloneqq \conv\{ (x,y) \in \Z_+^E \times \setdef{0,1} : x(\delta(v)) \leq b_v \text{ for all $v \in U \dcup W$} \\
    \text{ and $y = 1$ if and only if $x_{e_1} = x_{e_2} = 1$ }  \}.
\end{multline*}
Clearly, the variable bounds~\eqref{ConstraintMatchingNonnegative} and~\eqref{ConstraintMatchingBound} as well as the
inequalities
\begin{align*}
  y \leq x_{e_i} \qquad \text{ for $i=1,2$}, \tag{\ref{ConstraintMatchingQuadraticGood}}
\end{align*}
and the generalized degree constraints
\begin{alignat}{6}
  x(\delta(v))  & \leq b_v        &\qquad& \text{for all } v \in U \dcup W \label{ConstraintbMatchingDegree}
\end{alignat}
are valid for $\PbmatchOne$.
Using the notation $\bCut \coloneqq \setdef{ S \subseteq U \dcup W }[ \text{$e_1,e_2 \in \delta(S)$} ]$,
we can state the generalizations of Constraints~\eqref{ConstraintMatchingQuadraticDown} and~\eqref{ConstraintMatchingQuadraticUp} as
\begin{align}
  x(E[S]) + y                         &\leq \left\lfloor \tfrac{1}{2}b(S) \right\rfloor \nonumber \\
                                      & \text{for all } S \in \bCut \text{ with } b(S) \text{ odd, and} \label{ConstraintbMatchingQuadraticDown} \\
  x(E[S]) + x_{e_1} + x_{e_2} - y     &\leq \left\lfloor \tfrac{1}{2}(b(S)+1) \right\rfloor \nonumber \\
                                      & \text{for all } S \in \bCut \text{ with } b(S) \text{ even.} \label{ConstraintbMatchingQuadraticUp}
\end{align}
Note that due to the parity conditions on $b(S)$ we could make the right-hand sides more explicitly,
e.g., by replacing $\lfloor \tfrac{1}{2}(b(S)+1) \rfloor$ by $\tfrac{1}{2}b(S)$.
We still prefer the slightly more complicated form since we will soon observe that the inequalities (the way they are stated) remain valid if the parity of $b(S)$ is different.

Our main result for $b$-matchings is then the following:
\begin{theorem}
  \label{TheoremUncapacitatedbMatching}
  For $b \in \Z_+^V$ with $b_v = 1$ for all $v \in \Vspecial$, $\PbmatchOne$ is equal to the set of $(x,y) \in \R^E \times \R$ that satisfy Constraints~\eqref{ConstraintMatchingNonnegative}, \eqref{ConstraintMatchingBound}, \eqref{ConstraintMatchingQuadraticGood}, \eqref{ConstraintbMatchingDegree}, \eqref{ConstraintbMatchingQuadraticDown}, and~\eqref{ConstraintbMatchingQuadraticUp}.
\end{theorem}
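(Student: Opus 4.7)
The plan is to reduce to Theorem~\ref{TheoremComplete} via the classical split-node construction. Form a complete bipartite graph $\tilde G$ by replacing each $v \in U \dcup W$ with $b_v$ copies and joining every pair of copies of originally adjacent nodes; since $b_v = 1$ for at least one endpoint of each $e_i$, each $e_i$ has a canonical disjoint lift $\tilde e_i \in \tilde E$. Let $\pi \colon \R^{\tilde E} \times \R \to \R^E \times \R$ be the affine map summing $\tilde x$-values over copies of each edge and fixing $y$. Define $Q \subseteq \R^{\tilde E} \times \R$ as the convex hull of pairs $(\chi(\tilde M), y)$ for matchings $\tilde M$ of $\tilde G$ with $y = 1$ iff some copy of $e_1$ and some copy of $e_2$ lie in $\tilde M$. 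A standard exchange argument, using that the endpoint of $e_i$ with capacity $1$ forces $x_{e_i} \in \{0,1\}$ for every $b$-matching, shows $\PbmatchOne = \pi(Q)$.

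For validity ($\PbmatchOne \subseteq P$, with $P$ the polytope of the statement), I would argue combinatorially on characteristic vectors of $b$-matchings $M$: the degree and bound constraints together with Inequality~\eqref{ConstraintMatchingQuadraticGood} are immediate, while the Blossom-type inequalities~\eqref{ConstraintbMatchingQuadraticDown} and~\eqref{ConstraintbMatchingQuadraticUp} follow by case analysis on whether $e_1, e_2 \in M$, paralleling the arguments in the proofs of Theorems~\ref{TheoremDownComplete} and~\ref{TheoremUpComplete} with $|S|$ replaced throughout by $b(S)$ and the floor functions accounting for parity. For the reverse inclusion $P \subseteq \PbmatchOne$, given $(x, y) \in P$, I would lift to a $G$-symmetric point $(\tilde x, y) \in \R^{\tilde E} \times \R$ (where $G$ denotes the automorphism group of $\tilde G$ that permutes copies of each original node), decompose $(\tilde x, y)$ as a convex combination of matching vertices of $Q$ using the inequality description of $Q$ inherited from Theorem~\ref{TheoremComplete} on $\tilde G$, and project via $\pi$ to obtain a convex decomposition of $(x, y)$ into $b$-matching vertices of $\PbmatchOne$.

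The main obstacle is extracting the right inequality description of $Q$ from Theorem~\ref{TheoremComplete} applied to $\tilde G$ and showing that on $G$-symmetric lifts it coincides with the description of $P$. Here I plan to use the classical $b$-matching reduction to \emph{full} subsets: the Blossom-type inequalities indexed by $\facetsDown$ and $\facetsUp$ on $\tilde G$, averaged over the $G$-action, are dominated by those supported on subsets $\tilde S \subseteq \tilde V$ containing all or no copies of each original node, the domination being a straightforward counting/convexity argument. A full subset $\tilde S$ corresponds bijectively to some $S \subseteq U \dcup W$ with $|\tilde S| = b(S)$; the condition $\tilde S \cap \Vspecial \in \{\{u_1, u_2\}, \{w_1, w_2\}, \{u_1, w_2\}, \{u_2, w_1\}\}$ required by $\facetsDown \cup \facetsUp$ translates exactly to $e_1, e_2 \in \delta(S)$, that is, $S \in \bCut$, and the parity condition on $|\tilde S|$ becomes the parity of $b(S)$, yielding precisely~\eqref{ConstraintbMatchingQuadraticDown} (for $b(S)$ odd) and~\eqref{ConstraintbMatchingQuadraticUp} (for $b(S)$ even). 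Likewise, summing the degree constraints on $\tilde G$ over all copies of $v$ produces~\eqref{ConstraintbMatchingDegree}, and the symmetrization of $y \leq \tilde x_{\tilde e_i}$ produces~\eqref{ConstraintMatchingQuadraticGood}. The delicate step is handling the symmetry breaking at $\tilde e_1, \tilde e_2$ in the symmetric lift so that the decomposition of $(\tilde x, y)$ yields matchings whose $y$-value is correctly linked to the original $b$-matching.
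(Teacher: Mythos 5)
Your overall strategy is the one the paper uses: split each node $v$ into $b_v$ copies, lift the fractional point symmetrically, decompose the lift inside the one-quadratic-term matching polytope of the split graph via Theorem~\ref{TheoremComplete}, and project back. However, there is a genuine gap at exactly the step you defer: the definition of $Q$ and the linkage of $y$ to the special edges. The polytope you define, with ``$y=1$ iff some copy of $e_1$ and some copy of $e_2$ is used'', is not of the form $\PmatchOne(\tilde{G},\tilde{e}_1,\tilde{e}_2)$ --- its quadratic term is a product of two sums of edge variables rather than of two single variables --- so Theorem~\ref{TheoremComplete} gives no inequality description of it, and your plan to decompose the lifted point ``using the inequality description of $Q$ inherited from Theorem~\ref{TheoremComplete}'' has nothing to stand on. If you instead set $Q=\PmatchOne(\tilde{G},\tilde{e}_1,\tilde{e}_2)$ for fixed canonical copies, two things break whenever an endpoint of $e_1$ or $e_2$ is genuinely split: first, $\pi(Q)$ is strictly larger than the target polytope, since a matching that uses a non-canonical copy of $e_1$ together with a copy of $e_2$ projects to $x_{e_1}=x_{e_2}=1$ with $y=0$; second, the fully symmetric lift spreads the value $x_{e_1}$ over $b_{u_1}b_{w_1}$ copies, so $y\leq\tilde{x}_{\tilde{e}_1}$, i.e.\ Constraint~\eqref{ConstraintMatchingQuadraticGood} for the split instance, fails as soon as $y>x_{e_1}/(b_{u_1}b_{w_1})$. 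So the ``delicate step'' you flag is not a technicality to be postponed; as stated, neither variant of your reduction goes through.

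The paper resolves this by not splitting $u_1,u_2,w_1,w_2$ at all (its proof operates with $b_v=1$ for every $v\in\Vspecial$), so each $e_i$ has a unique lift $\bar{e}_i$ with $\bar{x}_{\bar{e}_i}=x_{e_i}$ and $\bar{y}=y$, and the linkage constraints transfer verbatim; to repair your argument you would either restrict to this situation or design a non-uniform lift at the special edges and then redo the symmetrization, which is not straightforward because concentrating $x_{e_i}$ on one copy interferes with the degree constraint at that copy. Apart from this point, your reduction of blossom sets on the split graph to ``full'' sets (all or no copies of each original node) is exactly the paper's argument: it takes a violated set $\bar{S}$ with the minimum number of partially met classes $B_v$, writes $\bar{x}(\bar{E}[\bar{S}])$ as the convex combination $\lambda\bar{x}(\bar{E}[\bar{S}_2])+\mu\bar{x}(\bar{E}[\bar{S}_1])$ with $\bar{S}_1=\bar{S}\setminus B_v$ and $\bar{S}_2=\bar{S}\cup B_v$, and combines the minimality assumption with the auxiliary bound $\bar{x}(\bar{E}[\bar{S}_1])+\bar{x}(\bar{E}[\bar{S}_2])+2\bar{y}\leq|\bar{S}_1|$ obtained from the degree constraints and $y\leq x_{e_i}$ (for the down-inequalities), respectively its analogue using $x_{e_1}+x_{e_2}-2y\leq 1$ (for the up-inequalities), with some care about the floors and a case distinction on $\lambda\lessgtr\tfrac{1}{2}$. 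So that part of your sketch is sound in spirit, but it is where the actual work sits, and it needs the extra inequalities just mentioned, not convexity alone.
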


Our completeness proof is a modification of a completeness proof for the $b$-matching polytope on non-bipartite graphs, as presented in Schrijver's book (see Theorem~31.2 in~\cite{Schrijver03}), which in turn is based on a construction by Tutte~\cite{Tutte54}.

\begin{proof}
   The proof is structured as follows.
   We first show validity of the inequalities and describe the construction of an extended formulation based on an auxiliary graph.
   To establish the completeness of our proposed inequality description of $\PbmatchOne$ we will then show that any point that satisfies the proposed inequalities can be lifted to a point in the extended formulation.

   \begin{claim}
      \label{TheoremUncapacitatedbMatchingValidity}
      Inequalities~\eqref{ConstraintbMatchingQuadraticDown} and~\eqref{ConstraintbMatchingQuadraticUp} are valid for $\PbmatchOne$ for arbitrary sets $S \subseteq U \dcup W$ with $e_1, e_2 \in \delta(S)$, regardless of the parity of $b(S)$.
   \end{claim}

   \begin{proof}[Proof of Claim~\ref{TheoremUncapacitatedbMatchingValidity}]
      Consider an integer vector $(x,y) \in \PbmatchOne$.
      We have
      \begin{align*}
         x(E[S]) + y
         \underset{\text{\eqref{ConstraintMatchingQuadraticGood}}}{\leq} \tfrac{1}{2} ( 2x(E[S]) + x_{e_1} + x_{e_2})
         \leq \tfrac{1}{2} \sum_{v \in S} x(\delta(v))
         \leq \tfrac{1}{2} b(S),
      \end{align*}
      where the second inequality holds since every edge whose $x$-variable appears once (resp.\ twice) has one (resp.\ both) endnode(s) in $S$, and the third inequality by the definition of $b$-matchings.
      Inequality~\eqref{ConstraintbMatchingQuadraticDown} now follows, since the left-hand side of the formula is integral, allowing us to round the right-hand side down.
      The fact that we added several valid inequalities shows that the inequality is redundant for $S$ with even $b(S)$, since rounding has no effect in this case.
      Similarly, we obtain
      \begin{align*}
         x(E[S]) + x_{e_1} + x_{e_2} - y
         &= \tfrac{1}{2}( 2x(E[S]) + x_{e_1} + x_{e_2} ) + \tfrac{1}{2} (x_{e_1} + x_{e_2} - y) - \tfrac{1}{2}y \\
         &\leq \tfrac{1}{2} \sum_{v \in S} x(\delta(v))  + \tfrac{1}{2} (x_{e_1} + x_{e_2} - y) - \tfrac{1}{2}y \\
         &\leq \tfrac{1}{2} \sum_{v \in S} x(\delta(v)) + \tfrac{1}{2} - \tfrac{1}{2}y
         \leq \tfrac{1}{2} (b(S) + 1) - \tfrac{1}{2}y
         \underset{\text{\eqref{ConstraintMatchingBound}}}{\leq} \tfrac{1}{2} (b(S) + 1),
      \end{align*}
      where the first inequality holds since every edge whose $x$-variable appears in the first summand once (resp.\ twice) has one (resp.\ both) endnode(s) in $S$, the second due to $x_{e_1} \cdot x_{e_2} = y$, and the third by the definition of $b$-matchings.
      Inequality~\eqref{ConstraintbMatchingQuadraticUp} now follows, since the left-hand side of the formula is integral, allowing us to round the right-hand side down.
      Again, the fact that we added several valid inequalities shows that the inequality is redundant for $S$ with odd $b(S)$, since rounding has no effect in this case.
      This concludes the proof of the claim.
   \end{proof}

   We now continue with the completeness of the formulation.
   The theorem holds for $b = \onevec$ by Theorem~\ref{TheoremComplete}, since Constraints~\eqref{ConstraintbMatchingQuadraticDown} and~\eqref{ConstraintbMatchingQuadraticUp} imply Constraints~\eqref{ConstraintMatchingQuadraticDown} and~\eqref{ConstraintMatchingQuadraticUp} in this case.

\medskip

  \paragraph{Extended formulation and auxiliary graph.}
  Consider the graph $\bar{G} = (\bar{U} \dcup \bar{W}, \bar{E})$ obtained from $K_{m,n}$ by splitting each node $v \in U \dcup W$ in $b_v$ copies (denoted by the set $B_v \subseteq \bar{U} \dcup \bar{W}$).
  By the assumption $b_v = 1$ for all $v \in \Vspecial$, the nodes $u_1$, $u_2$, $w_1$ and $w_2$ are not split, and we call their representatives $\bar{u}_1$, $\bar{u}_2$, $\bar{w}_1$ and $\bar{w}_2$, respectively.
  Similarly, we denote by $\bar{e}_i \coloneqq \setdef{\bar{u}_i, \bar{w}_i}$ for $i=1,2$ the representatives of the edges $e_1$ and $e_2$.
  We will now consider the polytope $Q \coloneqq \PmatchOne(\bar{G},\bar{e}_1,\bar{e}_2)$ and the projection map $\pi \colon \R^{\bar{E}} \times \R \to \R^E \times \R$ defined via
  \[
    \pi((\bar{x},\bar{y})) \coloneqq \left( x , \bar{y} \right)
    \text{ with }
    x_{\setdef{u,w}} \coloneqq \sum_{\bar{u} \in B_u} \sum_{\bar{w} \in B_w} \bar{x}_{\setdef{\bar{u},\bar{w}}}
    \text{ for all } \setdef{u,w} \in E.
  \]
  It is easy to see that $\pi(Q) = \PbmatchOne$.

  Note that in Section~\ref{SectionResults} we provide a complete description for $\PmatchOne$ only for complete graphs, but $\PmatchOne$ for any subgraph is obtained by fixing variables to $0$, i.e., it is a face.

  Let $(x,y) \in \R_+^E \times [0,1]$ satisfy all constraints from the theorem.
  For each edge $\bar{e} = \setdef{\bar{u},\bar{w}} \in \bar{E}$ with $\bar{u}$ and $\bar{w}$ copies of $u \in U$ and $w \in W$, define $\bar{x}_{\bar{e}} \coloneqq x_e / (b_u \cdot b_w)$, where $e \coloneqq \setdef{u,w} \in E$.
  By letting $\bar{y} \coloneqq y$, it is evident that $\pi((\bar{x},\bar{y})) = (x,y)$, and it remains to show $(\bar{x},\bar{y}) \in Q$.

\medskip

  By construction, using the fact that $b_v = 1$ for all $v \in \Vspecial$, we have that $(\bar{x},\bar{y})$ satisfies Constraints~\eqref{ConstraintMatchingNonnegative}, \eqref{ConstraintMatchingBound} and~\eqref{ConstraintMatchingQuadraticGood}.
  Inequalities~\eqref{ConstraintbMatchingDegree}, \eqref{ConstraintbMatchingQuadraticDown} and~\eqref{ConstraintbMatchingQuadraticUp} are discussed in subsequent claims.

   \begin{claim}
      \label{TheoremUncapacitatedbMatchingDegree}
      The vector $(\bar{x},\bar{y})$ satisfies Constraint~\eqref{ConstraintbMatchingDegree} for $\bar{G}$ with respect to $\bar{b} \coloneqq \onevec[\bar{U} \dcup \bar{W}]$.
   \end{claim}

   \begin{proof}[Proof of Claim~\ref{TheoremUncapacitatedbMatchingDegree}]
      Let $\bar{v} \in B_v$ for some $v \in U \dcup W$.
      Then
      \begin{align}
         \bar{x}(\delta_{\bar{G}}(\bar{v})) 
         = \sum_{ \setdef{v,v'} \in \delta(v) } \sum_{\bar{v}' \in B_{v'}} \bar{x}_{\setdef{\bar{v}, \bar{v}' }}
         = \sum_{ \setdef{v,v'} \in \delta(v) } \sum_{\bar{v}' \in B_{v'}} x_{\setdef{v,v'}} / (b_v \cdot b_{v'})
         = \sum_{ \setdef{v,v'} \in \delta(v) } x_{\setdef{v,v'}} / b_v
         \leq 1
         \label{InequalitybMatchingProofLiftedDegree}
      \end{align}
      holds by Inequality~\eqref{ConstraintbMatchingDegree} for node $v$, which concludes the proof of the claim.
   \end{proof}

   \begin{claim}
      \label{TheoremUncapacitatedbMatchingDown}
      The vector $(\bar{x},\bar{y})$ satisfies Constraint~\eqref{ConstraintbMatchingQuadraticDown} for $\bar{G}$ with respect to $\bar{b} \coloneqq \onevec[\bar{U} \dcup \bar{W}]$.
   \end{claim}

   \begin{proof}[Proof of Claim~\ref{TheoremUncapacitatedbMatchingDown}]
      For the sake of contradiction, consider some $\bar{S} \subseteq \bar{U} \dcup \bar{W}$ with $\bar{e}_1,\bar{e}_2 \in \delta(\bar{S})$ such that $\bar{x}(\bar{E}[\bar{S}]) + \bar{y} > \lfloor \tfrac{1}{2} |\bar{S}| \rfloor $ and, among all such sets, with the minimum number of nodes $v \in U \dcup W$ for which $0 < |\bar{S} \cap B_v| < b_v$ holds.

      This number must be positive, since otherwise Constraint~\eqref{ConstraintbMatchingQuadraticDown} for $S = \setdef{ v \in U \dcup W }[ B_v \subseteq \bar{S} ]$ yields the contradiction
      \begin{gather*}
         \bar{x}(\bar{E}[\bar{S}]) + \bar{y} 
         = x(E[S]) + y 
         \underset{\text{\eqref{ConstraintbMatchingQuadraticDown}}}{\leq} \left\lfloor \tfrac{1}{2}b(S) \right\rfloor
         = \left\lfloor \tfrac{1}{2} |\bar{S}| \right\rfloor
         < \bar{x}(\bar{E}[\bar{S}]) + \bar{y},
      \end{gather*}
      where the last equation holds due to $\bar{S} = \bigcup_{v \in S} B_v$.

\medskip

      Hence, there exists a node $v \in U \dcup W$ with $0 < |\bar{S} \cap B_v| < b_v$.
      Let $\bar{S}_1 \coloneqq \bar{S} \setminus B_v$ and $\bar{S}_2 \coloneqq \bar{S} \cup B_v$.
      Note that we have $v \notin \Vspecial$, which implies $\bar{e}_1,\bar{e}_2 \in \delta(\bar{S}_i)$ for $i=1,2$.
      Since $v$ does not satisfy $0 < |\bar{S}_i \cap B_v| < b_v$ for $i=1,2$, the choice of $\bar{S}$ implies that Constraint~\eqref{ConstraintbMatchingQuadraticDown} is satisfied for $\bar{S}_1$ and for $\bar{S}_2$, i.e.,
      \begin{align}
         \bar{x}(\bar{E}[\bar{S}_1]) + \bar{y} \leq \left\lfloor \tfrac{1}{2} |\bar{S}_1| \right\rfloor
         \text{ and }
         \bar{x}(\bar{E}[\bar{S}_2]) + \bar{y} \leq \left\lfloor \tfrac{1}{2} |\bar{S}_2| \right\rfloor.
         \label{InequalitybMatchingProofDownInduction}
      \end{align}
      Moreover, we have
      \begin{align}
         \bar{x}(\bar{E}[\bar{S}_1]) + \bar{x}(\bar{E}[\bar{S}_2]) + 2\bar{y}
         \underset{\text{\eqref{ConstraintMatchingQuadraticGood}}}{\leq} \bar{x}(\bar{E}[\bar{S}_1]) + \bar{x}(\bar{E}[\bar{S}_2]) + \bar{x}_{\bar{e}_1} + \bar{x}_{\bar{e}_2}
         \leq \sum_{\bar{v} \in \bar{S}_1} \bar{x}(\delta_{\bar{G}}(\bar{v}))
         \underset{\text{\eqref{InequalitybMatchingProofLiftedDegree}}}{\leq} |\bar{S}_1|,
         \label{InequalitybMatchingProofDownSum}
      \end{align}
      where the second inequality holds since every edge whose $\bar{x}$-variable appears once (resp.\ twice) has at least one (resp.\ both) endnode(s) in $\bar{S}_1$.
      We will exploit this relation below.

      Now the multipliers $\lambda \coloneqq |B_v \cap \bar{S}| / b_v$ and $\mu \coloneqq |B_v \setminus \bar{S}| / b_v$ are nonnegative and satisfy $\lambda + \mu = 1$.
      Moreover, $\bar{E}[\bar{S}_1] \subseteq \bar{E}[\bar{S}_2]$, $\bar{E}[\bar{S}_2] \setminus \bar{E}[\bar{S}_1] \subseteq \delta_{\bar{G}}(B_v)$ and the fact that $\bar{x}$ is constant over all edges whose endnodes are copies of the same pair of original nodes\footnote{formally, $\bar{x}_{\bar{v},\bar{v}'} = \bar{x}_{\bar{v},\bar{v}''}$ for all $\bar{v}',\bar{v}'' \in B_v$ for some $v \in V$} imply
      \begin{align}
         \lambda \bar{x}(\bar{E}[\bar{S}_2]) + \mu \bar{x}(\bar{E}[\bar{S}_1])
         &= (\underbrace{\lambda + \mu}_{=\,1}) \bar{x}(\bar{E}[\bar{S}_1]) + \underbrace{\lambda \bar{x}(\bar{E}[\bar{S}_2] \setminus \bar{E}[\bar{S}_1])}_{=\,\bar{x}(\bar{E}[\bar{S}] \setminus \bar{E}[\bar{S}_1])} = \bar{x}(\bar{E}[\bar{S}]).
         \label{InequalitybMatchingProofDownConvex}
      \end{align}
      If $\mu - \lambda \geq 0$, we obtain
      \begin{align*}
         \bar{x}(\bar{E}[\bar{S}]) + \bar{y}
         &\underset{\text{\eqref{InequalitybMatchingProofDownConvex}}}{=} \mu \bar{x}(\bar{E}[S_1]) + \lambda \bar{x}(\bar{E}[S_2]) + \bar{y}
         = (\mu - \lambda) (\bar{x}(\bar{E}[\bar{S}_1]) + \bar{y}) + \lambda ( \bar{x}(\bar{E}[\bar{S}_1]) + \bar{x}(\bar{E}[\bar{S}_2]) + 2\bar{y} ) \\
         &\leq \tfrac{1}{2} (\mu - \lambda) |\bar{S}_1| + \lambda |\bar{S}_1|
         = \tfrac{1}{2} |\bar{S}_1|
         \leq \lfloor \tfrac{1}{2} |\bar{S}| \rfloor,
      \end{align*}
      where the second equation holds due to $(\mu - \lambda) + 2\lambda = 1$, the first inequality by~\eqref{InequalitybMatchingProofDownInduction} together with $\mu - \lambda \geq 0$ and~\eqref{InequalitybMatchingProofDownSum} together with $\lambda \geq 0$, the third equation by $\mu + \lambda = 1$, and the last inequality by $\bar{S}_1 \subsetneqq \bar{S}$.

      Otherwise, i.e., if $\lambda - \mu \geq 0$, we obtain
      \begin{align*}
         \bar{x}(\bar{E}[\bar{S}]) + \bar{y}
         &\underset{\text{\eqref{InequalitybMatchingProofDownConvex}}}{=} \lambda \bar{x}(\bar{E}[S_2]) + \mu \bar{x}(\bar{E}[S_1]) + \bar{y}
         = (\lambda - \mu) (\bar{x}(\bar{E}[\bar{S}_2]) + \bar{y}) + \mu ( \bar{x}(\bar{E}[\bar{S}_1]) + \bar{x}(\bar{E}[\bar{S}_2]) + 2\bar{y} ) \\
         &\leq \tfrac{1}{2} (\lambda - \mu)  |\bar{S}_2| + \mu |\bar{S}_1|
         = \tfrac{1}{2} |\bar{S}_1| + \tfrac{1}{2} (\lambda - \mu) |\bar{S}_2 \setminus \bar{S}_1|
         = \tfrac{1}{2} |\bar{S}_1| + \tfrac{1}{2} (\lambda - \mu) b_v \\
         &= \tfrac{1}{2} |\bar{S}_1| + \tfrac{1}{2} (|B_v \cap \bar{S}| - |B_v \setminus \bar{S}|)
         \leq \tfrac{1}{2} |\bar{S}_1| + \tfrac{1}{2} ( |B_v \cap \bar{S}| - 1 )
         \leq \lfloor \tfrac{1}{2} |\bar{S}| \rfloor,
      \end{align*}
      where the second equation holds due to $(\lambda - \mu) + 2\mu = 1$, the first inequality by~\eqref{InequalitybMatchingProofDownInduction} together with $\lambda - \mu \geq 0$ and~\eqref{InequalitybMatchingProofDownSum} together with $\mu \geq 0$, the third equation by $\mu + \lambda = 1$ and $\bar{S}_2 \supseteq \bar{S}_1$, the fourth equation by $\bar{S}_2 \setminus \bar{S}_1 = B_v$, the fifth by $\lambda b_v = |B_v \cap \bar{S}|$ and $\mu b_v = |B_v \setminus \bar{S}|$, the second inequality by $B_v \setminus \bar{S} \neq \emptyset$ and the last inequality by $\bar{S}_1 \dcup (B_v \cap \bar{S}) = \bar{S}$.
      Hence, both cases contradict the assumption that the inequality for $\bar{S}$ was violated, which concludes the proof of the claim.
   \end{proof}

   \begin{claim}
      \label{TheoremUncapacitatedbMatchingUp}
      The vector $(\bar{x},\bar{y})$ satisfies Constraint~\eqref{ConstraintbMatchingQuadraticUp} for $\bar{G}$ with respect to  $\bar{b} = \onevec[\bar{U} \dcup \bar{W}]$.
   \end{claim}

   \begin{proof}[Proof of Claim~\ref{TheoremUncapacitatedbMatchingUp}]
      For the sake of contradiction, consider some $\bar{S} \subseteq \bar{U} \dcup \bar{W}$  with $\bar{e}_1,\bar{e}_2 \in \delta(\bar{S})$ such that $\bar{x}(\bar{E}[\bar{S}]) + \bar{x}_{\bar{e}_1} + \bar{x}_{\bar{e}_2} - \bar{y} > \lfloor \tfrac{1}{2}(|\bar{S}| + 1) \rfloor $ and, among all such sets, with the minimum number of nodes $v \in U \dcup W$ for which $0 < |\bar{S} \cap B_v| < b_v$ holds.

      This number must be positive, since otherwise Constraint~\eqref{ConstraintbMatchingQuadraticUp} for $S = \setdef{ v \in U \dcup W }[ B_v \subseteq \bar{S} ]$ yields the contradiction 
      \begin{align*}    
         \bar{x}(\bar{E}[\bar{S}]) + \bar{x}_{\bar{e}_1} + \bar{x}_{\bar{e}_2} - \bar{y}
         &= x(E[S]) + x_{e_1} + x_{e_2} - y
         \underset{\text{\eqref{ConstraintbMatchingQuadraticUp}}}{\leq} \left\lfloor \tfrac{1}{2}(b(S)+1) \right\rfloor
         = \left\lfloor \tfrac{1}{2} (|\bar{S}| + 1) \right\rfloor \\
         &< \bar{x}(\bar{E}[\bar{S}]) + \bar{x}_{\bar{e}_1} + \bar{x}_{\bar{e}_2} - \bar{y},
      \end{align*}
      where the last equation holds due to $\bar{S} = \bigcup_{v \in S} B_v$.
      Hence, there exists a node $v \in U \dcup W$ with $0 < |\bar{S} \cap B_v| < b_v$.
      Let $\bar{S}_1 \coloneqq \bar{S} \setminus B_v$ and $\bar{S}_2 \coloneqq \bar{S} \cup B_v$.
      Note that we have $v \notin \Vspecial$, which implies $\bar{e}_1,\bar{e}_2 \in \delta(\bar{S}_i)$ for $i=1,2$.
      Since $v$ does not satisfy $0 < |\bar{S}_i \cap B_v| < b_v$ for $i=1,2$, the choice of $\bar{S}$ implies that Constraint~\eqref{ConstraintbMatchingQuadraticUp} is satisfied for $\bar{S}_1$ and for
      $\bar{S}_2$, i.e.,
      \begin{align}
         \bar{x}(\bar{E}[\bar{S}_1]) + \bar{x}_{\bar{e}_1} + \bar{x}_{\bar{e}_2} - \bar{y} \leq \left\lfloor \tfrac{1}{2} (|\bar{S}_1| + 1) \right\rfloor \text{ and }
         \bar{x}(\bar{E}[\bar{S}_2]) + \bar{x}_{\bar{e}_1} + \bar{x}_{\bar{e}_2} - \bar{y} \leq \left\lfloor \tfrac{1}{2} (|\bar{S}_2| + 1) \right\rfloor.
         \label{InequalitybMatchingProofUpInduction}
      \end{align}
      
      First, we have
      \begin{align*}
         \bar{x}(\bar{E}[\bar{S}_1]) + \bar{x}(\bar{E}[\bar{S}_2]) + \bar{x}_{\bar{e}_1} + \bar{x}_{\bar{e}_2}
         &\leq \sum_{\bar{v} \in \bar{S}_1} \bar{x}(\delta_{\bar{G}}(\bar{v}))
         \underset{\text{\eqref{InequalitybMatchingProofLiftedDegree}}}{\leq} |\bar{S}_1|,
      \end{align*}
      where the first inequality holds since every edge whose $\bar{x}$-variable appears once (resp.\ twice) has at least one (resp.\ both) endnode(s) in $\bar{S}_1$.
      Second,
      \begin{align*}
         \bar{x}_{\bar{e}_1} + \bar{x}_{\bar{e}_2} - \bar{y}
         &= x_{e_1} + x_{e_2} - y
         \leq 1
      \end{align*}
      holds, where the inequality corresponds to~\eqref{ConstraintbMatchingQuadraticUp} for $S = \setdef{u_1,u_2}$,
      which satisfies $b(S) = 2$.
      The sum of both inequalities and $-\bar{y} \leq 0$ yields
      \begin{align}
         \bar{x}(\bar{E}[\bar{S}_1]) + \bar{x}(\bar{E}[\bar{S}_2]) + 2\bar{x}_{\bar{e}_1} + 2\bar{x}_{\bar{e}_2}
         -2\bar{y}
         &\leq |\bar{S}_1| + 1,
         \label{InequalitybMatchingProofUpSum}
      \end{align}
      a relation we exploit below.

      Again, $\lambda \coloneqq |B_v \cap \bar{S}| / b_v$ and $\mu \coloneqq |B_v \setminus \bar{S}| / b_v$
      are nonnegative and satisfy $\lambda + \mu = 1$, and $\bar{x}$ satisfies Equation~\eqref{InequalitybMatchingProofDownConvex}.

      If $\mu - \lambda \geq 0$, we obtain
      \begin{align*}
         &\quad~\bar{x}(\bar{E}[\bar{S}]) + \bar{x}_{\bar{e}_1} + \bar{x}_{\bar{e}_2} - \bar{y}
         \underset{\text{\eqref{InequalitybMatchingProofDownConvex}}}{=} \mu \bar{x}(\bar{E}[S_1]) + \lambda \bar{x}(\bar{E}[S_2]) + \bar{x}_{\bar{e}_1} + \bar{x}_{\bar{e}_2} - \bar{y} \\
         &= (\mu - \lambda) (\bar{x}(\bar{E}[\bar{S}_1]) + \bar{x}_{\bar{e}_1} + \bar{x}_{\bar{e}_2} - \bar{y}) + \lambda ( \bar{x}(\bar{E}[\bar{S}_1]) + \bar{x}(\bar{E}[\bar{S}_2]) + 2\bar{x}_{\bar{e}_1} + 2\bar{x}_{\bar{e}_2} -2\bar{y} ) \\
         &\leq \tfrac{1}{2} (\mu - \lambda) (|\bar{S}_1| + 1) + \lambda (|\bar{S}_1| + 1)
         = \tfrac{1}{2} (|\bar{S}_1| + 1)
         \leq \lfloor \tfrac{1}{2} (|\bar{S}| + 1) \rfloor,
      \end{align*}
      where the second equation holds due to $(\mu - \lambda) + 2\lambda = 1$, the first inequality by~\eqref{InequalitybMatchingProofUpInduction} together with $\mu - \lambda \geq 0$ and~\eqref{InequalitybMatchingProofUpSum} together with $\lambda \geq 0$, the third equation by $\mu + \lambda = 1$, and the last inequality by $\bar{S}_1 \subsetneqq \bar{S}$.

      Otherwise, i.e., if $\lambda - \mu \geq 0$, we obtain
      \begin{align*}
         &\quad~\bar{x}(\bar{E}[\bar{S}]) + \bar{x}_{\bar{e}_1} + \bar{x}_{\bar{e}_2} - \bar{y}
         \underset{\text{\eqref{InequalitybMatchingProofDownConvex}}}{=} \mu \bar{x}(\bar{E}[S_1]) + \lambda \bar{x}(\bar{E}[S_2]) + \bar{x}_{\bar{e}_1} + \bar{x}_{\bar{e}_2} - \bar{y} \\
         &= (\lambda - \mu) (\bar{x}(\bar{E}[\bar{S}_2]) + \bar{x}_{\bar{e}_1} + \bar{x}_{\bar{e}_2} - \bar{y}) + \mu ( \bar{x}(\bar{E}[\bar{S}_1]) + \bar{x}(\bar{E}[\bar{S}_2]) + 2\bar{x}_{\bar{e}_1} + 2\bar{x}_{\bar{e}_2} - 2\bar{y} ) \\
         &\leq \tfrac{1}{2} (\lambda - \mu) (|\bar{S}_2| + 1) + \mu (|\bar{S}_1| + 1)
         =\tfrac{1}{2} (|\bar{S}_1| + 1) + \tfrac{1}{2} (\lambda - \mu) |\bar{S}_2 \setminus \bar{S}_1| \\
         &=\tfrac{1}{2} (|\bar{S}_1| + 1) + \tfrac{1}{2} (\lambda - \mu) b_v
         = \tfrac{1}{2} (|\bar{S}_1| + 1) + \tfrac{1}{2} (|B_v \cap \bar{S}| - |B_v \setminus \bar{S}|) \\
         &\leq \tfrac{1}{2} (|\bar{S}_1| + 1) + \tfrac{1}{2} ( |B_v \cap \bar{S}| - 1 )
         \leq \lfloor \tfrac{1}{2} (|\bar{S}| + 1) \rfloor,
      \end{align*}
      where the second equation holds due to $(\lambda - \mu) + 2\mu = 1$, the first inequality by~\eqref{InequalitybMatchingProofUpInduction} together with $\lambda - \mu \geq 0$ and~\eqref{InequalitybMatchingProofUpSum} together with $\mu \geq 0$, the third equation by $\mu + \lambda = 1$ and $\bar{S}_2 \supseteq \bar{S}_1$, the fourth by $\bar{S}_2 \setminus \bar{S}_1 = B_v$, the fifth by $\lambda b_v = |B_v \cap \bar{S}|$ and $\mu b_v = |B_v \setminus \bar{S}|$, the second inequality by $B_v \setminus \bar{S} \neq \emptyset$ and the last inequality by $\bar{S}_1 \dcup (B_v \cap \bar{S}) = \bar{S}$.
      Hence, both cases contradict the assumption that the inequality for $\bar{S}$ was violated, which concludes the proof of the claim.
   \end{proof}

   We showed that $(\bar{x},\bar{y}) \in Q$, which concludes the proof of the theorem.
\end{proof}

\subsection{Capacitated b-matchings}

We will now generalize the polyhedral results even more, by considering \emph{capacitated} $b$-matchings, i.e., $b$-matchings $x \in \Z_+^E$ that satisfy 
\begin{alignat}{6}
  x_e & \leq c_e        &\qquad& \text{for all } e \in E         \label{ConstraintbMatchingCapacity}
\end{alignat}
for a given capacity vector $c \in \Z_+^E$.
We can relax the requirement $b_v = 1$ for all $v \in \Vspecial$ (from the uncapacitated case) to $c_{e_1} = c_{e_2} = 1$, which already suffices to ensure that $x_{e_1}$ and $x_{e_2}$ are binary.
The generalizations of Constraints~\eqref{ConstraintbMatchingQuadraticDown} and~\eqref{ConstraintbMatchingQuadraticUp} read
\begin{align}
  x(E[S]) + x(F) + y
  &\leq \left\lfloor \tfrac{1}{2}(b(S) + c(F)) \right\rfloor \nonumber \\
  & \text{for all } S \in \bCut \text{ and } F \subseteq \delta(S) \setminus \setdef{e_1,e_2} \text{ with } b(S) + c(F) \text{ odd, and} \label{ConstraintCapacitatedbMatchingQuadraticDown} \\
  x(E[S]) + x(F) + x_{e_1} + x_{e_2} - y
  &\leq \left\lfloor \tfrac{1}{2}(b(S) + c(F) + 1) \right\rfloor \nonumber \\
  & \text{for all } S \in \bCut \text{ and } F \subseteq \delta(S) \setminus \setdef{e_1,e_2} \text{ with } b(S) + c(F) \text{ even.} \label{ConstraintCapacitatedbMatchingQuadraticUp}
\end{align}
Note again that, due to the parity conditions on $b(S) + c(F)$ we could make the right-hand sides more explicitly, e.g., by replacing $\lfloor \tfrac{1}{2}(b(S)+c(F)+1) \rfloor$ by $\tfrac{1}{2}(b(S) + c(F))$.
We still prefer the slightly more complicated form since we will soon observe that the inequalities (the way they are stated) remain valid if the parity of $b(S) + c(F)$ is different.

For matchings and $b$-matchings (satisfying $b_v = 1$ for all $v \in \Vspecial$), the inequality
\begin{align}
  x_{e_1} + x_{e_2} - y \leq 1,
  \label{ConstraintCapacitatedbMatchingQuadraticBad}
\end{align}
which is part of the standard linearization of the product, was implied by some other set of constraints.
This is not always true for capacitated $b$-matchings, and hence we have to consider it explicitly.

We can now state our main result for capacitated $b$-matchings.
\begin{theorem}
  \label{TheoremCapacitatedbMatching}
  The convex hull of all vectors $(x,y) \in \Z_+^E \times \setdef{0,1}$
  with $x \leq c$ and $x(\delta(v)) \leq b_v$ for all $v \in U \dcup W$
  that satisfy $y = 1$ if and only if $x_{e_1} = x_{e_2} = 1$
  is equal to the set of $(x,y) \in \R^E \times \R$ that satisfy
  Constraints~\eqref{ConstraintMatchingNonnegative},
  \eqref{ConstraintMatchingBound},
  \eqref{ConstraintMatchingQuadraticGood},
  \eqref{ConstraintbMatchingDegree},
  \eqref{ConstraintbMatchingCapacity},
  \eqref{ConstraintCapacitatedbMatchingQuadraticDown},
  \eqref{ConstraintCapacitatedbMatchingQuadraticUp}
  and~\eqref{ConstraintCapacitatedbMatchingQuadraticBad}.
\end{theorem}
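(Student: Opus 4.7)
The plan is to reduce the capacitated case to the uncapacitated result Theorem~\ref{TheoremUncapacitatedbMatching} via an edge-subdivision construction, in the same spirit as the blow-up argument used within that theorem's proof.

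For validity of~\eqref{ConstraintCapacitatedbMatchingQuadraticDown} and~\eqref{ConstraintCapacitatedbMatchingQuadraticUp}, the key identity is $2x(E[S]) + x_{e_1} + x_{e_2} + 2x(F) \leq \sum_{v \in S} x(\delta(v)) + x(F) \leq b(S) + c(F)$ for every $S \in \bCut$ and $F \subseteq \delta(S) \setminus \setdef{e_1,e_2}$, since each edge of $E[S]$ is double-counted at nodes of $S$, each of $e_1,e_2$ is counted once, each $F$-edge is counted once, and the slack $c(F)-x(F) \geq 0$ is absorbed on the right. Dividing by two and invoking $2y \leq x_{e_1}+x_{e_2}$ gives~\eqref{ConstraintCapacitatedbMatchingQuadraticDown} up to the floor forced by integrality; using $x_{e_1}+x_{e_2}-2y \leq 1$ instead gives~\eqref{ConstraintCapacitatedbMatchingQuadraticUp}.

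For the completeness direction I construct an auxiliary bipartite graph $G'$ by replacing each non-special edge $e = \setdef{u,w} \in E \setminus \setdef{e_1,e_2}$ of finite capacity $c_e$ by a length-three path $u - p_e - q_e - w$, with $p_e$ on the bipartition side opposite to $u$ and $q_e$ opposite to $w$, and $b_{p_e} = b_{q_e} = c_e$. The edges $e_1,e_2$ remain untouched, so the $y$-variable keeps its meaning. Any feasible $(x,y)$ for the claimed LP lifts to $(\bar x, y)$ on $G'$ via $\bar x_{u,p_e} = \bar x_{q_e,w} = x_e$ and $\bar x_{p_e,q_e} = c_e - x_e$, and any $b$-matching of $G'$ projects back via $x_e := \bar x_{u,p_e}$. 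The inequality pullback step takes, for a pair $(S, F)$ with $S \in \bCut$ and $F \subseteq \delta(S) \setminus \setdef{e_1,e_2}$, the set $\bar S$ consisting of $S$, both $p_e$ and $q_e$ for every subdivided $e \in E[S]$, and the unique $S$-adjacent gadget node for every $e \in F$. A direct computation then shows that Inequality~\eqref{ConstraintbMatchingQuadraticDown}, respectively~\eqref{ConstraintbMatchingQuadraticUp}, of Theorem~\ref{TheoremUncapacitatedbMatching} for $\bar S$ in $G'$ reduces, after cancelling the constant $c(E[S]\setminus\setdef{e_1,e_2})$ that the interior-gadget edges contribute to both sides, to~\eqref{ConstraintCapacitatedbMatchingQuadraticDown}, respectively~\eqref{ConstraintCapacitatedbMatchingQuadraticUp}, for $(S, F)$ in $G$; the parities match because $b(\bar S) \equiv b(S) + c(F) \pmod{2}$.

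The hard part will be verifying that $(\bar x, y)$ satisfies \emph{every} uncapacitated inequality of Theorem~\ref{TheoremUncapacitatedbMatching} for $G'$, not only those arising from sets $\bar S$ of the structured form used in the pullback above. This requires a case analysis over how an arbitrary $\bar S \subseteq V(G')$ intersects each gadget $\setdef{p_e, q_e}$; in the ``mixed'' configurations, the resulting inequality must be dominated by combinations of~\eqref{ConstraintCapacitatedbMatchingQuadraticDown} and~\eqref{ConstraintCapacitatedbMatchingQuadraticUp} together with the degree constraint~\eqref{ConstraintbMatchingDegree}, the capacity constraint~\eqref{ConstraintbMatchingCapacity}, and the auxiliary redundant inequalities~\eqref{InequalityCapacitatedbMatchingProof1}--\eqref{InequalityCapacitatedbMatchingProof4} of Lemma~\ref{TheoremUncapacitatedbMatchingAdditionalInequalities} --- which is precisely why those auxiliary inequalities were recorded in advance. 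Once this check is done, applying Theorem~\ref{TheoremUncapacitatedbMatching} to $G'$ (extending it from $K_{m,n}$ to a general bipartite graph by padding with zero-$b$ dummy endpoints on missing edges) expresses $(\bar x, y)$ as a convex combination of $b$-matchings of $G'$, whose projections yield the desired convex combination of capacitated $b$-matchings of $G$.
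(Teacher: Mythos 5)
Your overall route is the paper's route: prove validity of \eqref{ConstraintCapacitatedbMatchingQuadraticDown} and \eqref{ConstraintCapacitatedbMatchingQuadraticUp} by the degree/capacity counting argument, then reduce completeness to Theorem~\ref{TheoremUncapacitatedbMatching} via Tutte's edge-subdivision gadget, lifting $(x,y)$ to the subdivided graph and pulling the structured sets back. However, the step you yourself label ``the hard part'' is exactly the bulk of the paper's proof, and you do not carry it out: one must show that the lifted point satisfies \eqref{ConstraintbMatchingQuadraticDown} and \eqref{ConstraintbMatchingQuadraticUp} for \emph{every} node set $\bar S$ of the auxiliary graph, not only for those of the structured form. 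The paper does this by choosing a violated facet-defining inequality whose set $\bar S$ minimizes the number $k(\bar S)+\ell(\bar S)$ of ``mixed'' gadget configurations, and then running an eight-case analysis: in the generic cases one adds or removes a gadget pair and contradicts minimality, and in the cases touching the special edges one shows the inequality is redundant via the auxiliary inequalities \eqref{InequalityCapacitatedbMatchingProof1}--\eqref{InequalityCapacitatedbMatchingProof4} of Lemma~\ref{TheoremUncapacitatedbMatchingAdditionalInequalities}. You correctly anticipate that this lemma is what the mixed configurations require, but asserting that they ``must be dominated'' by such combinations is a claim, not a proof; without that case analysis the completeness direction is unproven.

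There is also a concrete flaw in your construction: you leave $e_1,e_2$ unsubdivided. Theorem~\ref{TheoremCapacitatedbMatching} only assumes $c_{e_1}=c_{e_2}=1$; all four nodes of $\Vspecial$ may have $b_v\geq 2$. In that case your graph $G'$ violates the standing hypothesis of Theorem~\ref{TheoremUncapacitatedbMatching} (each special edge must have an endnode with $b$-value $1$), and this matters mathematically, not just formally: the uncapacitated polytope on $G'$ then contains integral points with $x_{e_1}=2$, for which $x_{e_1}+x_{e_2}-2y\leq 1$ fails, so the uncapacitated description is not even valid there, and any convex combination obtained on $G'$ may use such points, whose projections violate the capacity of $e_1$. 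The paper avoids this by subdividing $e_1$ and $e_2$ as well and attaching the quadratic term to the edges $\bar e_i=\setdef{u_i,(u_i,e_i)}$, whose new endnode $(u_i,e_i)$ carries $\bar b$-value $c_{e_i}=1$; with that modification (and the missing case analysis supplied) your argument becomes the paper's proof.
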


Using the same proof strategy as in the proof of Theorem~\ref{TheoremUncapacitatedbMatching},
our completeness proof is a modification of a completeness proof for the capacitated $b$-matching polytope
on non-bipartite graphs, as presented in Schrijver's book (see Theorem~32.2 in~\cite{Schrijver03}),
again based on a construction by Tutte~\cite{Tutte54}.

\begin{proof}
   Let $P \coloneqq \setdef{ (x,y) \in \PbmatchOne(G,e_1,e_2) }[ x \leq c ]$.
   The proof is structured as follows.
   We first show validity of the inequalities and describe the construction of an extended formulation based on an auxiliary graph.
   To establish the completeness of our proposed inequality description of $P$ we will then show that any point that satisfies the proposed inequalities can be lifted to a point in the extended formulation.

   \begin{claim}
      \label{TheoremCapacitatedbMatchingValidity}
      Inequalities~\eqref{ConstraintCapacitatedbMatchingQuadraticDown} and~\eqref{ConstraintCapacitatedbMatchingQuadraticUp} are valid for $P$ for arbitrary sets $S \in \bCut$ and $F \subseteq \delta(S) \setminus \setdef{e_1,e_2}$, regardless of the parity of $b(S) + c(F)$.
   \end{claim}

   \begin{proof}[Proof of Claim~\ref{TheoremCapacitatedbMatchingValidity}]
      Consider an integer vector $(x,y) \in \PbmatchOne$ satisfying~\eqref{ConstraintbMatchingCapacity}.
      We have
      \begin{align*}
         x(E[S]) + x(F) + y
         &\underset{\text{\eqref{ConstraintMatchingQuadraticGood}}}{\leq} \tfrac{1}{2}( 2x(E[S]) + x(F) + x_{e_1} + x_{e_2}) + \tfrac{1}{2} x(F) \\
         &\leq \tfrac{1}{2} \sum_{v \in S} x(\delta(v)) + \tfrac{1}{2} x(F)
         \leq \tfrac{1}{2} b(S) + \tfrac{1}{2} x(F)
         \underset{\text{\eqref{ConstraintbMatchingCapacity}}}{\leq} \tfrac{1}{2} b(S) + \tfrac{1}{2} c(F),
      \end{align*}
      where the second inequality holds since every edge whose $x$-variable aappears once (resp.\ twice) has one (resp.\ both) endnode(s) in $S$, and the third by the definition of $b$-matchings.
      Inequality~\eqref{ConstraintCapacitatedbMatchingQuadraticDown} now follows, since the left-hand side of the formula is integral, allowing us to round the right-hand side down.
      The fact that we added several valid inequalities shows that the inequality is redundant for sets $S$ for which $b(S) + c(F)$ is even, since rounding has no effect in this case.

      Similarly, we obtain
      \begin{align*}
         &\quad~ x(E[S]) + x(F) + x_{e_1} + x_{e_2} - y \\
         &= \tfrac{1}{2} ( 2x(E[S]) + x(F) + x_{e_1} + x_{e_2} ) + \tfrac{1}{2}x(F) + \tfrac{1}{2}(x_{e_1} + x_{e_2} - y) - \tfrac{1}{2}y \\
         &\leq \tfrac{1}{2} \sum_{v \in S} x(\delta(v)) + \tfrac{1}{2}x(F) + \tfrac{1}{2}(x_{e_1} + x_{e_2} - y) - \tfrac{1}{2}y
         \leq \tfrac{1}{2} \sum_{v \in S} x(\delta(v)) + \tfrac{1}{2}x(F) + \tfrac{1}{2} - \tfrac{1}{2}y \\
         &\leq \tfrac{1}{2} b(S) + \tfrac{1}{2}x(F) + \tfrac{1}{2} - \tfrac{1}{2}y
         \underset{\text{\eqref{ConstraintbMatchingCapacity}}}{\leq} \tfrac{1}{2} b(S) + \tfrac{1}{2}c(F) + \tfrac{1}{2} - \tfrac{1}{2}y
         \underset{\text{\eqref{ConstraintMatchingBound}}}{\leq} \tfrac{1}{2} b(S) + \tfrac{1}{2}c(F) + \tfrac{1}{2},
      \end{align*}
      where the first inequality holds since every edge whose $x$-variable appears in the first summand once (resp.\ twice) has one (resp.\ both) endnode(s) in $S$, the second due to $x_{e_1} \cdot x_{e_2} = y$, and the third by the definition of $b$-matchings.
      Validity of Inequality~\eqref{ConstraintCapacitatedbMatchingQuadraticUp} for $P$ now follows, since the left-hand side of the formula is integral, allowing us to round the right-hand side down.
      
      Again, the fact that we added several valid inequalities shows that the inequality is redundant for sets $S$ for which $b(S) + c(F)$ is odd, since rounding has no effect in this case.
      This concludes the proof of the claim.
   \end{proof}

   We now continue with the completeness of the formulation.
   
   \paragraph{Extended formulation and auxiliary graph.}
   Consider the graph $\bar{G} = (\bar{V}, \bar{E})$ with 
   \begin{align*}
      \bar{V} &= U \dcup W \dcup R, \\
      R &\coloneqq \setdef{ (v,e) }[ v \in e \in E ] \text{ and } \\
      \bar{E} &\coloneqq \bigcup_{e = \setdef{u,w} \in E} \setdef{ \setdef{u,(u,e)}, \setdef{(u,e), (w,e)}, \setdef{(w,e), w} },
   \end{align*}
   i.e., every edge $e = \setdef{u,w}$ is replaced by a $3$-path $u$-$(u,e)$-$(w,e)$-$w$.
   The induced bipartition of $\bar{G}$ is given by $\bar{U} \coloneqq U \dcup \setdef{ (w,e) }[ w \in e \in E \text{ and } w \in W ]$ and $\bar{W} \coloneqq W \dcup (R \setminus \bar{U})$.
   We furthermore assign node values $\bar{b}$ via
   \begin{alignat}{10}
      \bar{b}_v           &\coloneqq b_v         && \text{ for all } v \in U \dcup W \qquad\text{ and }\qquad
      \bar{b}_{(v,e)}     &\coloneqq c_e         && \text{ for all } v \in e \in E.
      \label{EquationCapacitatedbMatchingB}
   \end{alignat}
   Our special edges in $\bar{G}$ are $\bar{e}_i \coloneqq \setdef{(u_i,e_i),(w_i,e_i)}$ for $i = 1,2$.

   We will now consider the $\bar{b}$-matching polytope for $\bar{G}$.
   Observe that the equation
   \begin{align}
      x_{\setdef{v,(v,e)}} + x_{\setdef{(v,e),(v',e)}} = x(\delta_{\bar{G}}((v,e))) = \bar{b}_{(v,e)} = c_e \text{ for all $v \in e \in E$ with $e = \setdef{v,v'}$} \label{EquationCapacitatedbMatchingFace}
   \end{align}

   defines a face $Q$ of $\PbmatchOne[\bar{b}](\bar{G},\bar{e}_1,\bar{e}_2)$.
   Next, consider the projection map $\pi \colon \R^{\bar{E}} \times \R \to \R^E \times \R$ defined via
   \begin{align*}
      \pi((\bar{x},\bar{y})) &\coloneqq \left( x , \bar{y} - \bar{x}_{\bar{e}_1} - \bar{x}_{\bar{e}_2} + 1 \right)
      \text{ with }
      x_e \coloneqq c_e - \bar{x}_{\setdef{(u,e),(w,e)}}
      \text{ for all }
      e = \setdef{u,w} \in E
   \end{align*}

   Note that $\bar{b}_{(v,\bar{e}_i)} = 1$ for $i=1,2$ and, since $Q$ is a face of the $\bar{b}$-matching polytope of $\bar{G}$, Theorem~\ref{TheoremUncapacitatedbMatching} yields a complete description.
   We now verify that $Q$ is indeed an extended formulation for $P$.

   \begin{claim}
      \label{TheoremCapacitatedbMatchingProjection}
      $\pi(Q) = P$.
   \end{claim}

   \begin{proof}[Proof of Claim~\ref{TheoremCapacitatedbMatchingProjection}]
      To see $\pi(Q) \subseteq P$, let $(\bar{x},\bar{y}) \in Q$ be a vertex and let $(x,y) = \pi((\bar{x},\bar{y}))$.
      In particular, we have $\bar{x} \in \Z^{\bar{E}}$ and $\bar{y} \in \setdef{0,1}$.
      From $\bar{x}_{\setdef{(u,e),(w,e)}} \geq 0$ for all $e = \setdef{u,w} \in E$ we derive $x \leq c$.
      For every node $v \in V$, we have
      \begin{align}
         x(\delta_G(v)) 
         = \sum_{\stackrel{e \in \delta_G(v)}{e = \setdef{v,v'}}} (c_e - \bar{x}_{\setdef{(v,e),(v',e)}})
         \underset{\text{\eqref{EquationCapacitatedbMatchingFace}}}{=} \bar{x}(\delta_{\bar{G}}(v)) \label{EquationCapacitatedbMatchingDegrees}
      \end{align}
      where the first equation holds by the definition of $\pi$.
      Thus, $x(\delta_G(v)) \leq \bar{b}_v = b_v$, i.e., $x$ is a $c$-capacitated $b$-matching in $G$.
      Moreover, since $\bar{x}_{\bar{e}_1}$, $\bar{x}_{\bar{e}_2}$ and $\bar{y}$ are binary, we have 
      \[
         y = \bar{y} - \bar{x}_{\bar{e}_1} - \bar{x}_{\bar{e}_2} + 1 = \bar{x}_{\bar{e}_1} \cdot \bar{x}_{\bar{e}_2} - \bar{x}_{\bar{e}_1} - \bar{x}_{\bar{e}_2} + 1 = (1 - \bar{x}_{\bar{e}_1}) \cdot (1 - \bar{x}_{\bar{e}_2}) = x_{e_1} \cdot x_{e_2},
      \]
      which proves $\pi(Q) \subseteq P$ since $Q$ is integral.

      For the reverse direction we consider a vertex $(x,y)$ of $P$, which is integral by the definition of $P$.
      We claim that there exists a unique pre-image $(\bar{x},\bar{y})$ of $(x,y)$ in $Q$.
      For every edge $e = \setdef{u,w} \in E$ it must satisfy
      $\bar{x}_{\setdef{(u,e),(w,e)}} = c_e - x_e$.
      Then, \eqref{EquationCapacitatedbMatchingFace} implies that also
      $\bar{x}_{\setdef{u,(u,e)}} = x_e = \bar{x}_{\setdef{(w,e),w}}$ holds,
      which also proves that the degree constraints are satisfied for all nodes in $R$.
      For nodes $v \in U \dcup W$, 
      \eqref{EquationCapacitatedbMatchingDegrees} is again satisfied, and hence
      $\bar{x}(\delta_{\bar{G}}(v)) \leq b_v = \bar{b}_v$.
      This proves that $\bar{x}$ is a $\bar{b}$-matching.
      Since $x_{e_1}$, $x_{e_2}$ and $y$ are binary, we obtain
      \begin{multline*}
         \bar{y}
         = y + \bar{x}_{\bar{e}_1} + \bar{x}_{\bar{e}_2} - 1
         = x_{e_1} \cdot x_{e_2} + \bar{x}_{\bar{e}_1} + \bar{x}_{\bar{e}_2} - 1
         = (1- \bar{x}_{\bar{e}_1}) \cdot (1 - \bar{x}_{\bar{e}_2}) + \bar{x}_{\bar{e}_1} + \bar{x}_{\bar{e}_2} - 1
         = \bar{x}_{\bar{e}_1} \cdot \bar{x}_{\bar{e}_2},
      \end{multline*}
      which establishes $P \subseteq \pi(Q)$ and concludes the proof of the claim.
   \end{proof}

   Note that Theorem~\ref{TheoremUncapacitatedbMatching} only provides a complete description for $\PbmatchOne[\bar{b}]$ for complete graphs, but $\PbmatchOne[\bar{b}]$ for any subgraph (in particular for $\bar{G}$) is obtained by fixing variables to $0$, i.e., it is a face.

   Let $(x,y) \in \R_+^E \times [0,1]$ satisfy all constraints from the theorem.
   For each edge $e = \setdef{u,w} \in E$ (with $u \in U$ and $w \in W$), define 
   \begin{align}
      \bar{x}_{\setdef{u,(u,e)}} \coloneqq x_e \text{, } \bar{x}_{\setdef{(u,e),(w,e)}} \coloneqq c_e - x_e \text{, } \bar{x}_{\setdef{w,(w,e)}} \coloneqq x_e \text{ and } \bar{y} \coloneqq y -x_{e_1} - x_{e_2} + 1.
      \label{EquationCapacitatedbMatchingLifting}
   \end{align}
   Since $\pi((\bar{x},\bar{y})) = (x,y)$, it remains to show $(\bar{x},\bar{y}) \in Q$.

\medskip

   Since $x$ satisfies Constraints~\eqref{ConstraintMatchingNonnegative} and~\eqref{ConstraintbMatchingCapacity} we have $\bar{x} \geq \zerovec$, i.e., $\bar{x}$ satisfies Constraint~\eqref{ConstraintMatchingNonnegative}.
   From $y - x_{e_1} \leq 0$ and $x_{e_2} \geq 0$ we obtain $\bar{y} \leq 1$, whereas $\bar{y} \geq 0$ follows directly from~\eqref{ConstraintCapacitatedbMatchingQuadraticBad}.
   Hence, Constraint~\eqref{ConstraintMatchingBound} is satisfied.
   For $i=1,2$, Constraint~\eqref{ConstraintMatchingQuadraticGood} reads $\bar{x}_{\bar{e}_i} \geq \bar{y}$, which is equivalent to $c_{e_i} - x_{e_i} \geq y - x_{e_1} - x_{e_2} + 1$, which is satisfied since $x_{e_j} \geq y$ for $j = 3-i$.
   Inequalities~\eqref{ConstraintbMatchingDegree}, \eqref{ConstraintbMatchingQuadraticDown} and~\eqref{ConstraintbMatchingQuadraticUp} are discussed in subsequent claims.
   
   \begin{claim}
      \label{TheoremCapacitatedbMatchingDegree}
      The vector $\bar{x}$ satisfies Constraint~\eqref{ConstraintbMatchingDegree} for $\bar{G}$ with respect to $\bar{b}$.
   \end{claim}

   \begin{proof}[Proof of Claim~\ref{TheoremCapacitatedbMatchingDegree}]
      For all $v \in U \dcup W$, $\bar{x}(\delta_{\bar{G}}(v)) \leq \bar{b}_v$ is implied by $x(\delta(v)) \leq b_v$ due to~\eqref{EquationCapacitatedbMatchingDegrees}.
      The same inequality holds for all nodes in $R$, since
      \begin{align*}
         \bar{x}(\delta((v,e))) 
         = \bar{x}_{\setdef{v,(v,e)}} + \bar{x}_{\setdef{(v,e),(v',e)}} 
         = x_e + (c_e - x_e) 
         = c_e 
         = \bar{b}_{(v,e)},
      \end{align*}
      where $e = \setdef{v,v'}$, i.e., $v'$ is the other endnode of $e$.
      This concludes the proof of the claim.
   \end{proof}
   Note that the claim implies (see Theorem~21.2 in~\cite{Schrijver03}) that $\bar{x}$ is in the $\bar{b}$-matching polytope of $\bar{G}$.
   In particular, $\bar{x}$ satisfies
   \begin{align}
      x(E[\bar{S}]) \leq \left\lfloor \tfrac{1}{2}\bar{b}(\bar{S}) \right\rfloor
      \text{ for all $\bar{S} \subseteq \bar{V}$},
      \label{ConstraintbMatchingBlossom}
   \end{align}
   although these inequalities are redundant since $\bar{G}$ is bipartite.
   
   Although the proofs for the two remaining inequality classes are independent, we combine them since we will carry out the same case analysis.
   \begin{claim}
      \label{TheoremCapacitatedbMatchingDownUp}
      The vector $(\bar{x},\bar{y})$ satisfies Constraints~\eqref{ConstraintbMatchingQuadraticDown} and~\eqref{ConstraintbMatchingQuadraticUp} for $\bar{G}$ with respect to $\bar{b}$.
   \end{claim}
   
   \begin{proof}[Proof of Claim~\ref{TheoremCapacitatedbMatchingDownUp}]
      Assume, for the sake of contradiction, that $(\bar{x},\bar{y})$ violates one of the constraints for $\bar{G}$.
      Let $\bar{S} \subseteq \bar{V}$ be the node set that induces the violated constraint (which was called $S$ in~\eqref{ConstraintbMatchingQuadraticDown} and~\eqref{ConstraintbMatchingQuadraticUp}).
      This means that $\bar{e}_1,\bar{e}_2 \in \delta(\bar{S})$ holds.
      We will distinguish several cases that correspond to different ways of how $\bar{S}$ intersects a $3$-path, and whether the $3$-path corresponds to one of the two special edges (see Figure~\ref{FigureCapacitatedCases}).
      To this end, let $k(\bar{S})$ denote the number of edges $e = \setdef{u,w} \in E$ that satisfy at least one of the following conditions:

      \begin{figure}
         \tikzset{
            graphNodeStyle/.style={draw=black, thick, fill=black!20!white, circle, inner sep=0, minimum size=4mm},
            graphNormalEdgeStyle/.style={draw=black, line width=2.2pt},
            graphSpecialEdgeStyle/.style={draw=black, line width=2.2pt, dashed},
         }
         \begin{center}
            \def\boundingBox{(-0.8,-3) rectangle (4.8,2)}
            \def\boundingBoxShorter{(-0.8,-2.3) rectangle (4.8,2)}
            \begin{subfigure}{0.28\textwidth}
            \begin{center}
               \scalebox{0.7}{%
               \begin{tikzpicture}
                  \draw[clip,use as bounding box,draw=none] \boundingBox;
                  \draw[thick,draw=black!50!white,fill=black!10!white, rounded corners]
                  (2.0,-0.7) -- (2.0,-2.7) -- (4.7,-2.7) -- (4.7,0.7) -- (3.7,0.7) -- (3.2,0.2) -- (0.8,0.2) -- (0.3,0.7) -- (-0.7,0.7) -- (-0.7,-0.7) -- cycle;
                  \node[graphNodeStyle,minimum size=7mm] (u) at (0,0) {$u$};
                  \node[graphNodeStyle,minimum size=7mm] (ue) at (1.25,1) {$u,e$};
                  \node[graphNodeStyle,minimum size=7mm] (we) at (2.75,1) {$w,e$};
                  \node[graphNodeStyle,minimum size=7mm] (w) at (4,0) {$w$};
                  \draw[graphNormalEdgeStyle] (u) -- (ue);
                  \draw[graphNormalEdgeStyle] (ue) -- (we);
                  \draw[graphNormalEdgeStyle] (we) -- (w);
                  \node[graphNodeStyle] (u1e1) at (1.2,-1.5) {};
                  \node[graphNodeStyle] (w1e1) at (2.8,-1.5) {};
                  \node[graphNodeStyle] (u2e2) at (1.2,-2.2) {};
                  \node[graphNodeStyle] (w2e2) at (2.8,-2.2) {};
                  \draw[graphSpecialEdgeStyle] (u1e1) -- (w1e1);
                  \draw[graphSpecialEdgeStyle] (u2e2) -- (w2e2);
               \end{tikzpicture}
               }
            \end{center}
            \caption{Both endnodes, no middle node.}
            \label{FigureCapacitatedCasesBothEnd}
            \end{subfigure}
            \hspace{1mm}
            \begin{subfigure}{0.28\textwidth}
            \begin{center}
               \scalebox{0.7}{%
               \begin{tikzpicture}
                  \draw[clip,use as bounding box,draw=none] \boundingBox;
                  \draw[thick,draw=black!50!white,fill=black!10!white, rounded corners]
                  (2.0,-0.7) -- (2.0,-2.7) -- (4.7,-2.7) -- (4.7,1.7) -- (2.0,1.7) -- (2.0,0.2) -- (0.8,0.2) -- (0.3,0.7) -- (-0.7,0.7) -- (-0.7,-0.7) -- cycle;
                  \node[graphNodeStyle,minimum size=7mm] (u) at (0,0) {$u$};
                  \node[graphNodeStyle,minimum size=7mm] (ue) at (1.25,1) {$u,e$};
                  \node[graphNodeStyle,minimum size=7mm] (we) at (2.75,1) {$w,e$};
                  \node[graphNodeStyle,minimum size=7mm] (w) at (4,0) {$w$};
                  \draw[graphNormalEdgeStyle] (u) -- (ue);
                  \draw[graphNormalEdgeStyle] (ue) -- (we);
                  \draw[graphNormalEdgeStyle] (we) -- (w);
                  \node[graphNodeStyle] (u1e1) at (1.2,-1.5) {};
                  \node[graphNodeStyle] (w1e1) at (2.8,-1.5) {};
                  \node[graphNodeStyle] (u2e2) at (1.2,-2.2) {};
                  \node[graphNodeStyle] (w2e2) at (2.8,-2.2) {};
                  \draw[graphSpecialEdgeStyle] (u1e1) -- (w1e1);
                  \draw[graphSpecialEdgeStyle] (u2e2) -- (w2e2);
               \end{tikzpicture}
               }
            \end{center}
            \caption{Both endnodes and a middle node.}
            \label{FigureCapacitatedCasesBothEndOneMiddle}
            \end{subfigure}
            \hspace{1mm}
            \begin{subfigure}{0.29\textwidth}
            \begin{center}
               \scalebox{0.7}{%
               \begin{tikzpicture}
                  \draw[clip,use as bounding box,draw=none] \boundingBox;
                  \draw[thick,draw=black!50!white,fill=black!10!white, rounded corners]
                  (2.0,-0.7) -- (2.0,-2.0) -- (4.7,-2.0) -- (4.7,1.7) -- (2.0,1.7) -- (2.0,0.2) -- (0.8,0.2) -- (0.3,0.7) -- (-0.7,0.7) -- (-0.7,-0.7) -- cycle;
                  \node[graphNodeStyle,minimum size=7mm] (u) at (0,0) {$u$};
                  \node[graphNodeStyle,minimum size=7mm] (ue) at (1.25,1) {$u,e$};
                  \node[graphNodeStyle,minimum size=7mm] (we) at (2.75,1) {$w,e$};
                  \node[graphNodeStyle,minimum size=7mm] (w) at (4,0) {$w$};
                  \draw[graphNormalEdgeStyle] (u) -- (ue);
                  \draw[graphSpecialEdgeStyle] (ue) -- (we);
                  \draw[graphNormalEdgeStyle] (we) -- (w);
                  \node[graphNodeStyle] (u1e1) at (1.2,-1.5) {};
                  \node[graphNodeStyle] (w1e1) at (2.8,-1.5) {};
                  \draw[graphSpecialEdgeStyle] (u1e1) -- (w1e1);
               \end{tikzpicture}
               }
            \end{center}
            \caption{Both endnodes and a middle node; $e \in \setdef{e_1,e_2}$.}
            \label{FigureCapacitatedCasesBothEndOneMiddleSpecial}
            \end{subfigure}
            \hspace{1mm}
            \begin{subfigure}{0.28\textwidth}
            \begin{center}
               \scalebox{0.7}{%
               \begin{tikzpicture}
                  \draw[clip,use as bounding box,draw=none] \boundingBox;
                  \draw[thick,draw=black!50!white,fill=black!10!white, rounded corners]
                  (2.0,1.7) -- (2.0,-2.7) -- (4.7,-2.7) -- (4.7,-0.7) -- (3.0,-0.7) -- (3.0,-0.0) -- (3.7,0.7) -- (3.7,1.7) -- cycle;
                  \node[graphNodeStyle,minimum size=7mm] (u) at (0,0) {$u$};
                  \node[graphNodeStyle,minimum size=7mm] (ue) at (1.25,1) {$u,e$};
                  \node[graphNodeStyle,minimum size=7mm] (we) at (2.75,1) {$w,e$};
                  \node[graphNodeStyle,minimum size=7mm] (w) at (4,0) {$w$};
                  \draw[graphNormalEdgeStyle] (u) -- (ue);
                  \draw[graphNormalEdgeStyle] (ue) -- (we);
                  \draw[graphNormalEdgeStyle] (we) -- (w);
                  \node[graphNodeStyle] (u1e1) at (1.2,-1.5) {};
                  \node[graphNodeStyle] (w1e1) at (2.8,-1.5) {};
                  \node[graphNodeStyle] (u2e2) at (1.2,-2.2) {};
                  \node[graphNodeStyle] (w2e2) at (2.8,-2.2) {};
                  \draw[graphSpecialEdgeStyle] (u1e1) -- (w1e1);
                  \draw[graphSpecialEdgeStyle] (u2e2) -- (w2e2);
               \end{tikzpicture}
               }
            \end{center}
            \caption{Only a middle node. \newline}
            \label{FigureCapacitatedCasesOneMiddle}
            \end{subfigure}
            \hspace{1mm}
            \begin{subfigure}{0.28\textwidth}
            \begin{center}
               \scalebox{0.7}{%
               \begin{tikzpicture}
                  \draw[clip,use as bounding box,draw=none] \boundingBox;
                  \draw[thick,draw=black!50!white,fill=black!10!white, rounded corners]
                  (2.0,-0.7) -- (2.0,-2.7) -- (4.7,-2.7) -- (4.7,-0.7) -- (3.0,-0.7) -- (3.0,0.0) -- (3.7,0.7) -- (3.7,1.7) -- (0.3,1.7) -- (0.3,0.7) -- (1.0,0.0) -- cycle;
                  \node[graphNodeStyle,minimum size=7mm] (u) at (0,0) {$u$};
                  \node[graphNodeStyle,minimum size=7mm] (ue) at (1.25,1) {$u,e$};
                  \node[graphNodeStyle,minimum size=7mm] (we) at (2.75,1) {$w,e$};
                  \node[graphNodeStyle,minimum size=7mm] (w) at (4,0) {$w$};
                  \draw[graphNormalEdgeStyle] (u) -- (ue);
                  \draw[graphNormalEdgeStyle] (ue) -- (we);
                  \draw[graphNormalEdgeStyle] (we) -- (w);
                  \node[graphNodeStyle] (u1e1) at (1.2,-1.5) {};
                  \node[graphNodeStyle] (w1e1) at (2.8,-1.5) {};
                  \node[graphNodeStyle] (u2e2) at (1.2,-2.2) {};
                  \node[graphNodeStyle] (w2e2) at (2.8,-2.2) {};
                  \draw[graphSpecialEdgeStyle] (u1e1) -- (w1e1);
                  \draw[graphSpecialEdgeStyle] (u2e2) -- (w2e2);
               \end{tikzpicture}
               }
            \end{center}
            \caption{Both middle nodes, no endnodes.}
            \label{FigureCapacitatedCasesBothMiddle}
            \end{subfigure}
            \hspace{1mm}
            \begin{subfigure}{0.29\textwidth}
            \begin{center}
               \scalebox{0.7}{%
               \begin{tikzpicture}
                  \draw[clip,use as bounding box,draw=none] \boundingBox;
                  \draw[thick,draw=black!50!white,fill=black!10!white, rounded corners]
                  (2.0,-0.7) -- (2.0,-2.7) -- (4.7,-2.7) -- (4.7,1.7) -- (0.3,1.7) -- (0.3,0.7) -- (1.0,0.0) -- cycle;
                  \node[graphNodeStyle,minimum size=7mm] (u) at (0,0) {$u$};
                  \node[graphNodeStyle,minimum size=7mm] (ue) at (1.25,1) {$u,e$};
                  \node[graphNodeStyle,minimum size=7mm] (we) at (2.75,1) {$w,e$};
                  \node[graphNodeStyle,minimum size=7mm] (w) at (4,0) {$w$};
                  \draw[graphNormalEdgeStyle] (u) -- (ue);
                  \draw[graphNormalEdgeStyle] (ue) -- (we);
                  \draw[graphNormalEdgeStyle] (we) -- (w);
                  \node[graphNodeStyle] (u1e1) at (1.2,-1.5) {};
                  \node[graphNodeStyle] (w1e1) at (2.8,-1.5) {};
                  \node[graphNodeStyle] (u2e2) at (1.2,-2.2) {};
                  \node[graphNodeStyle] (w2e2) at (2.8,-2.2) {};
                  \draw[graphSpecialEdgeStyle] (u1e1) -- (w1e1);
                  \draw[graphSpecialEdgeStyle] (u2e2) -- (w2e2);
               \end{tikzpicture}
               }
            \end{center}
            \caption{One endnode and both middle nodes.}
            \label{FigureCapacitatedCasesOneEndBothMiddle}
            \end{subfigure}
            \hspace{1mm}
            \begin{subfigure}{0.28\textwidth}
            \begin{center}
               \scalebox{0.7}{%
               \begin{tikzpicture}
                  \draw[clip,use as bounding box,draw=none] \boundingBox;
                  \draw[thick,draw=black!50!white,fill=black!10!white, rounded corners]
                  (2.0,-0.7) -- (2.0,-2.7) -- (4.7,-2.7) -- (4.7,-0.7) -- (3.0,-0.7) -- (3.0,0.0) -- (3.7,0.7) -- (3.7,1.7) -- (2.0,1.7) -- (2.0,0.2) -- (0.8,0.2) -- (0.3,0.7) -- (-0.7,0.7) -- (-0.7,-0.7) -- cycle;
                  \node[graphNodeStyle,minimum size=7mm] (u) at (0,0) {$u$};
                  \node[graphNodeStyle,minimum size=7mm] (ue) at (1.25,1) {$u,e$};
                  \node[graphNodeStyle,minimum size=7mm] (we) at (2.75,1) {$w,e$};
                  \node[graphNodeStyle,minimum size=7mm] (w) at (4,0) {$w$};
                  \draw[graphNormalEdgeStyle] (u) -- (ue);
                  \draw[graphNormalEdgeStyle] (ue) -- (we);
                  \draw[graphNormalEdgeStyle] (we) -- (w);
                  \node[graphNodeStyle] (u1e1) at (1.2,-1.5) {};
                  \node[graphNodeStyle] (w1e1) at (2.8,-1.5) {};
                  \node[graphNodeStyle] (u2e2) at (1.2,-2.2) {};
                  \node[graphNodeStyle] (w2e2) at (2.8,-2.2) {};
                  \draw[graphSpecialEdgeStyle] (u1e1) -- (w1e1);
                  \draw[graphSpecialEdgeStyle] (u2e2) -- (w2e2);
               \end{tikzpicture}
               }
            \end{center}
            \caption{An endnode and a nonadjacent middle node.}
            \label{FigureCapacitatedCasesOneEndNonadjacentMiddle}
            \end{subfigure}
            \hspace{1mm}
            \begin{subfigure}{0.28\textwidth}
            \begin{center}
               \scalebox{0.7}{%
               \begin{tikzpicture}
                  \draw[clip,use as bounding box,draw=none] \boundingBox;
                  \draw[thick,draw=black!50!white,fill=black!10!white, rounded corners]
                  (2.0,1.7) -- (2.0,-2.0) -- (4.7,-2.0) -- (4.7,-0.7) -- (3.0,-0.7) -- (3.0,-0.0) -- (3.7,0.7) -- (3.7,1.7) -- cycle;
                  \node[graphNodeStyle,minimum size=7mm] (u) at (0,0) {$u$};
                  \node[graphNodeStyle,minimum size=7mm] (ue) at (1.25,1) {$u,e$};
                  \node[graphNodeStyle,minimum size=7mm] (we) at (2.75,1) {$w,e$};
                  \node[graphNodeStyle,minimum size=7mm] (w) at (4,0) {$w$};
                  \draw[graphNormalEdgeStyle] (u) -- (ue);
                  \draw[graphSpecialEdgeStyle] (ue) -- (we);
                  \draw[graphNormalEdgeStyle] (we) -- (w);
                  \node[graphNodeStyle] (u1e1) at (1.2,-1.5) {};
                  \node[graphNodeStyle] (w1e1) at (2.8,-1.5) {};
                  \draw[graphSpecialEdgeStyle] (u1e1) -- (w1e1);
               \end{tikzpicture}
               }
            \end{center}
            \caption{Only a middle node; $e \in \setdef{e_1,e_2}$. \phantom{foobar}}
            \label{FigureCapacitatedCasesOneMiddleSpecial}
            \end{subfigure}
            \hspace{1mm}
            \begin{subfigure}{0.29\textwidth}
            \begin{center}
               \scalebox{0.7}{%
               \begin{tikzpicture}
                  \draw[clip,use as bounding box,draw=none] \boundingBox;
                  \draw[thick,draw=black!50!white,fill=black!10!white, rounded corners]
                  (2.0,-0.7) -- (2.0,-2.0) -- (4.7,-2.0) -- (4.7,-0.7) -- (3.0,-0.7) -- (3.0,0.0) -- (3.7,0.7) -- (3.7,1.7) -- (2.0,1.7) -- (2.0,0.2) -- (0.8,0.2) -- (0.3,0.7) -- (-0.7,0.7) -- (-0.7,-0.7) -- cycle;
                  \node[graphNodeStyle,minimum size=7mm] (u) at (0,0) {$u$};
                  \node[graphNodeStyle,minimum size=7mm] (ue) at (1.25,1) {$u,e$};
                  \node[graphNodeStyle,minimum size=7mm] (we) at (2.75,1) {$w,e$};
                  \node[graphNodeStyle,minimum size=7mm] (w) at (4,0) {$w$};
                  \draw[graphNormalEdgeStyle] (u) -- (ue);
                  \draw[graphSpecialEdgeStyle] (ue) -- (we);
                  \draw[graphNormalEdgeStyle] (we) -- (w);
                  \node[graphNodeStyle] (u1e1) at (1.2,-1.5) {};
                  \node[graphNodeStyle] (w1e1) at (2.8,-1.5) {};
                  \draw[graphSpecialEdgeStyle] (u1e1) -- (w1e1);
               \end{tikzpicture}
               }
            \end{center}
            \caption{An endnode and a nonadjacent middle node; $e \in \setdef{e_1,e_2}$.}
            \label{FigureCapacitatedCasesOneEndNonadjacentMiddleSpecial}
            \end{subfigure}
            \hspace{1mm}
            \begin{subfigure}{0.28\textwidth}
            \begin{center}
               \scalebox{0.7}{%
               \begin{tikzpicture}
                  \draw[clip,use as bounding box,draw=none] \boundingBox;
                  \draw[thick,draw=black!50!white,fill=black!10!white, rounded corners]
                  (-0.7,-0.7) -- (2.0,-0.7) -- (2.0,-2.7) -- (4.7,-2.7) -- (4.7,1.7) -- (-0.7,1.7) -- cycle;
                  \node[graphNodeStyle,minimum size=7mm] (u) at (0,0) {$u$};
                  \node[graphNodeStyle,minimum size=7mm] (ue) at (1.25,1) {$u,e$};
                  \node[graphNodeStyle,minimum size=7mm] (we) at (2.75,1) {$w,e$};
                  \node[graphNodeStyle,minimum size=7mm] (w) at (4,0) {$w$};
                  \draw[graphNormalEdgeStyle] (u) -- (ue);
                  \draw[graphNormalEdgeStyle] (ue) -- (we);
                  \draw[graphNormalEdgeStyle] (we) -- (w);
                  \node[graphNodeStyle] (u1e1) at (1.2,-1.5) {};
                  \node[graphNodeStyle] (w1e1) at (2.8,-1.5) {};
                  \node[graphNodeStyle] (u2e2) at (1.2,-2.2) {};
                  \node[graphNodeStyle] (w2e2) at (2.8,-2.2) {};
                  \draw[graphSpecialEdgeStyle] (u1e1) -- (w1e1);
                  \draw[graphSpecialEdgeStyle] (u2e2) -- (w2e2);
               \end{tikzpicture}
               }
            \end{center}
            \caption{All nodes. \newline}
            \label{FigureCapacitatedCasesAll}
            \end{subfigure}
            \hspace{1mm}
            \begin{subfigure}{0.28\textwidth}
            \begin{center}
               \scalebox{0.7}{%
               \begin{tikzpicture}
                  \draw[clip,use as bounding box,draw=none] \boundingBox;
                  \draw[thick,draw=black!50!white,fill=black!10!white, rounded corners]
                  (2.0,-0.7) -- (2.0,-2.7) -- (4.7,-2.7) -- (4.7,-0.7) -- cycle;
                  \node[graphNodeStyle,minimum size=7mm] (u) at (0,0) {$u$};
                  \node[graphNodeStyle,minimum size=7mm] (ue) at (1.25,1) {$u,e$};
                  \node[graphNodeStyle,minimum size=7mm] (we) at (2.75,1) {$w,e$};
                  \node[graphNodeStyle,minimum size=7mm] (w) at (4,0) {$w$};
                  \draw[graphNormalEdgeStyle] (u) -- (ue);
                  \draw[graphNormalEdgeStyle] (ue) -- (we);
                  \draw[graphNormalEdgeStyle] (we) -- (w);
                  \node[graphNodeStyle] (u1e1) at (1.2,-1.5) {};
                  \node[graphNodeStyle] (w1e1) at (2.8,-1.5) {};
                  \node[graphNodeStyle] (u2e2) at (1.2,-2.2) {};
                  \node[graphNodeStyle] (w2e2) at (2.8,-2.2) {};
                  \draw[graphSpecialEdgeStyle] (u1e1) -- (w1e1);
                  \draw[graphSpecialEdgeStyle] (u2e2) -- (w2e2);
               \end{tikzpicture}
               }
            \end{center}
            \caption{No nodes. \newline}
            \label{FigureCapacitatedCasesNone}
            \end{subfigure}
            \hspace{1mm}
            \begin{subfigure}{0.28\textwidth}
            \begin{center}
               \scalebox{0.7}{%
               \begin{tikzpicture}
                  \draw[clip,use as bounding box,draw=none] \boundingBox;
                  \draw[thick,draw=black!50!white,fill=black!10!white, rounded corners]
                  (2.0,-0.7) -- (2.0,-2.7) -- (4.7,-2.7) -- (4.7,0.7) -- (3.7,0.7) -- cycle;
                  \node[graphNodeStyle,minimum size=7mm] (u) at (0,0) {$u$};
                  \node[graphNodeStyle,minimum size=7mm] (ue) at (1.25,1) {$u,e$};
                  \node[graphNodeStyle,minimum size=7mm] (we) at (2.75,1) {$w,e$};
                  \node[graphNodeStyle,minimum size=7mm] (w) at (4,0) {$w$};
                  \draw[graphNormalEdgeStyle] (u) -- (ue);
                  \draw[graphNormalEdgeStyle] (ue) -- (we);
                  \draw[graphNormalEdgeStyle] (we) -- (w);
                  \node[graphNodeStyle] (u1e1) at (1.2,-1.5) {};
                  \node[graphNodeStyle] (w1e1) at (2.8,-1.5) {};
                  \node[graphNodeStyle] (u2e2) at (1.2,-2.2) {};
                  \node[graphNodeStyle] (w2e2) at (2.8,-2.2) {};
                  \draw[graphSpecialEdgeStyle] (u1e1) -- (w1e1);
                  \draw[graphSpecialEdgeStyle] (u2e2) -- (w2e2);
               \end{tikzpicture}
               }
            \end{center}
            \caption{One endnode, no middle node.}
            \label{FigureCapacitatedCasesOneEnd}
            \end{subfigure}
            \hspace{1mm}
            \begin{subfigure}{0.28\textwidth}
            \begin{center}
               \scalebox{0.7}{%
               \begin{tikzpicture}
                  \draw[clip,use as bounding box,draw=none] \boundingBoxShorter;
                  \draw[thick,draw=black!50!white,fill=black!10!white, rounded corners]
                  (2.0,1.7) -- (2.0,-2) -- (4.7,-2) -- (4.7,1.7) -- cycle;
                  \node[graphNodeStyle,minimum size=7mm] (u) at (0,0) {$u$};
                  \node[graphNodeStyle,minimum size=7mm] (ue) at (1.25,1) {$u,e$};
                  \node[graphNodeStyle,minimum size=7mm] (we) at (2.75,1) {$w,e$};
                  \node[graphNodeStyle,minimum size=7mm] (w) at (4,0) {$w$};
                  \draw[graphNormalEdgeStyle] (u) -- (ue);
                  \draw[graphNormalEdgeStyle] (ue) -- (we);
                  \draw[graphNormalEdgeStyle] (we) -- (w);
                  \node[graphNodeStyle] (u1e1) at (1.2,-0.8) {};
                  \node[graphNodeStyle] (w1e1) at (2.8,-0.8) {};
                  \node[graphNodeStyle] (u2e2) at (1.2,-1.5) {};
                  \node[graphNodeStyle] (w2e2) at (2.8,-1.5) {};
                  \draw[graphSpecialEdgeStyle] (u1e1) -- (w1e1);
                  \draw[graphSpecialEdgeStyle] (u2e2) -- (w2e2);
               \end{tikzpicture}
               }
            \end{center}
            \caption{An endnode and its adjacent middle node.}
            \label{FigureCapacitatedCasesOneEndAdjacentMiddle}
            \end{subfigure}
            \hspace{1mm}
            \begin{subfigure}{0.28\textwidth}
            \begin{center}
               \scalebox{0.7}{%
               \begin{tikzpicture}
                  \draw[clip,use as bounding box,draw=none] \boundingBoxShorter;
                  \draw[thick,draw=black!50!white,fill=black!10!white, rounded corners]
                  (2.0,1.7) -- (2.0,-2.0) -- (4.7,-2.0) -- (4.7,1.7) -- cycle;
                  \node[graphNodeStyle,minimum size=7mm] (u) at (0,0) {$u$};
                  \node[graphNodeStyle,minimum size=7mm] (ue) at (1.25,1) {$u,e$};
                  \node[graphNodeStyle,minimum size=7mm] (we) at (2.75,1) {$w,e$};
                  \node[graphNodeStyle,minimum size=7mm] (w) at (4,0) {$w$};
                  \draw[graphNormalEdgeStyle] (u) -- (ue);
                  \draw[graphSpecialEdgeStyle] (ue) -- (we);
                  \draw[graphNormalEdgeStyle] (we) -- (w);
                  \node[graphNodeStyle] (u1e1) at (1.2,-1.5) {};
                  \node[graphNodeStyle] (w1e1) at (2.8,-1.5) {};
                  \draw[graphSpecialEdgeStyle] (u1e1) -- (w1e1);
               \end{tikzpicture}
               }
            \end{center}
            \caption{An endnode and its adjacent middle node; $e \in \setdef{e_1,e_2}$.}
            \label{FigureCapacitatedCasesOneEndAdjacentMiddleSpecial}
            \end{subfigure}
         \end{center}
         \vspace{-3mm}
         \caption{%
            All different ways of how $\bar{S}$ (highlighted in gray) intersects the node set of the $3$-path corresponding to an edge $e = \setdef{u,w}$ (up to symmetry).
            Dashed edges are the edges $\bar{e}_1$ and $\bar{e}_2$.
         }
         \label{FigureCapacitatedCases}
      \end{figure}

      \begin{itemize}
      \item
         $u,w \in \bar{S}$ and at least one node $v \in e$ satisfies $(v,e) \notin \bar{S}$ (see Figure~\ref{FigureCapacitatedCases}, Cases~\subref{FigureCapacitatedCasesBothEnd}, \subref{FigureCapacitatedCasesBothEndOneMiddle} and~\subref{FigureCapacitatedCasesBothEndOneMiddleSpecial}).
      \item
         at least one node $v \in e$ satisfies $(v,e) \in \bar{S}$ and $v \notin \bar{S}$ (see Figure~\ref{FigureCapacitatedCases}, Cases~\subref{FigureCapacitatedCasesOneMiddle}, \subref{FigureCapacitatedCasesBothMiddle}, \subref{FigureCapacitatedCasesOneEndBothMiddle}, \subref{FigureCapacitatedCasesOneEndNonadjacentMiddle}, \subref{FigureCapacitatedCasesOneMiddleSpecial} and~\subref{FigureCapacitatedCasesOneEndNonadjacentMiddleSpecial}).
      \end{itemize}
      We furthermore assume that $\bar{S}$ is chosen among all sets that induce a violated \emph{facet-defining} constraint (of either constraint type) such that $k(\bar{S})$ is \emph{minimum}.

      \bigskip

      Suppose $k(\bar{S}) = 0$, i.e., Cases~\subref{FigureCapacitatedCasesAll}, \subref{FigureCapacitatedCasesNone}, \subref{FigureCapacitatedCasesOneEnd}, \subref{FigureCapacitatedCasesOneEndAdjacentMiddle} or~\subref{FigureCapacitatedCasesOneEndAdjacentMiddleSpecial} of Figure~\ref{FigureCapacitatedCases} apply to all edges.
      Let $F \coloneqq \setdef{ e \in \delta(S) \setminus \setdef{e_1,e_2} }[ \exists v \in e : v \in \bar{S} \text{ and } (v,e) \in \bar{S} ]$.
      We obtain
      \begin{align*}
         \bar{x}(\bar{E}[\bar{S}]) + \bar{y}
         &\underset{\text{\eqref{EquationCapacitatedbMatchingLifting}}}{=} 2x(E[S]) + (c-x)(E[S]) + x(F) + x_{e_1} + x_{e_2} + \bar{y} \\
         &\underset{\text{\eqref{EquationCapacitatedbMatchingLifting}}}{=} x(E[S]) + c(E[S]) + x(F) + y + 1
         \underset{\text{\eqref{ConstraintCapacitatedbMatchingQuadraticDown}}}{\leq} \left\lfloor \tfrac{1}{2}(b(S) + c(F)) \right\rfloor + c(E[S]) + 1 \\
         &= \left\lfloor \tfrac{1}{2}(b(S) + c(F) + 2c(E[S]) + c_{e_1} + c_{e_2}) \right\rfloor
         \underset{\text{\eqref{EquationCapacitatedbMatchingB}}}{=} \left\lfloor \tfrac{1}{2} \bar{b}(\bar{S}) \right\rfloor \text{ and} \\
         \bar{x}(\bar{E}[\bar{S}]) + \bar{x}_{\bar{e}_1} + \bar{x}_{\bar{e}_2} - \bar{y}
         &\underset{\text{\eqref{EquationCapacitatedbMatchingLifting}}}{=} 2x(E[S]) + (c-x)(E[S]) + x(F) + c_{e_1} + c_{e_2} - \bar{y} \\
         &\underset{\text{\eqref{EquationCapacitatedbMatchingLifting}}}{=} x(E[S]) + c(E[S]) + x(F) + x_{e_1} + x_{e_2} - y + 1 \\
         &\underset{\text{\eqref{ConstraintCapacitatedbMatchingQuadraticUp}}}{\leq} \left\lfloor \tfrac{1}{2}(b(S) + c(F) + 1) \right\rfloor + c(E[S]) + 1 \\
         &= \left\lfloor \tfrac{1}{2}(b(S) + c(F) + 2c(E[S]) + c_{e_1} + c_{e_2} + 1) \right\rfloor
         \underset{\text{\eqref{EquationCapacitatedbMatchingB}}}{=} \left\lfloor \tfrac{1}{2} ( \bar{b}(\bar{S}) + 1 ) \right\rfloor.
      \end{align*}
      This contradicts the assumption that one of the Inequalities~\eqref{ConstraintbMatchingQuadraticDown}
      and~\eqref{ConstraintbMatchingQuadraticUp} for $\bar{S}$ is violated.

\bigskip

      Suppose $k(\bar{S}) \geq 1$ and consider an edge $e = \setdef{u,w} \in E$ satisfying one of the conditions in the definition of $k(\bar{S})$.
      We will distinguish all relevant cases from Figure~\ref{FigureCapacitatedCases}.
      In every case, we will construct a set $\bar{S}'$ related to $\bar{S}$, and use the notation ($\bar{S}'$) in formulas to refer to its (case-specific) definition.
      For brevity we will omit the conclusion that the derived inequalities contradict the assumption that one of the Inequalities~\eqref{ConstraintbMatchingQuadraticDown} or~\eqref{ConstraintbMatchingQuadraticUp} for $\bar{S}$ is violated, since it can be drawn in every case.

\bigskip
      
      \noindent
      \textbf{Case~\subref{FigureCapacitatedCasesBothEnd}:} $u,w \in \bar{S}$, $(u,e),(w,e) \notin \bar{S}$ and $e \notin \setdef{e_1,e_2}$.
      
      Let $\bar{S}' \coloneqq \bar{S} \cup \setdef{(u,e), (w,e)}$ and observe that $\bar{e}_1,\bar{e}_2 \in \delta(\bar{S}')$.
      Notice that $k(\bar{S}') < k(\bar{S})$, and that we can assume that (by the minimality assumption on $k(\bar{S})$)  Inequalities~\eqref{ConstraintbMatchingQuadraticDown} and~\eqref{ConstraintbMatchingQuadraticUp} for $\bar{S}'$ are satisfied.
      We obtain
      \begin{align*}
         \bar{x}(\bar{E}[\bar{S}]) + \bar{y}
         &\underset{\text{($\bar{S}'$),\eqref{EquationCapacitatedbMatchingLifting}}}{=} \bar{x}(\bar{E}[\bar{S}']) - x_e - (c_e - x_e) - x_e + \bar{y}
         \underset{\text{\eqref{ConstraintbMatchingQuadraticDown}}}{\leq} \left\lfloor \tfrac{1}{2}\bar{b}(\bar{S}') \right\rfloor - x_e - c_e \\
         & \underset{\text{($\bar{S}'$),\eqref{EquationCapacitatedbMatchingB}}}{=} \left\lfloor \tfrac{1}{2}(\bar{b}(\bar{S}) + 2c_e ) \right\rfloor - x_e - c_e
         \underset{\text{\eqref{ConstraintMatchingNonnegative}}}{\leq} \left\lfloor \tfrac{1}{2}\bar{b}(\bar{S}) \right\rfloor \text{ and } \\ 
         \bar{x}(\bar{E}[\bar{S}]) + \bar{x}_{\bar{e}_1} + \bar{x}_{\bar{e}_2} - \bar{y}
         &\underset{\text{($\bar{S}'$),\eqref{EquationCapacitatedbMatchingLifting}}}{=} \bar{x}(\bar{E}[\bar{S}']) - x_e - (c_e - x_e) - x_e + \bar{x}_{\bar{e}_1} + \bar{x}_{\bar{e}_2} - \bar{y} \\
         &\underset{\text{\eqref{ConstraintbMatchingQuadraticUp}}}{\leq} \left\lfloor \tfrac{1}{2}(\bar{b}(\bar{S}') + 1) \right\rfloor - x_e - c_e
         \underset{\text{($\bar{S}'$),\eqref{EquationCapacitatedbMatchingB}}}{=} \left\lfloor \tfrac{1}{2}(\bar{b}(\bar{S}) + 2c_e + 1 ) \right\rfloor - x_e - c_e \\
         &\underset{\text{\eqref{ConstraintMatchingNonnegative}}}{\leq} \left\lfloor \tfrac{1}{2}(\bar{b}(\bar{S}) + 1) \right\rfloor.
      \end{align*}

      \bigskip
      
      \noindent
      \textbf{Case~\subref{FigureCapacitatedCasesBothEndOneMiddle}:} $u,(w,e),w \in \bar{S}$, $(u,e) \notin \bar{S}$ and $e \notin \setdef{e_1,e_2}$.
      
      Let $\bar{S}' \coloneqq \bar{S} \cup \setdef{(u,e)}$ and observe that $\bar{e}_1,\bar{e}_2 \in \delta(\bar{S}')$.
      Notice that $k(\bar{S}') < k(\bar{S})$, and we can again assume that Inequalities~\eqref{ConstraintbMatchingQuadraticDown} and~\eqref{ConstraintbMatchingQuadraticUp} for $\bar{S}'$ are satisfied.
      We obtain
      \begin{align*}
         \bar{x}(\bar{E}[\bar{S}]) + \bar{y}
         &\underset{\text{($\bar{S}'$),\eqref{EquationCapacitatedbMatchingLifting}}}{=} \bar{x}(\bar{E}[\bar{S}']) - x_e - (c_e - x_e) + \bar{y}
         \underset{\text{\eqref{ConstraintbMatchingQuadraticDown}}}{\leq} \left\lfloor \tfrac{1}{2}\bar{b}(\bar{S}') \right\rfloor - c_e \\
         & \underset{\text{($\bar{S}'$),\eqref{EquationCapacitatedbMatchingB}}}{=} \left\lfloor \tfrac{1}{2}(\bar{b}(\bar{S}) + 2c_e ) \right\rfloor - c_e
         = \left\lfloor \tfrac{1}{2}\bar{b}(\bar{S}) \right\rfloor \text{ and } \\ 
         \bar{x}(\bar{E}[\bar{S}]) + \bar{x}_{\bar{e}_1} + \bar{x}_{\bar{e}_2} - \bar{y}
         &\underset{\text{($\bar{S}'$),\eqref{EquationCapacitatedbMatchingLifting}}}{=} \bar{x}(\bar{E}[\bar{S}']) - x_e - (c_e - x_e) + \bar{x}_{\bar{e}_1} + \bar{x}_{\bar{e}_2} - \bar{y} \\
         &\underset{\text{\eqref{ConstraintbMatchingQuadraticUp}}}{\leq} \left\lfloor \tfrac{1}{2}(\bar{b}(\bar{S}') + 1) \right\rfloor - c_e \\
         & \underset{\text{($\bar{S}'$),\eqref{EquationCapacitatedbMatchingB}}}{=} \left\lfloor \tfrac{1}{2}(\bar{b}(\bar{S}) + 2c_e + 1 ) \right\rfloor - c_e
         = \left\lfloor \tfrac{1}{2}(\bar{b}(\bar{S}) + 1) \right\rfloor.
      \end{align*}

      \bigskip
      
      \noindent
      \textbf{Case~\subref{FigureCapacitatedCasesBothEndOneMiddleSpecial}:} $u,(w,e),w \in \bar{S}$, $(u,e) \notin \bar{S}$, w.l.o.g.\ $e = e_1$.

      Let $\bar{S}' \coloneqq \bar{S} \cup \setdef{(u,e)}$ and observe that $\bar{e}_1 \notin \delta(\bar{S}')$ and $\bar{e}_2 \in \delta(\bar{S}')$.
      We obtain
      \begin{align*}
         \bar{x}(\bar{E}[\bar{S}]) + \bar{y}
         &\underset{\text{($\bar{S}'$),\eqref{EquationCapacitatedbMatchingLifting}}}{=} \bar{x}(\bar{E}[\bar{S}']) - x_{e_1} - (c_{e_1} - x_{e_1}) + \bar{y}
         \underset{\text{\eqref{ConstraintMatchingQuadraticGood}}}{\leq} \bar{x}(\bar{E}[\bar{S}']) +  \bar{x}_{\bar{e}_2} - 1 \\
         &\underset{\text{\eqref{ConstraintbMatchingBlossom}}}{\leq} \left\lfloor \tfrac{1}{2}\bar{b}(\bar{S}' \cup \bar{e}_2) \right\rfloor - 1
         \underset{\text{($\bar{S}'$),\eqref{EquationCapacitatedbMatchingB}}}{=} \left\lfloor \tfrac{1}{2}\bar{b}(\bar{S}) \right\rfloor
         \text{ and } \\
         \bar{x}(\bar{E}[\bar{S}]) + \bar{x}_{\bar{e}_1} + \bar{x}_{\bar{e}_2} - \bar{y}
         &\underset{\text{($\bar{S}'$),\eqref{EquationCapacitatedbMatchingLifting}}}{=} \bar{x}(\bar{E}[\bar{S}']) - x_{e_1} - (c_{e_1} - x_{e_1}) + \bar{x}_{\bar{e}_1} + \bar{x}_{\bar{e}_2} - \bar{y} \\
         &\underset{\text{\eqref{EquationCapacitatedbMatchingLifting}}}{=} \bar{x}(\bar{E}[\bar{S}']) -c_{e_1} + (c_{e_1}-x_{e_1}) + (c_{e_2}-x_{e_2})  + (x_{e_1} + x_{e_2} - y - 1) \\
         &= \bar{x}(\bar{E}[\bar{S}']) - y
         \underset{\text{\eqref{ConstraintMatchingBound}}}{\leq} \bar{x}(\bar{E}[\bar{S}'])
         \underset{\text{\eqref{ConstraintbMatchingBlossom}}}{\leq} \left\lfloor \tfrac{1}{2}\bar{b}(\bar{S}')  \right\rfloor
         \underset{\text{($\bar{S}'$),\eqref{EquationCapacitatedbMatchingB}}}{=} \left\lfloor \tfrac{1}{2}(\bar{b}(\bar{S}) + 1) \right\rfloor.
      \end{align*}

\bigskip
      
      \noindent
      \textbf{Cases~\subref{FigureCapacitatedCasesOneMiddle} and~\subref{FigureCapacitatedCasesOneEndNonadjacentMiddle}:} $(w,e) \in \bar{S}$, $(u,e),w \notin \bar{S}$ and $e \notin \setdef{e_1,e_2}$.

      Let $\bar{S}' \coloneqq \bar{S} \setminus \setdef{(w,e)}$ and observe that $\bar{e}_1,\bar{e}_2 \in \delta(\bar{S}')$.
      Notice that $k(\bar{S}') < k(\bar{S})$, and we can again assume that Inequalities~\eqref{ConstraintbMatchingQuadraticDown} and~\eqref{ConstraintbMatchingQuadraticUp} for $\bar{S}'$ are satisfied.
      We obtain
      \begin{align*}
         \bar{x}(\bar{E}[\bar{S}]) + \bar{y}
         &\underset{\text{($\bar{S}'$),\eqref{EquationCapacitatedbMatchingLifting}}}{=} \bar{x}(\bar{E}[\bar{S}']) + \bar{y}
         \underset{\text{\eqref{ConstraintbMatchingQuadraticDown}}}{\leq} \left\lfloor \tfrac{1}{2}\bar{b}(\bar{S}') \right\rfloor \\
         &\underset{\text{($\bar{S}'$),\eqref{EquationCapacitatedbMatchingB}}}{=} \left\lfloor \tfrac{1}{2}(\bar{b}(\bar{S}) - c_e ) \right\rfloor
         \leq \left\lfloor \tfrac{1}{2}\bar{b}(\bar{S}) \right\rfloor
         \text{ and } \\ 
         \bar{x}(\bar{E}[\bar{S}]) + \bar{x}_{\bar{e}_1} + \bar{x}_{\bar{e}_2} - \bar{y}
         &\underset{\text{($\bar{S}'$),\eqref{EquationCapacitatedbMatchingLifting}}}{=} \bar{x}(\bar{E}[\bar{S}']) + \bar{x}_{\bar{e}_1} + \bar{x}_{\bar{e}_2} - \bar{y} \\
         &\underset{\text{\eqref{ConstraintbMatchingQuadraticUp}}}{\leq} \left\lfloor \tfrac{1}{2}(\bar{b}(\bar{S}') + 1) \right\rfloor \\
         &\underset{\text{($\bar{S}'$),\eqref{EquationCapacitatedbMatchingB}}}{=} \left\lfloor \tfrac{1}{2}(\bar{b}(\bar{S}) - c_e + 1 ) \right\rfloor
         \leq \left\lfloor \tfrac{1}{2}(\bar{b}(\bar{S}) + 1) \right\rfloor.
      \end{align*}

\bigskip
      
      \noindent
      \textbf{Case~\subref{FigureCapacitatedCasesBothMiddle}:} $(u,e),(w,e) \in \bar{S}$, $u,w \notin \bar{S}$ and $e \notin \setdef{e_1,e_2}$.
      
      Let $\bar{S}' \coloneqq \bar{S} \setminus \setdef{(u,e), (w,e)}$ and observe that $\bar{e}_1,\bar{e}_2 \in \delta(\bar{S}')$.
      Notice that $k(\bar{S}') < k(\bar{S})$, and we can again assume that Inequalities~\eqref{ConstraintbMatchingQuadraticDown} and~\eqref{ConstraintbMatchingQuadraticUp} for $\bar{S}'$ are satisfied.
      We obtain
      \begin{align*}
         \bar{x}(\bar{E}[\bar{S}]) + \bar{y}
         &\underset{\text{($\bar{S}'$),\eqref{EquationCapacitatedbMatchingLifting}}}{=} \bar{x}(\bar{E}[\bar{S}']) + (c_e - x_e) + \bar{y}
         \underset{\text{\eqref{ConstraintbMatchingQuadraticDown}}}{\leq} \left\lfloor \tfrac{1}{2}\bar{b}(\bar{S}') \right\rfloor + c_e - x_e \\
         &\underset{\text{($\bar{S}'$),\eqref{EquationCapacitatedbMatchingB}}}{=} \left\lfloor \tfrac{1}{2}(\bar{b}(\bar{S}) - 2c_e ) \right\rfloor + c_e - x_e
         \underset{\text{\eqref{ConstraintMatchingNonnegative}}}{\leq} \left\lfloor \tfrac{1}{2}\bar{b}(\bar{S}) \right\rfloor \text{ and } \\ 
         \bar{x}(\bar{E}[\bar{S}]) + \bar{x}_{\bar{e}_1} + \bar{x}_{\bar{e}_2} - \bar{y}
         &\underset{\text{($\bar{S}'$),\eqref{EquationCapacitatedbMatchingLifting}}}{=} \bar{x}(\bar{E}[\bar{S}']) + (c_e-x_e) + \bar{x}_{\bar{e}_1} + \bar{x}_{\bar{e}_2} - \bar{y}
         \underset{\text{\eqref{ConstraintbMatchingQuadraticUp}}}{\leq} \left\lfloor \tfrac{1}{2}(\bar{b}(\bar{S}') + 1) \right\rfloor + c_e - x_e \\
         &\underset{\text{($\bar{S}'$),\eqref{EquationCapacitatedbMatchingB}}}{=} \left\lfloor \tfrac{1}{2}(\bar{b}(\bar{S}) - 2c_e + 1 ) \right\rfloor + c_e - x_e
         \underset{\text{\eqref{ConstraintMatchingNonnegative}}}{\leq} \left\lfloor \tfrac{1}{2}(\bar{b}(\bar{S}) + 1) \right\rfloor.
      \end{align*}
      
\bigskip
      
      \noindent
      \textbf{Case~\subref{FigureCapacitatedCasesOneEndBothMiddle}:} $(u,e),(w,e),w \in \bar{S}$, $u \notin \bar{S}$ and $e \notin \setdef{e_1,e_2}$.

      Let $\bar{S}' \coloneqq \bar{S} \setminus \setdef{(u,e),(w,e)}$ and observe that $\bar{e}_1,\bar{e}_2 \in \delta(\bar{S}')$.
      Notice that $k(\bar{S}') < k(\bar{S})$, and we can again assume that Inequalities~\eqref{ConstraintbMatchingQuadraticDown} and~\eqref{ConstraintbMatchingQuadraticUp} for $\bar{S}'$ are satisfied.
      We obtain
      \begin{align*}
         \bar{x}(\bar{E}[\bar{S}]) + \bar{y}
         &\underset{\text{($\bar{S}'$),\eqref{EquationCapacitatedbMatchingLifting}}}{=} \bar{x}(\bar{E}[\bar{S}']) + (c_e - x_e) + x_e + \bar{y}
         \underset{\text{\eqref{ConstraintbMatchingQuadraticDown}}}{\leq} \left\lfloor \tfrac{1}{2}\bar{b}(\bar{S}') \right\rfloor + c_e \\
         &\underset{\text{($\bar{S}'$),\eqref{EquationCapacitatedbMatchingB}}}{=} \left\lfloor \tfrac{1}{2}(\bar{b}(\bar{S}) - 2c_e ) \right\rfloor + c_e
         = \left\lfloor \tfrac{1}{2}\bar{b}(\bar{S}) \right\rfloor
         \text{ and } \\ 
         \bar{x}(\bar{E}[\bar{S}]) + \bar{x}_{\bar{e}_1} + \bar{x}_{\bar{e}_2} - \bar{y}
         &\underset{\text{($\bar{S}'$),\eqref{EquationCapacitatedbMatchingLifting}}}{=} \bar{x}(\bar{E}[\bar{S}']) + (c_e-x_e) + x_e + \bar{x}_{\bar{e}_1} + \bar{x}_{\bar{e}_2} - \bar{y}
         \underset{\text{\eqref{ConstraintbMatchingQuadraticUp}}}{\leq} \left\lfloor \tfrac{1}{2}(\bar{b}(\bar{S}') + 1) \right\rfloor + c_e \\
         &\underset{\text{($\bar{S}'$),\eqref{EquationCapacitatedbMatchingB}}}{=} \left\lfloor \tfrac{1}{2}(\bar{b}(\bar{S}) - 2c_e + 1 ) \right\rfloor + c_e
         = \left\lfloor \tfrac{1}{2}(\bar{b}(\bar{S}) + 1) \right\rfloor.
      \end{align*}

      \bigskip
      
      \noindent
      \textbf{Cases~\subref{FigureCapacitatedCasesOneMiddleSpecial} and~\subref{FigureCapacitatedCasesOneEndNonadjacentMiddleSpecial}:} $(w,e) \in \bar{S}$, $(u,e),w \notin \bar{S}$ and w.l.o.g.\ $e = e_1$.

      Let $\bar{S}' \coloneqq \bar{S} \setminus \setdef{(w,e)}$ and observe that $\bar{e}_1 \notin \delta(\bar{S}')$ and $\bar{e}_2 \in \delta(\bar{S}')$.
      We obtain
      \begin{align*}
         \bar{x}(\bar{E}[\bar{S}]) + \bar{y}
         &\underset{\text{($\bar{S}'$),\eqref{EquationCapacitatedbMatchingLifting}}}{=} \bar{x}(\bar{E}[\bar{S}']) + \bar{y}
         \underset{\text{\eqref{ConstraintMatchingQuadraticGood}}}{\leq} \bar{x}(\bar{E}[\bar{S}']) + \bar{x}_{\bar{e}_2}
         \underset{\text{\eqref{ConstraintbMatchingBlossom}}}{\leq} \left\lfloor \tfrac{1}{2}\bar{b}(\bar{S}' \cup \bar{e}_2) \right\rfloor \\
         &= \left\lfloor \tfrac{1}{2}(\bar{b}(\bar{S}') + 1) \right\rfloor
         = \left\lfloor \tfrac{1}{2}\bar{b}(\bar{S}) \right\rfloor  \text{ and } \\
         \bar{x}(\bar{E}[\bar{S}]) + \bar{x}_{\bar{e}_1} + \bar{x}_{\bar{e}_2} - \bar{y}
         &\underset{\text{($\bar{S}'$),\eqref{EquationCapacitatedbMatchingLifting}}}{=} \bar{x}(\bar{E}[\bar{S}']) + \bar{x}_{\bar{e}_1} + \bar{x}_{\bar{e}_2} - \bar{y} \\
         &\underset{\text{\eqref{EquationCapacitatedbMatchingLifting}}}{=} \bar{x}(\bar{E}[\bar{S}']) + (c_{e_1} - x_{e_1}) + (c_{e_2} - x_{e_2}) + (x_{e_1} + x_{e_2} - y - 1) \\
         &= \bar{x}(\bar{E}[\bar{S}']) - y + 1
         \underset{\text{\eqref{ConstraintMatchingBound}}}{\leq} \bar{x}(\bar{E}[\bar{S}']) + 1
         \underset{\text{\eqref{ConstraintbMatchingBlossom}}}{\leq} \left\lfloor \tfrac{1}{2}\bar{b}(\bar{S}')  \right\rfloor + 1
         \underset{\text{($\star$)}}{=} \left\lfloor \tfrac{1}{2}(\bar{b}(\bar{S}) + 1) \right\rfloor.
      \end{align*}
      By Theorem~\ref{TheoremUncapacitatedbMatching}, Inequality~\eqref{ConstraintbMatchingQuadraticUp} is only facet-defining for sets $\bar{S}$ with $\bar{b}(\bar{S})$ even.
      Hence, for this inequality, $\bar{b}_{(w,e)} = 1$ implies that $\bar{b}(\bar{S}')$ is odd, which proves ($\star$).

\bigskip

      We now argue that the case distinction is complete.
      For every edge $e = \setdef{u,w} \in E \setminus \setdef{e_1,e_2}$, the four nodes $u$, $(u,e)$, $(w,e)$ and $w$ can (independently) be contained in $\bar{S}$ or not, which yields 16 cases in total.
      The 6 cases~\subref{FigureCapacitatedCasesBothEndOneMiddle}, \subref{FigureCapacitatedCasesOneMiddle},  \subref{FigureCapacitatedCasesOneEndBothMiddle}, \subref{FigureCapacitatedCasesOneEndNonadjacentMiddle},  \subref{FigureCapacitatedCasesOneEnd} and~\subref{FigureCapacitatedCasesOneEndAdjacentMiddle}
      each represent two such possibilities by exchanging $u$ and $w$, while the 4 cases~\subref{FigureCapacitatedCasesBothEnd},  \subref{FigureCapacitatedCasesBothMiddle}, \subref{FigureCapacitatedCasesAll} and~\subref{FigureCapacitatedCasesNone} are symmetric.
      Since $2 \cdot 6 + 4 = 16$ and since no possibility is considered in more than one of the mentioned cases, all possibilities are considered.
      The additional cases~\subref{FigureCapacitatedCasesBothEndOneMiddleSpecial}, \subref{FigureCapacitatedCasesOneMiddleSpecial}, \subref{FigureCapacitatedCasesOneEndNonadjacentMiddleSpecial} and~\subref{FigureCapacitatedCasesOneEndAdjacentMiddleSpecial} arise from the previous ones by selecting those for which $e \in \delta(\bar{S})$, which is required for $e \in \setdef{e_1,e_2}$.
      The inequalities derived in all cases contradict the assumption that one of the Inequalities~\eqref{ConstraintbMatchingQuadraticDown} or~\eqref{ConstraintbMatchingQuadraticUp} for $\bar{S}$ is violated, which concludes the proof of the claim.
   \end{proof}

   This establishes that $(\bar{x},\bar{y}) \in Q$, which concludes the proof of the theorem.
\end{proof}

\section{Discussion}

The observation $\PmatchOne = \PmatchOneDown \cap \PmatchOneUp$ from Section~\ref{SectionResults}  is not specific to matching polytopes.
In fact, this is a property of convex sets, as shown in the following proposition.
By $\unitvec{i}$ we denote the $i$'th unit vector.

\begin{proposition}
  \label{TheoremIntersectMonotonizations}
  Let $C \subseteq \R^n$ be a convex set and let $C^{\uparrow} \coloneqq \setdef{ x + \lambda \unitvec{1} }[ x \in C, \lambda \geq 0]$
  and $C^{\downarrow} \coloneqq \setdef{ x - \lambda \unitvec{1} }[ x \in C, \lambda \geq 0]$ \reviewFix{be} its respective up- and downward monotonization
  of the first variable.
  Then $C = C^{\uparrow} \cap C^{\downarrow}$.
\end{proposition}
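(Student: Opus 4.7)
The plan is to prove the two inclusions separately. The inclusion $C \subseteq C^{\uparrow} \cap C^{\downarrow}$ is immediate: for any $x \in C$, the choice $\lambda = 0$ shows $x \in C^{\uparrow}$ and $x \in C^{\downarrow}$. No convexity is needed for this direction.

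For the reverse inclusion $C^{\uparrow} \cap C^{\downarrow} \subseteq C$, I would take an arbitrary point $z \in C^{\uparrow} \cap C^{\downarrow}$ and unpack the definitions: there exist $x, y \in C$ and scalars $\lambda, \mu \geq 0$ with $z = x + \lambda \unitvec{1} = y - \mu \unitvec{1}$. The boundary cases $\lambda = 0$ or $\mu = 0$ give $z = x \in C$ or $z = y \in C$ directly. Assuming $\lambda, \mu > 0$, the two equalities force $y - x = (\lambda + \mu) \unitvec{1}$, so $z$ lies strictly between $x$ and $y$ on the line segment joining them, namely $z = \tfrac{\mu}{\lambda + \mu} x + \tfrac{\lambda}{\lambda + \mu} y$, as a short computation verifies. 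Since $C$ is convex and $x, y \in C$ with nonnegative weights summing to one, this convex combination lies in $C$, so $z \in C$.

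The only substantive step is recognizing that the common point $z$ can be expressed as a convex combination of its two witnesses $x, y \in C$; this is the place where convexity of $C$ is used, and it is also where the argument would fail without it (one easily sees that for nonconvex $C$, e.g.\ a two-point set on a line parallel to $\unitvec{1}$, the claim is false). There is no real obstacle: the proof is a one-line computation after setting up the witnesses.
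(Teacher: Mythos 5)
Your proof is correct and follows essentially the same route as the paper: both directions are handled identically, and the key step of writing the common point as the convex combination $\tfrac{\mu}{\lambda+\mu}x + \tfrac{\lambda}{\lambda+\mu}y$ of the two witnesses is exactly the computation in the paper's proof (which treats the degenerate case $\lambda=0$ separately, just as you do).
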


\begin{proof}
  Clearly, $C \subseteq C^{\uparrow},C^{\downarrow}$ and thus $C \subseteq C^{\uparrow} \cap C^{\downarrow}$.
  In order to prove the reverse direction,
  let $x \in C^{\uparrow} \cap C^{\downarrow}$.
  By definition, there exist $x^{(1)},x^{(2)} \in C$ and $\lambda_1,\lambda_2 \geq 0$ such that $x^{(1)} + \lambda_1 \unitvec{1} = x = x^{(2)} - \lambda_2 \unitvec{1}$.
  If $\lambda_1 = 0$, then $x = x^{(1)} \in C$, and we are done.
  Otherwise, the equation
  \begin{gather*}
    \frac{\lambda_2}{\lambda_1 + \lambda_2} x^{(1)} + \frac{\lambda_1}{\lambda_1 + \lambda_2} x^{(2)}
    = \frac{ \lambda_2 (x - \lambda_1 \unitvec{1}) + \lambda_1 (x + \lambda_2 \unitvec{1}) }{\lambda_1 + \lambda_2}
    = x
  \end{gather*}
  proves that $x$ is a convex combination of two points in $C$, i.e., $x \in C$,
  which concludes the proof.
\end{proof}

In the case of matching polytopes we intersect the up- and downward monotonizations with the $0/1$-cube,
but this does not interfere with the arguments provided above.
Note that Proposition~\ref{TheoremIntersectMonotonizations} does not generalize to the simultaneous monotonization
of \emph{several} variables.
To see this, consider $P = \conv\setdef{ \transpose{(0,0)}, \transpose{(1,1)} }$.
Its upward-monotonization w.r.t. two both variables is $P + \R^2_+ = \R^2_+$,
its downward-monotonization is $P - \R^2_+ = \transpose{(1,1)} + \R^2_-$,
but their intersection is equal to $[0,1]^2 \neq P$.
Hence, this is a purely one-dimensional phenomenon.

\bigskip

\paragraph{Descriptions of monotonizations.}
A second property is specific, at least to polytopes arising from one-term linearizations:
we can obtain the complete description for $\PmatchOneDown$ from the one for $\PmatchOne$
by omitting the $\leq$-inequalities that have a negative $y$-coefficient.
Similarly, we obtain the complete description for $\PmatchOneUp$ from the one for $\PmatchOne$
by omitting the $\leq$-inequalities that have a positive $y$-coefficient
and adding $y \leq 1$ (which is not facet-defining for $\PmatchOne$, see Proposition~\ref{TheoremFacetBounds}).
The reason turns out to be that all facets of the projection of $\PmatchOne$ onto the $x$-variables
are projections of facets of $\PmatchOne$.
The arguments for the upward-monotonization are as follows:

Let $P \subseteq \R^{n+1}$ be a polytope.
After normalizing, we can write its outer description as 
$$P = \setdef{ (x,y) \in \R^n \times \R }[ Ax \leq b,~ Bx + \onevec y \leq c,~ Cx - \onevec y \leq d ].$$
We assume that $P$'s projection onto the $x$-variables is the polytope defined by $Ax \leq b$ only.
$P$'s upward-monotonization can be obtained by projecting the extended formulation
$$\setdef{ (x,y,y') \in \R^n \times \R \times \R }[ (x,y) \in P,~ y - y' \leq 0 ]$$
onto the $(x,y')$-variables.
Fourier-Motzkin elimination yields that this projection is described
by $Ax \leq b$, $Cx - \onevec y' \leq d$ and inequalities that
are the sum of an inequality from $Bx + \onevec y \leq c$ and
an inequality from $Cx - \onevec y \leq d$. 
Since the resulting inequalities are already valid for $P$'s projection onto the $x$-variables,
these are already present in $Ax \leq b$.

\paragraph{Proof technique.}
The technique we described and applied in Section~\ref{SectionTechnique} is quite special and does not work for arbitrary polytopes.
In fact it heavily depends on the fact that $P$ is highly related to $Q$, e.g., a subpolytope.
Clearly, the more complicated the modifications are, the more involved the proof will probably be.
Thus, on the one hand we believe that the applicability of the technique is quite limited.
On the other hand, it does not require LP duality, and hence it could be useful when duality-based methods become unattractive because of many inequality classes.
In such a case, a duality-based approach would involve several sets of dual multipliers, which may complicate formulas.
In contrast to this, the proposed technique may only have to consider each class of inequalities separately. This may have the advantage of being simpler and the disadvantage of producing longer proofs.

\paragraph{Future directions.}
There are several directions into which future research may lead.
On the modeling level, a generalization of Theorem~\ref{TheoremComplete} (or even Theorem~\ref{TheoremCapacitatedbMatching}) to more than one quadratic term or to a single cubic term is conceivable.
Such results were obtained for matroids, see~\cite{FischerFM18}.
On the problem level, since bipartite matching is a special case of matroid intersection,
one may address the question of a complete description of the corresponding polytope for the intersection of two arbitrary matroids
(together with a quadratic term).

\bigskip

\textbf{Acknowledgements.}
The author is grateful to Mirjam Friesen and Volker Kaibel for valueable discussions,
and to the referee whose comments lead to investigation of $b$-matchings and to significant improvements in the presentation of the material.

\bibliographystyle{plain}
\bibliography{references}

\begin{thebibliography}{10}

\bibitem{Birkhoff46}
Garrett Birkhoff.
\newblock Tres observaciones sobre el algebra lineal.
\newblock {\em Revista Facultad de Ciencias Exactas, Puras y Aplicadas
  Universidad Nacional de Tucuman, Serie A (Matematicas y Fisica Teorica)},
  5:147--151, 1946.

\bibitem{BuchheimK13}
Christoph Buchheim and Laura Klein.
\newblock {The Spanning Tree Problem with One Quadratic Term}.
\newblock In Kamiel Cornelissen, Ruben Hoeksma, Johann Hurink, and Bodo
  Manthey, editors, {\em {12th Cologne-Twente Workshop on Graphs and
  Combinatorial Optimization--CTW 2013}}, pages 31--34, 2013.

\bibitem{BuchheimK14}
Christoph Buchheim and Laura Klein.
\newblock {Combinatorial optimization with one quadratic term: Spanning trees
  and forests}.
\newblock {\em Discrete Applied Mathematics}, 177(0):34--52, 2014.

\bibitem{Edmonds65b}
Jack Edmonds.
\newblock {Maximum Matching and a Polyhedron with 0,1-Vertices}.
\newblock {\em Journal of Research of the National Bureau of Standards B},
  69:125--130, 1965.

\bibitem{Edmonds65a}
Jack Edmonds.
\newblock {Paths, trees, and flowers}.
\newblock {\em Canad. J. Math.}, 17:449--467, 1965.

\bibitem{Fischer13}
Anja Fischer.
\newblock {\em A Polyhedral Study of Quadratic Traveling Salesman Problems}.
\newblock PhD thesis, Chemnitz University of Technology, 2013.

\bibitem{FischerF13}
Anja Fischer and Frank Fischer.
\newblock {Complete description for the spanning tree problem with one
  linearised quadratic term}.
\newblock {\em Operations Research Letters}, 41(6):701--705, 2013.

\bibitem{FischerFJKMG14}
Anja Fischer, Frank Fischer, Gerold Jäger, Jens Keilwagen, Paul Molitor, and
  Ivo Grosse.
\newblock Exact algorithms and heuristics for the quadratic traveling salesman
  problem with an application in bioinformatics.
\newblock {\em Discrete Applied Mathematics}, 166:97 -- 114, 2014.

\bibitem{FischerFM18}
Anja Fischer, Frank Fischer, and S.~Thomas McCormick.
\newblock Matroid optimisation problems with nested non-linear monomials in the
  objective function.
\newblock {\em Mathematical Programming}, 169(2):417--446, Jun 2018.

\bibitem{GalbiatiGM14}
Giulia Galbiati, Stefano Gualandi, and Francesco Maffioli.
\newblock On minimum reload cost cycle cover.
\newblock {\em Discrete Applied Mathematics}, 164:112 -- 120, 2014.
\newblock Combinatorial Optimization.

\bibitem{GroetschelLS81}
Martin Gr{\"o}tschel, L{\'a}szl{\'o} Lov{\'a}sz, and Alexander Schrijver.
\newblock The ellipsoid method and its consequences in combinatorial
  optimization.
\newblock {\em Combinatorica}, 1(2):169--197, 1981.

\bibitem{HuppKL15}
Lena Hupp, Laura Klein, and Frauke Liers.
\newblock An exact solution method for quadratic matching: The
  one-quadratic-term technique and generalisations.
\newblock {\em Discrete Optimization}, 18:193--216, 2015.

\bibitem{JuengerK96}
Michael J{{\"u}}nger and Volker Kaibel.
\newblock {A Basic Study of the QAP-Polytope}.
\newblock Technical Report 96.215, Institut für Informatik, Universität zu
  Köln, 1996.

\bibitem{Kaibel97}
Volker Kaibel.
\newblock {\em {Polyhedral Combinatorics of the Quadratic Assignment Problem}}.
\newblock Doctoral dissertation, Universit{\"a}t zu K{\"o}ln, 1997.

\bibitem{Kaibel98}
Volker Kaibel.
\newblock {Polyhedral Combinatorics of QAPs with Less Objects than Locations}.
\newblock In Robert~E. Bixby, Andrew~E. Boyd, and Roger~Z. Rios-Mercado,
  editors, {\em Proceedings of the 6th International IPCO Conference, Houston,
  Texas}, volume 1412 of {\em Lecture Notes in Computer Science}, pages
  409--422. Springer, 1998.

\bibitem{KarpP80}
Richard~Manning Karp and Christos~Harilaos Papadimitriou.
\newblock On linear characterizations of combinatorial optimization problems.
\newblock In {\em Foundations of Computer Science, 1980., 21st Annual Symposium
  on}, pages 1--9. IEEE, 1980.

\bibitem{Klein14}
Laura Klein.
\newblock {\em {Combinatorial Optimization with One Quadratic Term}}.
\newblock PhD thesis, TU Dortmund, 2014.

\bibitem{LoiolaABHQ07}
Eliane~Maria Loiola, Nair Maria~Maia de~Abreu, Paulo~Oswaldo Boaventura-Netto,
  Peter Hahn, and Tania Querido.
\newblock A survey for the quadratic assignment problem.
\newblock {\em European Journal of Operational Research}, 176(2):657--690,
  2007.

\bibitem{PadbergR81}
Manfred~Wilhelm Padberg and Mendu~Rammohan Rao.
\newblock {\em The Russian method for linear inequalities III: Bounded integer
  programming}.
\newblock PhD thesis, INRIA, 1981.

\bibitem{PadbergR82}
Manfred~Wilhelm Padberg and Mendu~Rammohan Rao.
\newblock Odd minimum cut-sets and b-matchings.
\newblock {\em Mathematics of Operations Research}, 7(1):67--80, 1982.

\bibitem{PadbergR96}
Manfred~Wilhelm Padberg and Minendra~P. Rijal.
\newblock {\em {Quadratic Assignment Polytopes}}, pages 151--166.
\newblock Springer US, Boston, MA, 1996.

\bibitem{SahniG76}
Sartaj Sahni and Teofilo Gonzalez.
\newblock {P-Complete Approximation Problems}.
\newblock {\em J. ACM}, 23(3):555--565, 1976.

\bibitem{Schrijver83}
Alexander Schrijver.
\newblock {Short Proofs on the Matching Polyhedron}.
\newblock {\em Journal of Combinatorial Theory, Series B}, 34(1):104--108,
  1983.

\bibitem{Schrijver86}
Alexander Schrijver.
\newblock {\em {Theory of Linear and Integer Programming}}.
\newblock John Wiley \& Sons, Inc., New York, NY, USA, 1986.

\bibitem{Schrijver03}
Alexander Schrijver.
\newblock {\em {Combinatorial Optimization -- Polyhedra and Efficiency}}.
\newblock Springer, 2003.

\bibitem{Tutte54}
William~Thomas Tutte.
\newblock A short proof of the factor theorem for finite graphs.
\newblock {\em Canadian Journal of Mathematics}, 6(1954):347--352, 1954.

\end{thebibliography}

\end{document}